   \numberwithin{equation}{section}
\newtheorem{thm}{Theorem}[section]
\newtheorem{cor}[thm]{Corollary}
\newtheorem{lem}[thm]{Lemma}
\newtheorem{defn}[thm]{Definition}
\begin{document}
\begin{frontmatter}
 \author{Jian Wang}
\author{Yong Wang\corref{cor2}}
\ead{wangy581@nenu.edu.cn}
\cortext[cor2]{Corresponding author}

\author{ ChunLing Yang\corref{cor}}

\address{School of Mathematics and Statistics, Northeast Normal University,
Changchun, 130024, P.R.China}
\title{Dirac Operators with Torsion and the Noncommutative  \\ Residue for Manifolds with  Boundary}
\begin{abstract}
In this paper, we get the Kastler-Kalau-Walze theorem associated to Dirac operators with torsion on compact manifolds with  boundary.
We give two kinds of operator-theoretic explanations of the gravitational action in the case
of 4-dimensional compact manifolds with flat boundary. Furthermore, we get
the Kastler-Kalau-Walze type theorem for four dimensional complex manifolds associated with nonminimal operators.
\end{abstract}
\begin{keyword}
 Dirac Operators with torsion; noncommutative residue;  orthogonal
connection with torsion; gravitational action for manifolds with boundary.
\end{keyword}
\end{frontmatter}
\section{Introduction}
\label{1}
The Dirac operators evolved into an important tool of modern mathematics, occurring for example in index theory, gauge theory, geometric
quantization, etc. Recently, Dirac operators have assumed a significant place in Connes' noncommutative geometry in \cite{Co1} as the main
ingredient in the definition of a K-cycle. Thus disguised, Dirac operators re-enter modern physics, since non-commutative geometry
can be used, e.g. to derive the action of the Standard Model of elementary particles, as shown in \cite{Co3,FGV}. The noncommutative
residue found in \cite{Gu} and \cite{Wo} plays a prominent role in noncommutative geometry.

In \cite{Co1}, Connes used the noncommutative residue to derive a conformal 4-dimensional Polyakov action analogy.
Further, Connes made a challenging observation that the noncommutative residue of the square of the inverse of the
Dirac opertor was proportion to the Einstein-Hilbert action in \cite{Co2}, which we call the Kastler-Kalau-Walze theorem.
In \cite{Ka}, kastelr gave a brute-force proof of this theorem. In \cite{KW}, Kalau and Walze proved this theorem in the
normal coordinates system simultaneously. And then, Ackermann gave a note on a new proof of this theorem by means of the heat kernel expansion in \cite{Ac}.
 For $3,4$-dimensional spin manifolds with boundary, Wang proved a
 Kastler-Kalau-Walze type theorem for the Dirac operator and the signature operator in \cite{Wa3}. In \cite{Wa4},  Wang computed the lower
dimensional volume ${\rm Vol}^{(2,2)}$ for $5$-dimensional and $6$-dimensional spin manifolds with boundary and  also got the
Kastler-Kalau-Walze type theorem in this case. We proved the Kastler-Kalau-Walze type theorems for foliations with or without boundary
associated with sub-Dirac operators for foliations in \cite{WW1}.
In \cite{AT}, Ackermann and Tolksdorf proved a generalized version of the well-known Lichnerowicz formula for the square of the
most general Dirac operator with torsion  $D_{T}$ on an even-dimensional spin manifold associated to a metric connection with torsion.
 Recently, Pf$\ddot{a}$ffle and Stephan considered compact Riemannian spin manifolds without boundary equipped with orthogonal connections,
and investigated the induced Dirac operators in \cite{PS}. In \cite{PS1}, Pf$\ddot{a}$ffle and Stephan considered
orthogonal connections with arbitrary torsion on compact Riemannian manifolds, and for the induced
Dirac operators, twisted Dirac operators and Dirac operators of Chamseddine-Connes type they computed the spectral
action.

The purpose of this paper is to generalized the results in \cite{Wa3}, \cite{PS} and get a  Kastler-Kalau-Walze type theorems
associated with Dirac operators with torsion on compact manifolds with  boundary.
We  derive the gravitational action on boundary by the
noncommutative residue  associated with Dirac operators with torsion.
 For lower dimensional compact Riemannian manifolds with  boundary and complex manifolds with boundary, we compute the lower
dimensional  volume $\widetilde{{\rm Wres}}[\pi^+(D^{*}_{T})^{-p_{1}}\circ\pi^+D_{T}^{-p_{2}}]$,
and we  get the Kastler-Kalau-Walze theorem  for  lower dimensional manifolds with  boundary.
On the other hand, a special case of Dirac operators with torsion is the Dolbeault operator for complex manifolds.
In \cite{WW}, we considered  the nonminimal operators as generalization of De-Rham Hodge operators and got
 the Kastler-Kalau-Walze type theorems for  nonminimal operators. In this paper, we consider the complex analogy of
 nonminimal operators and prove a Kastler-Kalau-Walze type theorems  for complex nonminimal operators.

 This paper is organized as follows: In Section 2, we define lower dimensional volumes of spin manifolds with torsion.
 In Section 3, for $4$-dimensional spin manifolds with boundary and  the associated Dirac operators with torsion $D^{*}_{T}, D_{T}$,
 we compute the lower dimensional volume ${\rm Vol}^{(1,1)}_4$ and get a Kastler-Kalau-Walze type theorem in this case. In
Section 4, two kinds of operator theoretic explanations of the gravitational action for boundary in the case of 4-dimensional manifolds with
 boundary will be given.
 In Section 5 and Section 6, we get the Kastler-Kalau-Walze type theorem for $6$-dimensional spin manifolds with boundary
associated to $(D_{T}^{*})^{2}$ , $D_{T}^{2}$ with torsion and $4$-dimensional spin manifolds with boundary
for the operator $P^{+}D_{T}^{*}D_{T}$.
In Section 7, we investigate  4-dimensional complex manifolds without boundary associated with complex nonminimal operators.

\section{Lower-Dimensional Volumes of Spin Manifolds with Torsion}

 In this section we consider an $n$-dimensional oriented Riemannian manifold $(M, g^{M})$ equipped
with some spin structure. The Levi-Civita connection
$\nabla: \Gamma(TM)\rightarrow \Gamma(T^{*}M\otimes TM)$ on $M$ induces a connection
$\nabla^{S}: \Gamma(S)\rightarrow \Gamma(T^{*}M\otimes S).$ By adding a additional torsion term $t\in\Omega^{1}(M,End TM)$ we
obtain a new covariant derivative
 \begin{equation}
\widetilde{\nabla}:=\nabla+t
\end{equation}
on the tangent bundle $TM$. Since $t$ is really a one-form on $M$ with values in the bundle of skew endomorphism $Sk(TM)$ in \cite{GHV},
$\nabla$ is in fact compatible with the Riemannian metric $g$ and therefore also induces a connection $\widetilde{\nabla}^{S}:=\nabla^{S}+T$
on the spinor bundle. Here $T\in\Omega^{1}(M, End S)$ denotes the `lifted' torsion term $t\in\Omega^{1}(M, End TM)$.

 Next, we will briefly discuss the construction of this connection. Again, we write $\tilde{\nabla}_{X}Y=\nabla_{X}Y+A(X,Y)$
 with the Levi-Civita connection $\nabla$.
For any $X \in T_{p}M$ the endomorphism $A(X,\cdot)$ is skew-adjoint and hence it is an element of
$\mathfrak{so}(T_{p}M) $, we can express it as
\begin{equation}
A(X,\cdot)=\sum_{i<j}\alpha_{ij}E_{i}\wedge E_{j}.
\end{equation}
 Here $E_{i}\wedge E_{j}$ is meant as the endomorphism of $T_{p}M$ defined by $E_{i}\wedge E_{j}$.
 For any $X \in T_{p}M$ one determines the coefficients in (2.2) by
\begin{equation}
\alpha_{ij}=\langle A(X,E_{i}), E_{j}\rangle=A_{XE_{i} E_{j}}.
\end{equation}
 Each  $E_{i}\wedge E_{j}$  lifts to $\frac{1}{2}E_{i}\cdot E_{j}$
in $spin(n)$, and the spinor connection induced by $\tilde{\nabla}$ is locally
given by
 \begin{equation}
\tilde{\nabla}_{X}\psi=\nabla_{X}\psi+\frac{1}{2}\sum_{i<j}\alpha_{ij}E_{i}\cdot E_{j}\psi=
\nabla_{X}\psi+\frac{1}{2}\sum_{i<j}A_{XE_{i} E_{j}}E_{i}\cdot E_{j}\psi.
\end{equation}
The connection given by (2.4) is compatible with the metric on spinors and with
Clifford multiplication. Then, the Dirac operator associated to the spinor connection from (2.4) is defined as
 \begin{eqnarray}
D_{T}\psi&=&\sum_{i=1}^{n}E_{i}\tilde{\nabla}_{E_{i}}\psi=D\psi+\frac{1}{2}\sum_{i=1}^{n}\sum_{j<k}A_{E_{i} E_{j}E_{k}}E_{i}\cdot E_{j}
\cdot E_{k}\psi\nonumber\\
&=&D\psi+\frac{1}{4}\sum_{i,j,k=1}^{n}A_{E_{i} E_{j}E_{k}}E_{i}\cdot E_{j}\cdot E_{k}\psi
\end{eqnarray}
where $D$ is the Dirac operator induced by the Levi-Civita connection and $``\cdot"$ is the
Clifford multiplication. As Clifford multiplication by any 3-form is self-adjoint we have
 \begin{equation}
D_{T}\psi=D\psi+\frac{3}{2}T\cdot\psi-\frac{n-1}{2}V\cdot\psi,~~D_{T}^{*}\psi=D\psi+\frac{3}{2}T\cdot\psi+\frac{n-1}{2}V\cdot\psi.
\end{equation}
Using the fact that the Clifford multiplication by the vector field $V$ is skew-adjoint,
the hermitian product on the spinor bundle one observes
that $D_{T}$ is symmetric with respect to the natural $L^{2}$-scalar product on spinors if and only
if the vectorial component of the torsion vanishes, $V\equiv 0$. Note that the Cartan type torsion $S$ does not contribute to the Dirac operator
$D_{T}$. As $D_{T}^{*}D_{T}$ is a generalized Laplacian,  one has  the following Lichnerowicz formula.

\begin{thm}\cite{PS}
For the Dirac operator $D_{T}$  associated to the orthogonal connection $\tilde{\nabla}$, we have
 \begin{eqnarray}
D_{T}^{*}D_{T}\psi&=&\Delta\psi+\frac{1}{4}R^{g}\psi+\frac{3}{2}dT\cdot\psi-\frac{3}{4}\parallel T\parallel^{2}\psi\nonumber\\
          &&+\frac{n-1}{2}div^{g}(V)\psi+(\frac{n-1}{2})^{2}(2-n)|V|^{2}\psi\nonumber\\
          &&+3(n-1)(T\cdot V\cdot\psi+(V_{\rfloor}T)\cdot\psi),
\end{eqnarray}
for any spinor field $\psi$, where $\Delta$ is the Laplacian associated to the connection
 \begin{equation}
\tilde{\nabla}_{X}\psi=\nabla_{X}\psi+\frac{3}{2}(X_{\rfloor}T)\cdot\psi-\frac{n-1}{2}V\cdot X\cdot\psi-\frac{n-1}{2}\langle V, X\rangle\psi.
\end{equation}
\end{thm}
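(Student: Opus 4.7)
The plan is to derive the Lichnerowicz-type identity by expanding $D_T^*D_T$ using the decompositions in (2.6), applying the classical Lichnerowicz formula $D^2 = \nabla^*\nabla + \tfrac{1}{4}R^g$ to the leading term, and then absorbing all first-order derivative terms into the connection Laplacian $\Delta$ associated to the modified connection (2.8).

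First I would write
\[
D_T = D + \tfrac{3}{2}T\cdot - \tfrac{n-1}{2}V\cdot, \qquad D_T^* = D + \tfrac{3}{2}T\cdot + \tfrac{n-1}{2}V\cdot,
\]
and multiply these out to obtain
\[
D_T^*D_T = D^2 + \tfrac{3}{2}\bigl(D(T\cdot) + T\cdot D\bigr) - \tfrac{n-1}{2}\bigl(D(V\cdot) - V\cdot D\bigr) + \tfrac{9}{4}T\cdot T\cdot - \tfrac{(n-1)^2}{4}V\cdot V\cdot + \tfrac{3(n-1)}{2}\bigl(V\cdot T\cdot - T\cdot V\cdot\bigr).
\]
For the first term I use the classical Lichnerowicz formula to replace $D^2$ by $\nabla^*\nabla + \tfrac{1}{4}R^g$.

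Next I would handle the cross terms involving $D$ acting on $T\cdot\psi$ and $V\cdot\psi$ by the usual commutator identities for Clifford multiplication with differential forms: $D(T\cdot) + T\cdot D$ produces the exterior derivative $dT$ up to lower-order Clifford pieces (so contributing the $\tfrac{3}{2}dT\cdot$ term), while $D(V\cdot) - V\cdot D$ produces $\operatorname{div}^g(V)$ together with first-order pieces $2\nabla_V$ that must be pushed into the connection Laplacian. The purely algebraic terms $T\cdot T$, $V\cdot V$, and the mixed products $V\cdot T - T\cdot V$ are simplified via the Clifford relations: $V\cdot V = -|V|^2$, $T\cdot T$ yields $-\|T\|^2$ after using the self-adjointness of Clifford multiplication by 3-forms, and the anticommutator of a 1-form with a 3-form produces the interior product $V\rfloor T$ together with the wedge $V\cdot T + T\cdot V$.

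The decisive step is then to recognize that the remaining first-order differential terms (of the form $\nabla_X$ with $X$ built from $T$ and $V$) together with appropriate zeroth-order corrections combine exactly into $\widetilde{\nabla}^*\widetilde{\nabla}$, where $\widetilde{\nabla}$ is the connection (2.8). Concretely, given $\widetilde{\nabla}_X\psi = \nabla_X\psi + \tfrac{3}{2}(X\rfloor T)\cdot\psi - \tfrac{n-1}{2}V\cdot X\cdot\psi - \tfrac{n-1}{2}\langle V,X\rangle\psi$, I would compute $\widetilde{\nabla}^*\widetilde{\nabla}\psi$ in a local orthonormal frame and check that the extra terms produced by it match the first-order and scalar contributions that still remain after the expansion above. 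This is essentially a bookkeeping computation using $\sum_i E_i\cdot(E_i\rfloor T) = 3T\cdot$ and $\sum_i \langle V, E_i\rangle E_i = V$.

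The main obstacle is the Clifford algebra bookkeeping: keeping track of signs and of the difference between $V\cdot T + T\cdot V = 2 V\wedge T$ and $V\cdot T - T\cdot V = -2 V\rfloor T$ so that the residual mixed term lands as $3(n-1)(T\cdot V + V\rfloor T)$, and simultaneously matching the $\tfrac{(n-1)^2(2-n)}{4}|V|^2$ coefficient, which only emerges after recognizing that $\widetilde{\nabla}^*\widetilde{\nabla}$ itself contributes a piece proportional to $(n-1)|V|^2$ (from $V\cdot E_i\cdot V\cdot E_i$ type summations) that partially cancels the $-\tfrac{(n-1)^2}{4}|V|^2$ produced by $V\cdot V$. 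Once these two identifications are made, the formula assembles itself.
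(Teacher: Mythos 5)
The paper does not actually prove this statement: it is quoted verbatim from \cite{PS}, so the only comparison available is with the standard argument given there, and your outline does follow that route (expand $D_{T}^{*}D_{T}$ from (2.6), then identify everything that is not zeroth order with the Laplacian $\Delta=\widetilde{\nabla}^{*}\widetilde{\nabla}$ of the connection (2.8)). The strategy is correct; the problems are in the execution, and for a Lichnerowicz-type identity the execution is the proof. First, a coefficient slip: the cross terms in your expansion are $\frac{3}{2}T\cdot\bigl(-\frac{n-1}{2}V\cdot\bigr)$ and $\frac{n-1}{2}V\cdot\frac{3}{2}T\cdot$, so the mixed contribution is $\frac{3(n-1)}{4}(V\cdot T\cdot-T\cdot V\cdot)$, not $\frac{3(n-1)}{2}(V\cdot T\cdot-T\cdot V\cdot)$; since the whole theorem is a statement about coefficients, this must be tracked exactly, together with the further $V$--$T$ cross terms generated inside $\widetilde{\nabla}^{*}\widetilde{\nabla}$ by the simultaneous presence of $\frac{3}{2}(X\rfloor T)\cdot$ and $-\frac{n-1}{2}V\cdot X\cdot$ in (2.8), before one can land on $3(n-1)\bigl(T\cdot V\cdot+(V\rfloor T)\cdot\bigr)$.

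Second, and more seriously, the claim that ``$T\cdot T$ yields $-\|T\|^{2}$'' is false for a general $3$-form: the Clifford square $c(T)^{2}$ equals $-\|T\|^{2}$ only up to nonscalar components of degree $4$ and $6$, and these do not vanish. If you treat $c(T)^{2}$ as the scalar $-\|T\|^{2}$, your expansion produces $-\frac{9}{4}\|T\|^{2}$, whereas the theorem asserts $-\frac{3}{4}\|T\|^{2}$. The missing $+\frac{6}{4}\|T\|^{2}$, and the cancellation of the degree-$4$ and degree-$6$ pieces, come precisely from the quadratic torsion terms $\frac{9}{4}\sum_{i}c(E_{i}\rfloor T)^{2}$ (and the mixed $\sum_{i}c(E_{i}\rfloor T)c(E_{i})$-type sums) hidden inside $\widetilde{\nabla}^{*}\widetilde{\nabla}$; identities such as $\sum_{i}c(E_{i})c(E_{i}\rfloor T)=3\,c(T)$, which you quote, are the tools, but the cancellation has to be carried out, not asserted. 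The same remark applies to the disappearance of the $\delta T$ and $\sum_{i}(E_{i}\rfloor T)\cdot\nabla_{E_{i}}$ terms and to the $\frac{(n-1)^{2}}{4}(2-n)|V|^{2}$ coefficient, which you correctly identify as arising from $\sum_{i}c(V)c(E_{i})c(V)c(E_{i})=(2-n)|V|^{2}$-type sums but do not verify. Until this term-by-term comparison of $D_{T}^{*}D_{T}-\widetilde{\nabla}^{*}\widetilde{\nabla}$ is actually performed, the proposal is a correct plan rather than a proof.
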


To define lower dimensional volume $Vol_{n}^{p_{1},p_{2}}M:=\widetilde{Wres}[\pi^{+}(D_{T}^{*})^{-p_{1}} \circ\pi^{+}D_{T}^{-p_{2}}]$,
 some basic facts and formulae about Boutet de Monvel's calculus can be find in Sec.2 in \cite{Wa1}.
Let $M$ be an n-dimensional compact oriented manifold with boundary $\partial M$. We assume that the metric $g^{M}$ on $M$ has
the following form near the boundary
 \begin{equation}
 g^{M}=\frac{1}{h(x_{n})}g^{\partial M}+\texttt{d}x _{n}^{2} ,
\end{equation}
where $g^{\partial M}$ is the metric on $\partial M$. Let $U\subset
M$ be a collar neighborhood of $\partial M$ which is diffeomorphic $\partial M\times [0,1)$. By the definition of $C^{\infty}([0,1))$
and $h>0$, there exists $\tilde{h}\in C^{\infty}((-\varepsilon,1))$ such that $\tilde{h}|_{[0,1)}=h$ and $\tilde{h}>0$ for some
sufficiently small $\varepsilon>0$. Then there exists a metric $\hat{g}$ on $\hat{M}=M\bigcup_{\partial M}\partial M\times
(-\varepsilon,0]$ which has the form on $U\bigcup_{\partial M}\partial M\times (-\varepsilon,0 ]$
 \begin{equation}
\hat{g}=\frac{1}{\tilde{h}(x_{n})}g^{\partial M}+\texttt{d}x _{n}^{2} ,
\end{equation}
such that $\hat{g}|_{M}=g$.
We fix a metric $\hat{g}$ on the $\hat{M}$ such that $\hat{g}|_{M}=g$.
Note $D_{T}$ is the most general Dirac operator on the spinor bundle $S$ corresponding to a metric connection $\widetilde{\nabla}$ on $TM$.
Let $p_{1},p_{2}$ be nonnegative integers and $p_{1}+p_{2}\leq n$. From Sec 2.1 of \cite{Wa3}, we have
\begin{defn} Lower-dimensional volumes of spin manifolds with boundary with torsion are defined by
   \begin{equation}\label{}
   Vol_{n}^{\{p_{1},p_{2}\}}M:=\widetilde{Wres}[\pi^{+}(D_{T}^{*})^{-p_{1}} \circ\pi^{+}D_{T}^{-p_{2}}] .
\end{equation}
\end{defn}

  Denote by $\sigma_{l}(A)$ the $l$-order symbol of an operator A. An application of (2.1.4) in \cite{Wa1} shows that
\begin{equation}
\widetilde{Wres}[\pi^{+}(D_{T}^{*})^{p_{1}} \circ\pi^{+}D_{T}^{p_{2}}]=\int_{M}\int_{|\xi|=1}\texttt{trace}_{S(TM)}
  [\sigma_{-n}((D_{T}^{*})^{-p_{1}} \circ D_{T}^{-p_{2}})]\sigma(\xi)\texttt{d}x+\int_{\partial M}\Phi,
\end{equation}
where
 \begin{eqnarray}
\Phi&=&\int_{|\xi'|=1}\int_{-\infty}^{+\infty}\sum_{j,k=0}^{\infty}\sum \frac{(-i)^{|\alpha|+j+k+\ell}}{\alpha!(j+k+1)!}
trace_{S(TM)}[\partial_{x_{n}}^{j}\partial_{\xi'}^{\alpha}\partial_{\xi_{n}}^{k}\sigma_{r}^{+}((D_{T}^{*})^{-p_{1}})(x',0,\xi',\xi_{n})\nonumber\\
&&\times\partial_{x_{n}}^{\alpha}\partial_{\xi_{n}}^{j+1}\partial_{x_{n}}^{k}\sigma_{l}(D_{T}^{-p_{2}})(x',0,\xi',\xi_{n})]
\texttt{d}\xi_{n}\sigma(\xi')\texttt{d}x' ,
\end{eqnarray}
and the sum is taken over $r-k+|\alpha|+\ell-j-1=-n,r\leq-p_{1},\ell\leq-p_{2}$.

 \section{The Kastler-Kalau-Walze theorem for $4$-dimensional spin manifolds with boundary  about Dirac Operators
 with torsion $D_{T}^{*}$, $D_{T}$ }
In this section, we compute the lower dimensional volume for 4-dimension compact manifolds with boundary and get a
Kastler-Kalau-Walze type formula in this case.

From now on we always assume that $M$ carries a spin structure so that the spinor bundle is defined and so are
Dirac operator, twisted or generalised Dirac operators on $M$. Connes¡¯ spectral action principle in \cite{Co2}
states that one can extract any action functional of interest in physics from the spectral data of a Dirac operator.

In the following we consider various Dirac operators $D_{T}$ induced by orthogonal connections with general torsion
as in (2.4). We will consider $D_{T}^{*}D_{T}$(since $D_{T}$ is not selfadjoint in general) and the corresponding Seeley-deWitt
coefficients. The Chamseddine-Connes spectral action of $D_{T}^{*}D_{T}$ is
determined if one knows the second and the fourth Seeley-deWitt coefficient.
Since $[\sigma_{-n}((D_{T}^{*})^{-p_{1}} \circ D_{T}^{-p_{2}})]|_{M}$ has
the same expression as $[\sigma_{-n}((D_{T}^{*})^{-p_{1}} \circ D_{T}^{-p_{2}})]|_{M}$ in the case of manifolds without boundary,
so locally we can use the
computations Proposition 3.1 in \cite{PS1} to compute the first term.

\begin{thm}\cite{PS1}
 Let M be a 4-dimensional compact manifold without boundary and $\tilde{\nabla}$ be an orthogonal
connection with torsion. Then we get the volumes  associated to $D_{T}^{*}D_{T}$ on compact manifolds without boundary
 \begin{equation}
Wres((D_{T}^{*}D_{T})^{-1})
  =-\frac{1}{48\pi^{2}}\int_{M}\tilde{R}(x)dx,
\end{equation}
where $\tilde{R}=R+18div(V)-54|V|^{2}-9\parallel T\parallel^{2}$ and $\int_{M}div(V) dVol_{M}=-\int_{\partial_{M}}g(n,V)dVol_{\partial_{M}}$.
\end{thm}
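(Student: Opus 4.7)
The plan is to reduce the computation to the standard Wodzicki-residue identity for the inverse of a Laplace-type operator and then evaluate the pointwise spinor trace of the resulting endomorphism. The Lichnerowicz identity (2.7) of Theorem 2.1 is the main input; everything downstream is routine symbol calculus plus Clifford-algebra bookkeeping.

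First I would use (2.7) to exhibit $P := D_{T}^{*}D_{T}$ as a generalized Laplacian $\Delta + E$ on the spinor bundle $S$, with $E \in \Gamma(\mathrm{End}\,S)$ read off term by term from the right-hand side of (2.7). Because $P$ has scalar principal symbol $|\xi|^{2}I_{S}$, its parametrix is constructed by the standard recursion and a general Gilkey-type identity then gives, on a closed $4$-manifold,
$$\mathrm{Wres}(P^{-1})=\frac{1}{16\pi^{2}}\int_{M}\mathrm{tr}_{S}\!\left(\tfrac{R^{g}}{6}I_{S}-E\right)dx,$$
which is the noncommutative-residue incarnation of the Seeley--DeWitt coefficient $a_{2}(x,P)$. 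This reduction step is independent of torsion and of the particular structure of $D_{T}$.

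Next I would carry out the pointwise trace on the rank-$4$ spinor bundle with $n=4$. The scalar-valued parts of $E$ in (2.7) each pick up a factor $\mathrm{tr}_{S}I_{S}=4$, contributing $R^{g}-3\|T\|^{2}+6\,div^{g}(V)-18|V|^{2}$. The Clifford-multiplication pieces $\tfrac{3}{2}\,dT\cdot$, $9\,T\cdot V\cdot$ and $9\,(V_{\rfloor}T)\cdot$ should all have vanishing spinor trace, because the pointwise trace of Clifford multiplication by any form with no scalar (degree-zero) component is zero: $dT$ is a pure three-form, and for the cubic combination $T\cdot V+V_{\rfloor}T$ a short computation with $e_{i}e_{j}+e_{j}e_{i}=-2\delta_{ij}$ shows it decomposes into Clifford components of degrees $2$ and $4$ only. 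Combining these, I expect
$$\tfrac{4R^{g}}{6}-\mathrm{tr}_{S}E=-\tfrac{1}{3}\bigl(R^{g}+18\,div^{g}(V)-54|V|^{2}-9\|T\|^{2}\bigr)=-\tfrac{\tilde R}{3},$$
so that $\mathrm{Wres}((D_{T}^{*}D_{T})^{-1})=-\tfrac{1}{48\pi^{2}}\int_{M}\tilde R\,dx$, as claimed. The divergence identity in the statement is Stokes' theorem applied to the one-form dual to $V$.

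The hardest part should be the Clifford-algebra bookkeeping for the cubic term: one has to check carefully that $T\cdot V+V_{\rfloor}T$ contributes no scalar piece that would survive $\mathrm{tr}_{S}$, and that the signs and the factor $(2-n)$ in the $|V|^{2}$ coefficient track correctly through (2.7). Once those are pinned down, the reduction to the $a_{2}$-identity and the scalar trace computation are both mechanical.
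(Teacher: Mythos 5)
Your proposal is correct and follows essentially the same route as the cited source \cite{PS1} (the paper itself only quotes this result without proof): write $D_T^*D_T=\Delta+E$ via the Lichnerowicz formula (2.7) and evaluate the second Seeley--DeWitt/Wodzicki density, the Clifford terms dropping out because the spinor trace of Clifford multiplication by any form of positive degree vanishes. One trivial slip: since $T$ is a $3$-form, $dT$ is a $4$-form rather than a $3$-form, but its trace still vanishes, so the final identity $-\tfrac{1}{48\pi^{2}}\int_M\tilde{R}\,dx$ is unaffected.
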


\begin{thm}\cite{PS1}
 Let M be a 6-dimensional compact manifold and $\tilde{\nabla}$ be an orthogonal
connection with torsion. Then we get the volume associated to $D_{T}^{*}D_{T}$ on compact manifolds without boundary
 \begin{equation}
Wres((D_{T}^{*}D_{T})^{-1})=\frac{11}{720}\mathcal{X}(M)-\frac{1}{360\pi^{2}}\int_{M}\parallel C^{g}\parallel^{2}dx
  -\frac{3}{32\pi^{2}}\int_{M}\Big(\parallel \delta T\parallel^{2}+\parallel d(V)\parallel^{2}\big)dx,
\end{equation}
where $C^{g}$ denote the Weyl curvature of the Levi-Civita connection  and
 $\mathcal{X}(M)$ denote the Euler characteristic of $M$.
\end{thm}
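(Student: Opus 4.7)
The plan is to recognize $\operatorname{Wres}((D_T^{*}D_T)^{-1})$ on a closed six-manifold as a universal multiple of the fourth Seeley--deWitt coefficient of the Laplace-type operator $D_T^{*}D_T$, and then to compute this coefficient via Gilkey's formula using the Lichnerowicz identity (2.7).

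First I would write $D_T^{*}D_T=\widetilde{\Delta}+\mathcal{E}$ from Theorem 2.1 with $n=6$, where $\widetilde{\Delta}$ is the connection Laplacian of (2.8) and
\begin{equation*}
\mathcal{E}=\tfrac{1}{4}R^{g}+\tfrac{3}{2}dT\cdot-\tfrac{3}{4}\|T\|^{2}+\tfrac{5}{2}\operatorname{div}^{g}(V)-25|V|^{2}+15\bigl(T\cdot V\cdot+(V_{\rfloor}T)\cdot\bigr)
\end{equation*}
is an explicit endomorphism of the spinor bundle $S$. On a closed $n$-dimensional manifold, the noncommutative residue of $P^{-1}$ for a Laplace-type operator is a fixed universal multiple of $\int_{M}\operatorname{tr}a_{n-2}(x,P)\,dx$, so for $n=6$ the quantity $\operatorname{Wres}((D_T^{*}D_T)^{-1})$ is proportional to $\int_{M}\operatorname{tr}_{S}a_{4}(x,D_T^{*}D_T)\,dx$, reducing the problem to the evaluation of this integrated trace.

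Next I would invoke Gilkey's universal formula for $a_{4}$, which expresses it as an explicit linear combination of $\Delta R^{g}$, $(R^{g})^{2}$, $|\mathrm{Ric}|^{2}$, $|\mathrm{Riem}|^{2}$, $R^{g}\mathcal{E}$, $\mathcal{E}^{2}$, $\Delta\mathcal{E}$ and $\widetilde{\Omega}_{ij}\widetilde{\Omega}^{ij}$, the squared curvature of $\widetilde{\nabla}^{S}$. The core computational step is then to compute the fibrewise traces $\operatorname{tr}_{S}\mathcal{E}$, $\operatorname{tr}_{S}\mathcal{E}^{2}$ and $\operatorname{tr}_{S}(\widetilde{\Omega}_{ij}\widetilde{\Omega}^{ij})$ by repeatedly applying the Clifford-trace identity $\operatorname{tr}_{S}(\omega\cdot\eta\cdot)=(-1)^{|\omega|}\langle\omega,\eta\rangle\operatorname{tr}_{S}(\operatorname{Id})$ on forms of complementary degree, and by expanding the curvature of $\widetilde{\nabla}^{S}$ in terms of $R^{g}$, $\nabla T$ and $\nabla V$ via (2.4). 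This produces a scalar polynomial in $R^{g}$, $\mathrm{Ric}$, $\mathrm{Riem}$, $\|T\|^{2}$, $\|dT\|^{2}$, $\|\nabla T\|^{2}$, $\|V\|^{2}$, $\|dV\|^{2}$ and $\|\nabla V\|^{2}$.

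The last step is a rearrangement. On the Riemannian side, I would apply the Weyl decomposition $|\mathrm{Riem}|^{2}=\|C^{g}\|^{2}+\tfrac{4}{n-2}|\mathrm{Ric}|^{2}-\tfrac{2}{(n-1)(n-2)}(R^{g})^{2}$ together with the six-dimensional Gauss--Bonnet--Chern integrand to collect the remaining $(R^{g})^{2}$ and $|\mathrm{Ric}|^{2}$ pieces into $\tfrac{11}{720}\mathcal{X}(M)$; the total derivatives $\Delta R^{g}$ and $\Delta\mathcal{E}$ integrate to zero because $M$ is closed. On the torsion side, the standard Weitzenb\"ock identities for the $3$-form $T$ and the $1$-form $V$ rewrite $\|\nabla T\|^{2}$ and $\|\nabla V\|^{2}$ as $\|dT\|^{2}+\|\delta T\|^{2}$ and $\|dV\|^{2}+\|\delta V\|^{2}$ up to explicit curvature corrections; matching these coefficients with those generated by $\operatorname{tr}_{S}\mathcal{E}^{2}$ and $\operatorname{tr}_{S}(\widetilde{\Omega}^{2})$, the $\|dT\|^{2}$ and $\|\delta V\|^{2}$ contributions cancel, leaving precisely the combination $\|\delta T\|^{2}+\|dV\|^{2}$ appearing in the statement. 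The main obstacle will be the bookkeeping: $\operatorname{tr}_{S}\mathcal{E}^{2}$ and $\operatorname{tr}_{S}(\widetilde{\Omega}^{2})$ generate many Clifford cross-terms mixing $R^{g}$, $T$, $V$ and their derivatives, and one must verify that after the two Weitzenb\"ock identities and the Gauss--Bonnet--Chern substitution everything recombines to exactly the stated coefficients $-1/360\pi^{2}$ and $-3/32\pi^{2}$. A useful consistency check is that the torsion-free specialization $T=0,\,V=0$ must recover the classical six-dimensional Kastler--Kalau--Walze formula.
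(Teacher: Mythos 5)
A preliminary remark on the comparison: the paper itself gives no proof of this statement — it is quoted directly from \cite{PS1} — so the only meaningful benchmark is the method of that reference. Your overall strategy coincides with it: write $D_T^{*}D_T=\widetilde{\Delta}+\mathcal{E}$ via the Lichnerowicz formula (2.7) (your $\mathcal{E}$ with $n=6$ is correct), identify $\mathrm{Wres}(P^{-1})$ on a closed $6$-manifold with a universal multiple of $\int_M\mathrm{tr}_S\,a_4(x,P)\,dx$, and evaluate $a_4$ by Gilkey's universal formula together with Clifford traces of $\mathcal{E}$, $\mathcal{E}^2$ and the twisting curvature. Up to that point the plan is sound and is essentially the computation carried out in \cite{PS1}.

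The genuine gap is in your final recombination step. You propose to use ``the six-dimensional Gauss--Bonnet--Chern integrand'' to absorb the leftover $(R^{g})^{2}$ and $|\mathrm{Ric}|^{2}$ terms into $\frac{11}{720}\mathcal{X}(M)$. In dimension $6$ the Gauss--Bonnet--Chern integrand (the Pfaffian) is \emph{cubic} in the curvature, while $a_4$ is only quadratic; hence $\int_M\big(R^{2}-4|\mathrm{Ric}|^{2}+|\mathrm{Riem}|^{2}\big)dx$ is not a topological invariant on a $6$-manifold, and no linear combination of the quadratic invariants produced by $a_4$ can be traded for $\mathcal{X}(M)$. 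The identity you need — quadratic curvature invariants recombining into an Euler-characteristic term plus $\|C^{g}\|^{2}$ — is special to dimension $4$, which is precisely where the coefficients $\frac{11}{720}$ and the Weyl-decomposition bookkeeping in \cite{PS1} originate. So, as written, your derivation cannot terminate in the displayed closed form for $6$-dimensional $M$: after the Weitzenb\"ock rearrangement of the torsion terms you would still be left with three independent integrals $\int R^{2}$, $\int|\mathrm{Ric}|^{2}$, $\int|\mathrm{Riem}|^{2}$ carrying their Gilkey coefficients. To obtain the statement exactly as displayed one can only do what the paper does, namely quote the $a_4$-formula of \cite{PS1}; a self-contained $6$-dimensional proof would have to either keep those three curvature integrals explicitly or justify (which is impossible in general) their conversion into $\mathcal{X}(M)$ and $\|C^{g}\|^{2}$. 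Your closing consistency check ($T=0$, $V=0$) is a good idea, but it would not detect this dimensional mismatch, since the same illegitimate substitution is what produces the $\mathcal{X}(M)$ term there as well.
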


So we only need to compute $\int_{\partial M}\Phi$. Let $n=4$, our computation extends to general n.
Let $$ F:L^2({\bf R}_t)\rightarrow L^2({\bf R}_v);~F(u)(v)=\int e^{-ivt}u(t)dt$$ denote the Fourier transformation and
$\Phi(\overline{{\bf R}^+}) =r^+\Phi({\bf R})$ (similarly define $\Phi(\overline{{\bf R}^-}$)), where $\Phi({\bf R})$
denotes the Schwartz space and
  \begin{equation}
r^{+}:C^\infty ({\bf R})\rightarrow C^\infty (\overline{{\bf R}^+});~ f\rightarrow f|\overline{{\bf R}^+};~
 \overline{{\bf R}^+}=\{x\geq0;x\in {\bf R}\}.
\end{equation}
We define $H^+=F(\Phi(\overline{{\bf R}^+}));~ H^-_0=F(\Phi(\overline{{\bf R}^-}))$ which are orthogonal to each other. We have the following
 property: $h\in H^+~(H^-_0)$ iff $h\in C^\infty({\bf R})$ which has an analytic extension to the lower (upper) complex
half-plane $\{{\rm Im}\xi<0\}~(\{{\rm Im}\xi>0\})$ such that for all nonnegative integer $l$,
 \begin{equation}
\frac{d^{l}h}{d\xi^l}(\xi)\sim\sum^{\infty}_{k=1}\frac{d^l}{d\xi^l}(\frac{c_k}{\xi^k})
\end{equation}
as $|\xi|\rightarrow +\infty,{\rm Im}\xi\leq0~({\rm Im}\xi\geq0)$.

 Let $H'$ be the space of all polynomials and $H^-=H^-_0\bigoplus H';~H=H^+\bigoplus H^-.$ Denote by $\pi^+~(\pi^-)$ respectively the
 projection on $H^+~(H^-)$. For calculations, we take $H=\widetilde H=\{$rational functions having no poles on the real axis$\}$ ($\tilde{H}$
 is a dense set in the topology of $H$). Then on $\tilde{H}$,
 \begin{equation}
\pi^+h(\xi_0)=\frac{1}{2\pi i}\lim_{u\rightarrow 0^{-}}\int_{\Gamma^+}\frac{h(\xi)}{\xi_0+iu-\xi}d\xi,
\end{equation}
where $\Gamma^+$ is a Jordan close curve included ${\rm Im}\xi>0$ surrounding all the singularities of $h$ in the upper half-plane and
$\xi_0\in {\bf R}$. Similarly, define $\pi^{'}$ on $\tilde{H}$,
 \begin{equation}
\pi'h=\frac{1}{2\pi}\int_{\Gamma^+}h(\xi)d\xi.
\end{equation}
So, $\pi'(H^-)=0$. For $h\in H\bigcap L^1(R)$, $\pi'h=\frac{1}{2\pi}\int_{R}h(v)dv$ and for $h\in H^+\bigcap L^1(R)$, $\pi'h=0$.
Denote by $\mathcal{B}$ Boutet de Monvel's algebra (for details, see \cite{Wa1} p.735), now we recall the main theorem in \cite{FGLS} p.29.

\begin{thm}\label{th:32}{\bf(Fedosov-Golse-Leichtnam-Schrohe)}
 Let $X$ and $\partial X$ be connected, ${\rm dim}X=n\geq3$,
 $A=\left(\begin{array}{lcr}\pi^+P+G &   K \\
T &  S    \end{array}\right)$ $\in \mathcal{B}$ , and denote by $p$, $b$ and $s$ the local symbols of $P,G$ and $S$ respectively.
 Define:
 \begin{eqnarray}
{\rm{\widetilde{Wres}}}(A)&=&\int_X\int_{\bf S}{\rm{tr}}_E\left[p_{-n}(x,\xi)\right]\sigma(\xi)dx \nonumber\\
&&+2\pi\int_ {\partial X}\int_{\bf S'}\left\{{\rm tr}_E\left[({\rm{tr}}b_{-n})(x',\xi')\right]+{\rm{tr}}
_F\left[s_{1-n}(x',\xi')\right]\right\}\sigma(\xi')dx',
\end{eqnarray}
Then~~ a) ${\rm \widetilde{Wres}}([A,B])=0 $, for any
$A,B\in\mathcal{B}$;~~ b) It is a unique continuous trace on
$\mathcal{B}/\mathcal{B}^{-\infty}$.
\end{thm}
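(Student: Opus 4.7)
The plan is to establish assertions (a) and (b) by exploiting the block structure of the Boutet de Monvel algebra $\mathcal{B}$ and then reducing to facts already known for the classical Wodzicki residue. A typical $A\in\mathcal{B}$ decomposes as an interior piece $\pi^{+}P$, a singular Green operator $G$, a Poisson operator $K$, a trace operator $T$, and a boundary pseudodifferential operator $S$. Any commutator $[A,B]$ then expands into a finite sum of commutators among these building blocks, and each contribution must be matched against one of the two integrals defining $\widetilde{{\rm Wres}}$.

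For the interior--interior commutator $[\pi^{+}P_{1},\pi^{+}P_{2}]$, the argument reduces to the classical fact that the symbol $\sigma_{-n}([P_{1},P_{2}])$ is a total divergence on the cosphere bundle of $X$, so the first integral in $\widetilde{{\rm Wres}}$ vanishes. The boundary--boundary commutator $[S_{1},S_{2}]$ is treated by the same argument one dimension lower, killing the $s_{1-n}$ contribution to the second integral. The mixed commutators (interior with Green, Green with Poisson or trace, and Green with Green) contribute only through the singular Green part of the product, and one verifies that ${\rm tr}(b_{-n})$ of such a commutator is again of divergence form on the cosphere bundle of $\partial X$, so that the boundary integral vanishes as well.

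For the uniqueness in (b), I would argue that any continuous trace $\tau$ on $\mathcal{B}/\mathcal{B}^{-\infty}$ is automatically determined by its values on the leading components of the symbol and therefore factors through the principal symbol map. A rigidity argument of Fedosov--H\"ormander type on the $\mathbb{Z}$-graded symbol algebra then shows that every trace on the interior pseudodifferential piece is a scalar multiple of the classical Wodzicki residue, and likewise on the boundary pseudodifferential piece. Matching the two scalars by compatibility with interior--boundary commutators forces $\tau$ to be a multiple of $\widetilde{{\rm Wres}}$.

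The main obstacle I expect is the careful bookkeeping of the boundary contributions coming from mixed commutators: because $\pi^{+}P$ is not itself a bona fide pseudodifferential operator on $X$, one must use the transmission property to control how the projection interacts with the singular Green symbol calculus, and then verify that every commutator of mixed type produces a boundary symbol of divergence form without spurious Poisson or trace residues. Once this cancellation is secured, property (a) follows, and the symbolic rigidity statement makes (b) a formal consequence.
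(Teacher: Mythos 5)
The paper does not prove this statement at all: it is quoted verbatim as the main theorem of Fedosov--Golse--Leichtnam--Schrohe (\cite{FGLS}, p.~29) and is used as a black box to justify the definition of $\widetilde{{\rm Wres}}$ and formula (2.12)--(2.13). So there is no internal proof to compare against; your proposal has to be judged as an independent argument, and as such it has two genuine gaps.

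First, the treatment of the interior--interior commutator is wrong in the setting of a manifold \emph{with boundary}. You argue that $\sigma_{-n}([P_{1},P_{2}])$ is a total divergence, hence the first integral vanishes; but on $X$ with $\partial X\neq\emptyset$ the divergence form only gives, via Stokes, a \emph{boundary} term, not zero. Moreover $[\pi^{+}P_{1},\pi^{+}P_{2}]\neq\pi^{+}[P_{1},P_{2}]$: the difference $\pi^{+}P_{1}\,\pi^{+}P_{2}-\pi^{+}(P_{1}P_{2})$ is a leftover singular Green operator whose residue density on $\partial X$ is generically nonzero. The whole point of the FGLS construction is that this Green-type residue exactly cancels the boundary term produced by the interior divergence; your sketch instead claims each contribution vanishes separately (``without spurious residues''), which is false and removes precisely the cancellation mechanism on which property (a) rests. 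Second, the uniqueness argument cannot go through a ``factorization through the principal symbol map'': the residue is read off from the homogeneous component of degree $-n$ (resp.\ $1-n$ for the boundary symbol), not from the principal symbol, and a continuous trace on $\mathcal{B}/\mathcal{B}^{-\infty}$ is not determined by principal symbols. The actual uniqueness proof proceeds by showing that the space of continuous traces on the quotient symbol algebra is one-dimensional, by exhibiting enough commutators to kill all other candidate functionals; your rigidity step would need to be replaced by such a computation, including the matching of the interior and boundary normalizations (the factor $2\pi$ in the statement), which your sketch only gestures at.
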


 Recall the definition of the Dirac operator D in \cite{Y}. Denote by $\sigma_{l}(A)$ the $l$-order symbol of
an operator A. In the local coordinates $\{x_{i}; 1\leq i\leq n\}$ and the fixed orthonormal frame
$\{\widetilde{e_{1}},\cdots, \widetilde{e_{n}}\}$, the connection matrix $(\omega_{s,t})$ is defined by
\begin{equation}
\widetilde{\nabla}(\widetilde{e_{1}},\cdots, \widetilde{e_{n}})=(\widetilde{e_{1}},\cdots, \widetilde{e_{n}})(\omega_{s,t}).
\end{equation}
The Dirac operator
\begin{equation}
D=\sum_{i=1}^{n}c(\widetilde{e_{i}})[\widetilde{e_{i}}-\frac{1}{4}\sum_{s,t}\omega_{s,t}(\widetilde{e_{i}})c(\widetilde{e_{s}})c(\widetilde{e_{t}})],
\end{equation}
where $c(\widetilde{e_{i}})$ denotes the Clifford action. Then
\begin{eqnarray}
D_{T}&=&\sum_{i=1}^{n}c(\widetilde{e_{i}})[\widetilde{e_{i}}-\frac{1}{4}\sum_{s,t}\omega_{s,t}(\widetilde{e_{i}})c(\widetilde{e_{s}})
c(\widetilde{e_{t}})]
  +\frac{1}{4}\sum_{i\neq s\neq t}A_{ist}
  c(\widetilde{e_{i}})c(\widetilde{e_{s}})c(\widetilde{e_{t}})  \nonumber\\
  &&+\frac{1}{4}\sum_{i, s, t}[-A_{iit}c(\widetilde{e_{t}})
  +A_{isi}c(\widetilde{e_{s}}) -A_{iss}c(\widetilde{e_{i}})+2A_{iii}c(\widetilde{e_{i}})], \\
D_{T}^{*}&=&\sum_{i=1}^{n}c(\widetilde{e_{i}})[\widetilde{e_{i}}-\frac{1}{4}\sum_{s,t}\omega_{s,t}(\widetilde{e_{i}})c(\widetilde{e_{s}})
c(\widetilde{e_{t}})]
  +\frac{1}{4}\sum_{i\neq s\neq t}A_{ist}
  c(\widetilde{e_{i}})c(\widetilde{e_{s}})c(\widetilde{e_{t}})  \nonumber\\
  &&-\frac{1}{4}\sum_{i, s, t}[-A_{iit}c(\widetilde{e_{t}})
  +A_{isi}c(\widetilde{e_{s}}) -A_{iss}c(\widetilde{e_{i}})+2A_{iii}c(\widetilde{e_{i}})],
  \end{eqnarray}
and
\begin{eqnarray}
\sigma_{1}(D_{T})&=&\sigma_{1}(D_{T}^{*})=\sqrt{-1}c(\xi);\\
\sigma_{0}(D_{T})&=&-\frac{1}{4}\sum_{i,s,t}\omega_{s,t}(\widetilde{e_{i}})c(\widetilde{e_{s}})c(\widetilde{e_{t}})
+\frac{1}{4}\sum_{i\neq s\neq t}A_{ist}
  c(\widetilde{e_{i}})c(\widetilde{e_{s}})c(\widetilde{e_{t}})  \nonumber\\
  &&+\frac{1}{4}\sum_{i, s, t}[-A_{iit}c(\widetilde{e_{t}})
  +A_{isi}c(\widetilde{e_{s}}) -A_{iss}c(\widetilde{e_{i}})+2A_{iii}c(\widetilde{e_{i}})], \\
\sigma_{0}(D_{T}^{*})&=&-\frac{1}{4}\sum_{i,s,t}\omega_{s,t}(\widetilde{e_{i}})c(\widetilde{e_{s}})c(\widetilde{e_{t}})
 +\frac{1}{4}\sum_{i\neq s\neq t}A_{ist}
  c(\widetilde{e_{i}})c(\widetilde{e_{s}})c(\widetilde{e_{t}})  \nonumber\\
  &&-\frac{1}{4}\sum_{i, s, t}[-A_{iit}c(\widetilde{e_{t}})
  +A_{isi}c(\widetilde{e_{s}}) -A_{iss}c(\widetilde{e_{i}})+2A_{iii}c(\widetilde{e_{i}})].
\end{eqnarray}
Hence by Lemma 2.1 in \cite{Wa3}, we have
 \begin{lem}\label{le:31}
The symbol of the Dirac operator
\begin{eqnarray}
\sigma_{-1}(D_{T}^{-1})&=&\sigma_{-1}((D_{T}^{*})^{-1})=\frac{\sqrt{-1}c(\xi)}{|\xi|^{2}}; \\
\sigma_{-2}(D_{T}^{-1})&=&\frac{c(\xi)\sigma_{0}(D_{T})c(\xi)}{|\xi|^{4}}+\frac{c(\xi)}{|\xi|^{6}}\sum_{j}c(\texttt{d}x_{j})
\Big[\partial_{x_{j}}(c(\xi))|\xi|^{2}-c(\xi)\partial_{x_{j}}(|\xi|^{2})\Big]; \\
\sigma_{-2}((D_{T}^{*})^{-1})&=&\frac{c(\xi)\sigma_{0}(D_{T}^{*})c(\xi)}{|\xi|^{4}}+\frac{c(\xi)}{|\xi|^{6}}\sum_{j}c(\texttt{d}x_{j})
\Big[\partial_{x_{j}}(c(\xi))|\xi|^{2}-c(\xi)\partial_{x_{j}}(|\xi|^{2})\Big].
\end{eqnarray}
\end{lem}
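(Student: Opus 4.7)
The plan is to read off both symbols as a direct application of the standard formula for inverting an elliptic first-order pseudodifferential operator, which is already packaged as Lemma 2.1 of \cite{Wa3}. The starting observation is that (3.10)--(3.14) show $\sigma_{1}(D_{T})=\sigma_{1}(D_{T}^{*})=\sqrt{-1}\,c(\xi)$, i.e.\ the torsion contributes only to the subprincipal part. Hence $D_{T}$, $D_{T}^{*}$ and the untwisted Dirac operator $D$ have the same principal symbol, and everything follows from the general symbol-of-the-inverse recursion applied with the appropriate $\sigma_{0}$.

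First I would derive $\sigma_{-1}$. The identity $D_{T}\circ D_{T}^{-1}=I$ forces, at leading order, $\sigma_{1}(D_{T})\,\sigma_{-1}(D_{T}^{-1})=1$. Since $c(\xi)^{2}=-|\xi|^{2}$, the matrix $\sqrt{-1}\,c(\xi)$ is invertible off the zero section with inverse $\sqrt{-1}\,c(\xi)/|\xi|^{2}$, giving
\begin{equation*}
\sigma_{-1}(D_{T}^{-1})=\frac{\sqrt{-1}\,c(\xi)}{|\xi|^{2}},
\end{equation*}
and the same calculation with $D_{T}^{*}$ yields the identical expression, proving the first claim.

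Next I would compute $\sigma_{-2}$ by pushing the composition formula
$\sigma(A\circ B)=\sum_{\alpha}\frac{(-i)^{|\alpha|}}{\alpha!}\partial_{\xi}^{\alpha}\sigma(A)\,\partial_{x}^{\alpha}\sigma(B)$
to the next order. Collecting terms of order $-1$ in $D_{T}\circ D_{T}^{-1}=I$ gives
\begin{equation*}
\sigma_{1}(D_{T})\,\sigma_{-2}(D_{T}^{-1})+\sigma_{0}(D_{T})\,\sigma_{-1}(D_{T}^{-1})-i\sum_{j}\partial_{\xi_{j}}\sigma_{1}(D_{T})\,\partial_{x_{j}}\sigma_{-1}(D_{T}^{-1})=0.
\end{equation*}
Using $\partial_{\xi_{j}}\sigma_{1}(D_{T})=\sqrt{-1}\,c(dx_{j})$ and left-multiplying by $\sqrt{-1}\,c(\xi)/|\xi|^{2}$, the first correction becomes $c(\xi)\sigma_{0}(D_{T})c(\xi)/|\xi|^{4}$ after using $c(\xi)^{2}=-|\xi|^{2}$, while the derivative term, after differentiating $\sqrt{-1}\,c(\xi)/|\xi|^{2}$ by the quotient rule and tracking the factors $\sqrt{-1}\cdot i\cdot \sqrt{-1}\cdot \sqrt{-1}=1$, yields exactly
\begin{equation*}
\frac{c(\xi)}{|\xi|^{6}}\sum_{j}c(dx_{j})\bigl[\partial_{x_{j}}(c(\xi))|\xi|^{2}-c(\xi)\partial_{x_{j}}(|\xi|^{2})\bigr].
\end{equation*}
This reproduces the stated formula for $\sigma_{-2}(D_{T}^{-1})$; repeating the argument verbatim with $D_{T}^{*}$ in place of $D_{T}$ (only $\sigma_{0}$ changes) gives the formula for $\sigma_{-2}((D_{T}^{*})^{-1})$.

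The main obstacle, and really the only thing one has to be careful about, is keeping the sign and the factors of $\sqrt{-1}$ straight when substituting the quotient-rule derivative of $\sqrt{-1}\,c(\xi)/|\xi|^{2}$ and when commuting $c(\xi)$ past the Clifford matrices; everything else is a mechanical consequence of the recursion. Conceptually there is no new input beyond the observation that $D_{T}$ and $D_{T}^{*}$ share the same principal symbol with $D$, so one could equally well cite Lemma 2.1 of \cite{Wa3} with the $\sigma_{0}$-data read off from (3.13) and (3.14).
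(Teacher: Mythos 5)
Your proposal is correct and is essentially the paper's own argument: the paper simply invokes Lemma 2.1 of \cite{Wa3} after noting that $\sigma_{1}(D_{T})=\sigma_{1}(D_{T}^{*})=\sqrt{-1}c(\xi)$ and reading off $\sigma_{0}(D_{T})$, $\sigma_{0}(D_{T}^{*})$, and your explicit symbol recursion (with the sign and $\sqrt{-1}$ bookkeeping you carry out) is exactly what that cited lemma encapsulates. Nothing further is needed.
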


Since $\Phi$ is a global form on $\partial M$, so for any fixed point $x_{0}\in\partial M$, we can choose the normal coordinates
$U$ of $x_{0}$ in $\partial M$(not in $M$) and compute $\Phi(x_{0})$ in the coordinates $\widetilde{U}=U\times [0,1)$ and the metric
$\frac{1}{h(x_{n})}g^{\partial M}+\texttt{d}x _{n}^{2}$. The dual metric of $g^{\partial M}$ on $\widetilde{U}$ is
$\frac{1}{\tilde{h}(x_{n})}g^{\partial M}+\texttt{d}x _{n}^{2}.$ Write
$g_{ij}^{M}=g^{M}(\frac{\partial}{\partial x_{i}},\frac{\partial}{\partial x_{j}})$;
$g^{ij}_{M}=g^{M}(d x_{i},dx_{j})$, then

\begin{equation}
[g_{i,j}^{M}]=
\begin{bmatrix}\frac{1}{h( x_{n})}[g_{i,j}^{\partial M}]&0\\0&1\end{bmatrix};\quad
[g^{i,j}_{M}]=\begin{bmatrix} h( x_{n})[g^{i,j}_{\partial M}]&0\\0&1\end{bmatrix},
\end{equation}
and
\begin{equation}
\partial_{x_{s}} g_{ij}^{\partial M}(x_{0})=0,\quad 1\leq i,j\leq n-1;\quad g_{i,j}^{M}(x_{0})=\delta_{ij}.
\end{equation}

Let $\{e_{1},\cdots, e_{n-1}\}$ be an orthonormal frame field in $U$ about $g^{\partial M}$ which is parallel along geodesics and
$e_{i}=\frac{\partial}{\partial x_{i}}(x_{0})$, then $\{\widetilde{e_{1}}=\sqrt{h(x_{n})}e_{1}, \cdots,
\widetilde{e_{n-1}}=\sqrt{h(x_{n})}e_{n-1},\widetilde{e_{n}}=dx_{n}\}$ is the orthonormal frame field in $\widetilde{U}$ about $g^{M}.$
Locally $S(TM)|\widetilde{U}\cong \widetilde{U}\times\wedge^{*}_{C}(\frac{n}{2}).$ Let $\{f_{1},\cdots,f_{n}\}$ be the orthonormal basis of
$\wedge^{*}_{C}(\frac{n}{2})$. Take a spin frame field $\sigma: \widetilde{U}\rightarrow Spin(M)$ such that
$\pi\sigma=\{\widetilde{e_{1}},\cdots, \widetilde{e_{n}}\}$ where $\pi: Spin(M)\rightarrow O(M)$ is a double covering, then
$\{[\sigma, f_{i}], 1\leq i\leq 4\}$ is an orthonormal frame of $S(TM)|_{\widetilde{U}}.$ In the following, since the global form $\Phi$
is independent of the choice of the local frame, so we can compute $\texttt{tr}_{S(TM)}$ in the frame $\{[\sigma, f_{i}], 1\leq i\leq 4\}$.
Let $\{E_{1},\cdots,E_{n}\}$ be the canonical basis of $R^{n}$ and
$c(E_{i})\in cl_{C}(n)\cong Hom(\wedge^{*}_{C}(\frac{n}{2}),\wedge^{*}_{C}(\frac{n}{2}))$ be the Clifford action. By \cite{Y}, then

\begin{equation}
c(\widetilde{e_{i}})=[(\sigma,c(E_{i}))]; \quad c(\widetilde{e_{i}})[(\sigma, f_{i})]=[\sigma,(c(E_{i}))f_{i}]; \quad
\frac{\partial}{\partial x_{i}}=[(\sigma,\frac{\partial}{\partial x_{i}})],
\end{equation}
then we have $\frac{\partial}{\partial x_{i}}c(\widetilde{e_{i}})=0$ in the above frame. By Lemma 2.2 in \cite{Wa3}, we have

\begin{lem}\label{le:32}
With the metric $\frac{1}{h(x_{n})}g^{\partial M}+\texttt{d}x _{n}^{2}$ on $M$ near the boundary
\begin{eqnarray}
&&  \partial x_{j}(|\xi|^{2}_{g^{M}})(x_{0})=0,
     \quad if~~ j<n ;\quad  =h'(0)|\xi'|^{2}_{g^{\partial M}},\quad  if~~ j=n.  \nonumber\\
&&  \partial x_{j}(c(\xi))(x_{0})=0,
     \quad if~~ j<n ;\quad =\partial x_{n}(c(\xi'))(x_{0}),\quad if~~ j=n.
\end{eqnarray}
where $\xi=\xi'+\xi_{n}\texttt{d}x_{n}$
\end{lem}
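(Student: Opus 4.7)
The plan is to prove both identities by direct local computation, exploiting the product form of the metric near the boundary together with the normal coordinate conditions collected in (2.20).

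First I would treat the scalar identity for $|\xi|^{2}_{g^M}$. From the block form $[g^{i,j}_M]$ in (2.19), one reads off
\begin{equation*}
|\xi|^{2}_{g^M}=\sum_{i,j<n}h(x_n)g^{\partial M,ij}\xi_i\xi_j+\xi_n^{2}=h(x_n)|\xi'|^{2}_{g^{\partial M}}+\xi_n^{2}.
\end{equation*}
Since $g^{\partial M}$ is a metric on $\partial M$ it is independent of $x_n$, and by (2.20) its derivatives in the tangential directions vanish at $x_0$. Differentiating termwise, for $j<n$ only $\partial_{x_j}g^{\partial M,ik}(x_0)$ enters and these vanish, giving $\partial_{x_j}(|\xi|^{2}_{g^M})(x_0)=0$. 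For $j=n$ only $h'(x_n)$ contributes (there is no tangential derivative of $g^{\partial M}$ in the $x_n$ direction), so $\partial_{x_n}(|\xi|^{2}_{g^M})(x_0)=h'(0)|\xi'|^{2}_{g^{\partial M}}$.

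Next I would handle $c(\xi)$. Using the orthonormal frame $\widetilde{e_i}=\sqrt{h(x_n)}\,e_i$ for $i<n$ and $\widetilde{e_n}=dx_n$ introduced just before the lemma, the dual coframe is $\widetilde{e^{\,i}}=\tfrac{1}{\sqrt{h(x_n)}}dx_i$ for $i<n$ and $\widetilde{e^{\,n}}=dx_n$. Therefore
\begin{equation*}
c(\xi)=\sum_{j<n}\sqrt{h(x_n)}\,\xi_j\,c(\widetilde{e^{\,j}})+\xi_n\,c(\widetilde{e^{\,n}}).
\end{equation*}
The key observation, already recorded in the passage preceding the lemma, is that in the frame $\{[\sigma,f_i]\}$ the Clifford symbols $c(\widetilde{e_i})$ have constant matrix entries, so $\partial_{x_j}c(\widetilde{e^{\,i}})=0$. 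Hence the only possible $x_j$-dependence of $c(\xi)$ is through the scalar factor $\sqrt{h(x_n)}$. This factor is independent of $x_j$ for $j<n$, which yields $\partial_{x_j}(c(\xi))(x_0)=0$. For $j=n$ only the tangential part $c(\xi')=\sum_{j<n}\sqrt{h(x_n)}\,\xi_j\,c(\widetilde{e^{\,j}})$ is differentiated, giving exactly $\partial_{x_n}(c(\xi'))(x_0)$, as claimed.

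No step should be a genuine obstacle; both computations are one-line once the metric is put in block form and the dual coframe is written down. The only bookkeeping point to be careful about is the distinction between the coordinate coframe $\{dx_i\}$ and the orthonormal coframe $\{\widetilde{e^{\,i}}\}$, and the fact that the two Clifford symbols $c(\widetilde{e_i})$ are $x$-constant in the spin frame while $c(dx_i)$ is not. Once these conventions are fixed, the lemma follows immediately from (2.19), (2.20) and the identity $\frac{\partial}{\partial x_i}c(\widetilde{e_i})=0$ established just above the statement.
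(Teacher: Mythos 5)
Your computation of the scalar identity is fine and is exactly the intended argument: with the block form of $[g^{i,j}_M]$ one has $|\xi|^2_{g^M}=h(x_n)|\xi'|^2_{g^{\partial M}}+\xi_n^2$, the tangential derivatives vanish at $x_0$ because $\partial_{x_s}g^{\partial M}_{ij}(x_0)=0$ (hence also for the inverse metric), and only $h'(0)$ survives in the normal direction. Note that the paper itself does not prove the lemma; it quotes Lemma 2.2 of [Wa3], whose proof is the same direct computation you set up.

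There is, however, a genuine gap in the Clifford half for $j<n$. Your identification $\widetilde{e}^{\,i}=\frac{1}{\sqrt{h(x_n)}}\,dx_i$ is not an identity on $\widetilde{U}$: it holds only where $e_i=\partial_{x_i}$, i.e. along the normal segment $\{x_0\}\times[0,1)$, because away from $x_0$ the geodesic-parallel orthonormal frame $e_i$ of $g^{\partial M}$ differs from the coordinate frame unless $g^{\partial M}$ is flat in these coordinates. This is harmless for the case $j=n$ (you only differentiate along the normal line, where the identity is valid and $c(\widetilde{e_i})$ is constant in the spin frame), but for $j<n$ the claim that "the only possible $x_j$-dependence of $c(\xi)$ is through $\sqrt{h(x_n)}$" assumes what must be proved. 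Writing $dx_k=\sqrt{h(x_n)}\sum_{i<n}a_i^k(x')\,\widetilde{e}^{\,i}$ for $k<n$, where $e_i=\sum_k a_i^k\partial_{x_k}$, you still need $\partial_{x_j}a_i^k(x_0)=0$. This is true, and it is precisely why the frame was chosen parallel along geodesics: along a radial geodesic of $g^{\partial M}$ one has $\nabla_{\dot{\gamma}}e_i=0$, so $\partial_v a_i^k(x_0)=-\Gamma^k_{v i}(x_0)=0$ because the Christoffel symbols vanish at the center of normal coordinates (the same fact encoded in $\partial_{x_s}g^{\partial M}_{ij}(x_0)=0$). Inserting this one observation, your derivative of $c(\xi)$ in a tangential direction reduces to derivatives of constants and of $a_i^k$ at $x_0$, and the stated vanishing follows; with that repair your proof coincides with the computation behind the cited Lemma 2.2 of [Wa3].
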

Then an application of Lemma 2.3 in \cite{Wa3} shows
\begin{lem}
\begin{eqnarray}
\sigma_{1}(D_{T})&=&\sigma_{1}(D_{T}^{*})=\sqrt{-1}c(\xi); \\
\sigma_{0}(D_{T})&=&-\frac{3}{4}h'(0)c(dx_n)
+\frac{1}{4}\sum_{i\neq s\neq t}A_{ist}
  c(\widetilde{e_{i}})c(\widetilde{e_{s}})c(\widetilde{e_{t}})  \nonumber\\
  &&+\frac{1}{4}\sum_{i, s, t}[-A_{iit}c(\widetilde{e_{t}})
  +A_{isi}c(\widetilde{e_{s}}) -A_{iss}c(\widetilde{e_{i}})+2A_{iii}c(\widetilde{e_{i}})], \\
\sigma_{0}(D_{T}^{*})&=&-\frac{3}{4}h'(0)c(dx_n)
 +\frac{1}{4}\sum_{i\neq s\neq t}A_{ist}
  c(\widetilde{e_{i}})c(\widetilde{e_{s}})c(\widetilde{e_{t}})  \nonumber\\
  &&-\frac{1}{4}\sum_{i, s, t}[-A_{iit}c(\widetilde{e_{t}})
  +A_{isi}c(\widetilde{e_{s}}) -A_{iss}c(\widetilde{e_{i}})+2A_{iii}c(\widetilde{e_{i}})].
\end{eqnarray}
\end{lem}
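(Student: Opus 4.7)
The plan is to read off this lemma from the general symbol formulas (3.11)--(3.16) by specializing the connection coefficients at the boundary point $x_0$ in the adapted frame, while observing that the torsion contribution is purely algebraic and is unaffected by this specialization.

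First I would note that the principal symbols are trivial: since $D_T - D$ and $D_T^* - D$ are zeroth-order multiplication operators built out of the torsion tensor $A_{ist}$ (see (2.5) and the displayed expressions for $D_T, D_T^*$ just before (3.11)), one has $\sigma_1(D_T)=\sigma_1(D_T^*)=\sigma_1(D)=\sqrt{-1}c(\xi)$, exactly as in (3.11). So the content of the lemma is really the computation of $\sigma_0(D_T)(x_0)$ and $\sigma_0(D_T^*)(x_0)$.

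Next I would split $\sigma_0(D_T)$ in (3.13) into two blocks: the Levi-Civita piece $-\tfrac14\sum_{i,s,t}\omega_{s,t}(\widetilde{e_i})c(\widetilde{e_s})c(\widetilde{e_t})$ and the two torsion blocks $\tfrac14\sum_{i\neq s\neq t}A_{ist}c(\widetilde{e_i})c(\widetilde{e_s})c(\widetilde{e_t})$ and $\tfrac14\sum_{i,s,t}[-A_{iit}c(\widetilde{e_t})+A_{isi}c(\widetilde{e_s})-A_{iss}c(\widetilde{e_i})+2A_{iii}c(\widetilde{e_i})]$. The two torsion blocks depend only on the values $A_{ist}(x_0)$ and on the Clifford actions $c(\widetilde{e_j})$, neither of which is affected by the choice of boundary-normal coordinates; so they appear unchanged in the final formulas. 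The same split is available for $\sigma_0(D_T^*)$ in (3.14), with the relative sign on the third block.

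The main step is therefore evaluating the Levi-Civita piece at $x_0$. Here I would invoke exactly Lemma 2.3 of \cite{Wa3}: in the adapted frame $\{\widetilde{e_1},\dots,\widetilde{e_{n-1}},\widetilde{e_n}=dx_n\}$ associated with the metric $\tfrac{1}{h(x_n)}g^{\partial M}+dx_n^2$, and using the normal-coordinate vanishing $\partial_{x_s}g_{ij}^{\partial M}(x_0)=0$ from (3.18), the only $\omega_{s,t}(\widetilde{e_i})(x_0)$ that survive come from the $x_n$-derivative of $h(x_n)^{-1}$, and they combine via the Clifford relations to give
\begin{equation}
-\tfrac14\sum_{i,s,t}\omega_{s,t}(\widetilde{e_i})c(\widetilde{e_s})c(\widetilde{e_t})\big|_{x_0}=-\tfrac34 h'(0)\,c(dx_n).
\end{equation}
Substituting this back into (3.13) and (3.14) (keeping the appropriate sign on the third block for $D_T^*$) produces the two displayed formulas.

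The potentially delicate point is the bookkeeping in that last step: verifying that the only nonvanishing $\omega_{s,t}(\widetilde{e_i})(x_0)$ are the ones involving the normal direction, and that the Clifford products collapse to a single $c(dx_n)$ with coefficient $-\tfrac34 h'(0)$. This is, however, exactly the computation already carried out for the plain Dirac operator in Lemma 2.3 of \cite{Wa3}; since the torsion modification of $D$ does not involve any connection coefficient, that lemma applies verbatim to the Levi-Civita block here, and the present lemma is obtained simply by adding the unchanged torsion terms.
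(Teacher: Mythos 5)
Your proposal is correct and matches the paper's own route: the paper obtains this lemma precisely by keeping the purely algebraic torsion blocks of $\sigma_{0}(D_{T})$ and $\sigma_{0}(D_{T}^{*})$ unchanged and replacing the Levi-Civita block $-\frac{1}{4}\sum_{i,s,t}\omega_{s,t}(\widetilde{e_{i}})c(\widetilde{e_{s}})c(\widetilde{e_{t}})$ at $x_{0}$ by $-\frac{3}{4}h'(0)c(dx_{n})$ via Lemma 2.3 of \cite{Wa3}, with $\sigma_{1}$ untouched since the torsion terms are zeroth order. No gaps to report.
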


Now we can compute $\Phi$ (see formula (2.13) for definition of $\Phi$), since the sum is taken over $-r-\ell+k+j+|\alpha|=3,
 \ r, \ell\leq-1$, then we have the following five cases:

\textbf{Case a(I)}: \ $r=-1, \ \ell=-1, \ k=j=0, \ |\alpha|=1$

From (2.13), we have
\begin{equation}
\text{ Case \ a(\text{I})}=-\int_{|\xi'|=1}\int_{-\infty}^{+\infty}\sum_{|\alpha|=1}\text{trace}
[\partial_{\xi'}^{\alpha}\pi_{\xi_{n}}^{+}\sigma_{-1}((D_{T}^{*})^{-1})
\partial_{x'}^{\alpha}\partial_{\xi_{n}}\sigma_{-1}(D_{T}^{-1})](x_{0})\texttt{d}\xi_{n}\sigma(\xi')\texttt{d}x' .
\end{equation}
By Lemma 3.5, for $j<n$
\begin{equation}
\partial_{x_i}\sigma_{-1}(D_{T}^{-1})(x_0)=\partial_{x_i}\left(\frac{\sqrt{-1}c(\xi)}{|\xi|^2}\right)(x_0)=
\frac{\sqrt{-1}\partial_{x_i}[c(\xi)](x_0)}{|\xi|^2}
-\frac{\sqrt{-1}c(\xi)\partial_{x_i}(|\xi|^2)(x_0)}{|\xi|^4}=0,
\end{equation}
so Case a(I) vanishes.

\textbf{Case a(II)}: \ $r=-1, \ \ell=-1, \ k=|\alpha|=0, \ j=1$

From (2.13), we have
\begin{equation}
\text{ Case \ a(\text{II})}=-\frac{1}{2}\int_{|\xi'|=1}\int_{-\infty}^{+\infty}\text{trace}[\partial_{x_{n}}\pi_{\xi_{n}}^{+}
\sigma_{-1}((D_{T}^{*})^{-1})\partial_{\xi_{n}}^{2}\sigma_{-1}(D_{T}^{-1})](x_{0})\texttt{d}\xi_{n}\sigma(\xi')\texttt{d}x'.
\end{equation}
Similarly to (2.2.18) in \cite{Wa3}, we have
\begin{equation}
\partial_{x_{n}}\pi_{\xi_{n}}^{+}\sigma_{-1}((D_{T}^{*})^{-1})(x_{0})|_{|\xi'|=1}=\frac{\partial_{x_{n}}[c(\xi')](x_{0})}{2(\xi_{n}-i)}
+\sqrt{-1}h'(0)[\frac{ic(\xi')}{4(\xi_{n}-i)}+\frac{c(\xi')+ic(\texttt{d}x_{n})}{4(\xi_{n}-i)^{2}}];
\end{equation}
\begin{equation}
\partial_{\xi_{n}}^{2}\sigma_{-1}(D_{T}^{-1})=\sqrt{-1}(-\frac{6\xi_{n}c(\texttt{d}x_{n})+2c(\xi')}{|\xi|^{4}}
+\frac{8\xi_{n}^{2}c(\xi)}{|\xi|^{6}}).
\end{equation}
By the relation of the Clifford action and $\texttt{tr}AB=\texttt{tr}BA$, then
\begin{eqnarray}
&&\texttt{tr}[c(\xi')c(\texttt{d}x_{n})]=0; \ \texttt{tr}[c(\texttt{d}x_{n})^{2}]=-4;\ \texttt{tr}[c(\xi')^{2}](x_{0})|_{|\xi'|=1}=-4;\nonumber\\
&&\texttt{tr}[\partial_{x_{n}}[c(\xi')]c(\texttt{d}x_{n})]=0; \ \texttt{tr}[\partial_{x_{n}}c(\xi')\times c(\xi')](x_{0})|_{|\xi'|=1}=-2h'(0).
\end{eqnarray}
 For more trace expansions, we can see \cite{GS}.
Hence we conclude that
\begin{equation}
\text{trace}[\partial_{\xi'}^{\alpha}\pi_{\xi_{n}}^{+}\sigma_{-1}((D_{T}^{*})^{-1})\partial_{x'}^{\alpha}\partial_{\xi_{n}}\sigma_{-1}
(D_{T}^{-1})](x_{0})
=\frac{2ih'(0)}{(\xi_{n}-i)^{2}(\xi_{n}+i)^{3}}.
\end{equation}
Therefore
\begin{equation}
\text{ Case \ a(\text{II}) }=-\frac{3}{8}\pi h'(0)\Omega_{3}\texttt{d}x',
\end{equation}
where $\Omega_{3}$ is the canonical volume of $S^{3}.$

\textbf{Case a(III)}: \ $r=-1, \ \ell=-1, \ j=|\alpha|=0, \ k=1$

From (2.13), we have
\begin{equation}
\text{ Case \ a(\text{III}) }=-\frac{1}{2}\int_{|\xi'|=1}\int_{-\infty}^{+\infty}\text{trace}[\partial_{\xi_{n}}\pi_{\xi_{n}}^{+}
\sigma_{-1}((D_{T}^{*})^{-1})\partial_{\xi_{n}}\partial_{x_{n}}\sigma_{-1}(D_{T}^{-1})](x_{0})\texttt{d}\xi_{n}\sigma(\xi')\texttt{d}x' .
 \end{equation}
Similarly to (2.2.27) in \cite{Wa3}, we have
\begin{equation}
\partial_{\xi_{n}}\pi_{\xi_{n}}^{+}\sigma_{-1}((D^{*})^{-1})(x_{0})|_{|\xi'|=1}=-\frac{c(\xi')+ic(\texttt{d}x_{n})}{2(\xi_{n}-i)^{2}},
\end{equation}
and
\begin{equation}
\partial_{\xi_{n}}\partial_{x_{n}}\sigma_{-1}(D^{-1})(x_{0})|_{|\xi'|=1}
=-\sqrt{-1}h'(0)\Big[\frac{c(\texttt{d}x_{n})}{|\xi|^{4}}-4\xi_{n}\frac{c(\xi')+\xi_{n}c(\texttt{d}x_{n})}{|\xi|^{6}}\Big]
-\frac{2\sqrt{-1}\xi_{n}\partial_{x_{n}}c(\xi')(x_{0})}{|\xi|^{4}}.
\end{equation}
Combining (3.34) and (3.35), we obtain
\begin{equation}
\text{trace}[\partial_{\xi_{n}}\pi_{\xi_{n}}^{+}\sigma_{-1}(D_{T}^{-1})\partial_{\xi_{n}}\partial_{x_{n}}\sigma_{-1}(D_{T}^{-1})](x_{0})
=\frac{2h'(0)(i-2\xi_{n}-i\xi_{n}^{2})}{(\xi_{n}-i)^{4}(\xi_{n}+i)^{3}}.
\end{equation}
Then
\begin{equation}
\text{ Case \ a(\text{III}) }=\frac{3}{8}\pi h'(0)\Omega_{3}\texttt{d}x',
\end{equation}
where $\Omega_{3}$ is the canonical volume of $S^{3}.$
Thus the sum of Case a(\texttt{II}) and Case a(\texttt{III}) is zero.

\textbf{Case b}: \ $r=-2, \ \ell=-1, \ k=j=|\alpha|=0$

By (2.13), we get
\begin{equation}
\text{ Case \ b}=-i\int_{|\xi'|=1}\int_{-\infty}^{+\infty}\text{trace}[\pi_{\xi_{n}}^{+}\sigma_{-2}((D_{T}^{*})^{-1})
                 \partial_{\xi_{n}}\sigma_{-1}(D_{T}^{-1})](x_{0})\texttt{d}\xi_{n}\sigma(\xi')\texttt{d}x'.
\end{equation}
Then an application of Lemma 3.4 and Lemma 3.5 shows
\begin{eqnarray}
\sigma_{-2}((D_{T}^{*})^{-1})(x_{0})&=&\frac{c(\xi)\sigma_{0}(D_{T}^{*})(x_{0})c(\xi)}{|\xi|^{4}}+\frac{c(\xi)}{|\xi|^{6}}\sum_{j}c(dx_{j})
                    \Big[\partial_{x_{j}}(c(\xi))|\xi|^{2}-c(\xi)\partial_{x_{j}}(|\xi|^{2})\Big](x_{0})\nonumber\\
                   &=&\frac{c(\xi)\sigma_{0}(D_{T}^{*})(x_{0})c(\xi)}{|\xi|^{4}}+
                  + \frac{c(\xi)}{|\xi|^{6}}c(dx_{n})\Big[\partial x_{n}(c(\xi'))(x_{0})-c(\xi)h'(0)|\xi'|^{2}_{g^{\partial M}}\Big].
\end{eqnarray}
Hence in this case,
\begin{equation}
\pi_{\xi_{n}}^{+}\sigma_{-2}((D^{*})^{-1})(x_{0}):=A_{1}+A_{2},
\end{equation}
where
\begin{eqnarray}
A_{1}&=&
-\frac{h'(0)}{2}\left[\frac{c(dx_n)}{4i(\xi_n-i)}+\frac{c(dx_n)-ic(\xi')}{8(\xi_n-i)^2}
+\frac{3\xi_n-7i}{8(\xi_n-i)^3}[ic(\xi')-c(dx_n)]\right]  \nonumber\\
&&+\frac{-1}{4(\xi_n-i)^2}\Big[(2+i\xi_n)c(\xi')\alpha_0c(\xi')+i\xi_nc(dx_n)\alpha_0c(dx_n)+(2+i\xi_n)c(\xi')c(dx_n)\partial_{x_n}c(\xi')\nonumber\\
&&~~~~+ic(dx_n)\alpha_0c(\xi')+ic(\xi')\alpha_0c(dx_n)-i\partial_{x_n}c(\xi')\Big];\\
A_{2}&=&\frac{-1}{4(\xi_n-i)^2}\Big[(2+i\xi_n)c(\xi')\beta_0c(\xi')+i\xi_nc(dx_n)\beta_0c(dx_n)+ic(dx_n)\beta_0c(\xi')\nonumber\\
 &&+ic(\xi')\beta_0c(dx_n)\Big]
\end{eqnarray}
and
\begin{eqnarray}
\alpha_0&=&-\frac{3}{4}h'(0)c(dx_n),\\
\beta_0&=&\frac{1}{4}\sum_{i\neq s\neq t}A_{ist}
  c(\widetilde{e_{i}})c(\widetilde{e_{s}})c(\widetilde{e_{t}}) -\frac{1}{4}\sum_{i, s, t}\Big[-A_{iit}c(\widetilde{e_{t}})
  +A_{isi}c(\widetilde{e_{s}}) -A_{iss}c(\widetilde{e_{i}})+2A_{iii}c(\widetilde{e_{i}})\Big].
\end{eqnarray}
On the other hand,
\begin{equation}
\partial_{\xi_{n}}\sigma_{-1}(D_{T}^{-1})=\frac{-2 i \xi_{n}c(\xi')}{(1+\xi_{n}^{2})^{2}}+\frac{i(1- \xi_{n}^{2})c(\texttt{d}x_{n})}
{(1+\xi_{n}^{2})^{2}}.
\end{equation}
From (2.2.39), (2.2.41) and (2.2.42) in \cite{Wa3}, we have
\begin{equation}
-i\int_{|\xi'|=1}\int_{-\infty}^{+\infty}\text{trace}[A_{1}\times
                 \partial_{\xi_{n}}\sigma_{-1}(D_{T}^{-1})](x_{0})\texttt{d}\xi_{n}\sigma(\xi')\texttt{d}x'
                 =\frac{9}{8}\pi h'(0)\Omega_{3}\texttt{d}x'.
\end{equation}
By the relation of the Clifford action and $\texttt{tr}AB=\texttt{tr}BA$, then we have the equalities
\begin{equation}
\texttt{tr}[c(\widetilde{e_{i}})c(\texttt{d}x_{n})]=0, i<n; \ \texttt{tr}[c(\widetilde{e_{i}})c(\texttt{d}x_{n})]=-4, i=n.
\end{equation}
Combining (3.42), (3.45) and (3.47), we obtain
\begin{equation}
\text{trace}[A_{2}\times\partial_{\xi_{n}}\sigma_{-1}(D_{T}^{-1})](x_{0})=\frac{i c_0}{2(\xi_{n}+i)^{2}(\xi_{n}-i)},
\end{equation}
where
\begin{equation}
c_0= -2\sum_{i}A_{iin}.
\end{equation}
Hence from (3.42) and (3.45), we have
\begin{eqnarray}
&&-i\int_{|\xi'|=1}\int_{-\infty}^{+\infty}\text{trace}[A_{2}\times
                 \partial_{\xi_{n}}\sigma_{-1}(D_{T}^{-1})](x_{0})\texttt{d}\xi_{n}\sigma(\xi')\texttt{d}x'\nonumber\\
&=&-i\Omega_{3}\int_{\Gamma^{+}}\frac{i c_0}{2(\xi_{n}+i)^{2}(\xi_{n}-i)}\texttt{d}\xi_{n}\texttt{d}x' \nonumber\\
&=&-i 2\pi i \Omega_{3}[\frac{i c_0}{2(\xi_{n}+i)^{2}}]^{(1)}|_{\xi_{n}=i}\texttt{d}x'  \nonumber\\
&=&\frac{1}{4}\pi c_0\Omega_{3}\texttt{d}x'.
\end{eqnarray}
Combining (3.46) and (3.50), we have
\begin{equation}
\texttt{case}\ b=\Big[\frac{9}{8}h'(0)-\frac{1}{2}\sum_{i}A_{iin}\Big]\pi \Omega_{3}\texttt{d}x'.
\end{equation}

\textbf{Case c}: \ $r=-1, \ \ell=-2, \ k=j=|\alpha|=0$

From (2.13), we have
\begin{equation}
 \text{case \ c}=-i\int_{|\xi'|=1}\int_{-\infty}^{+\infty}\text{trace}[\pi_{\xi_{n}}^{+}\sigma_{-1}((D_{T}^{*})^{-1})
 \partial_{\xi_{n}}\sigma_{-2}(D_{T}^{-1})](x_{0})\texttt{d}\xi_{n}\sigma(\xi')\texttt{d}x'  .
 \end{equation}
Then an application of Lemma 3.4 shows
\begin{equation}
\pi_{\xi_{n}}^{+}\sigma_{-1}((D_{T}^{*})^{-1})=\frac{c(\xi')+ic(\texttt{d}x_{n})}{2(\xi_{n}-i)}.
\end{equation}
By Lemma 3.5 and Lemma 3.6, we have
\begin{eqnarray}
\sigma_{-2}((D_{T})^{-1})(x_{0})&=&\frac{c(\xi)\sigma_{0}(D_{T})(x_{0})c(\xi)}{|\xi|^{4}}+\frac{c(\xi)}{|\xi|^{6}}\sum_{j}c(dx_{j})
                    \Big[\partial_{x_{j}}(c(\xi))|\xi|^{2}-c(\xi)\partial_{x_{j}}(|\xi|^{2})\Big](x_{0})\nonumber\\
                   &=&\frac{c(\xi)\sigma_{0}(D_{T})(x_{0})c(\xi)}{|\xi|^{4}}+
                  + \frac{c(\xi)}{|\xi|^{6}}c(dx_{n})\Big[\partial x_{n}(c(\xi'))(x_{0})-c(\xi)h_{x_{n}}'(0)|\xi'|^{2}_{g^{\partial M}}\Big].
\end{eqnarray}
Hence in this case,
\begin{equation}
 \partial_{\xi_{n}}\sigma_{-2}(D_{T}^{-1})(x_{0}):=B_{1}+B_{2},
\end{equation}
where
\begin{eqnarray}
B_{1}&=&\frac{1}{(1+\xi_n^2)^3}\Big[(2\xi_n-2\xi_n^3)c(dx_n)\alpha_0c(dx_n)+(1-3\xi_n^2)c(dx_n)\alpha_0c(\xi')\nonumber\\
    &&+ (1-3\xi_n^2)c(\xi')\alpha_0c(dx_n)-4\xi_nc(\xi')\alpha_0 c(\xi')+(3\xi_n^2-1)\partial_{x_n}c(\xi')
    -4\xi_nc(\xi')c(dx_n)\partial_{x_n}c(\xi')\nonumber\\
    &&+2h'(0)c(\xi')+2h'(0)\xi_nc(dx_n)\Big]
    +6\xi_nh'(0)\frac{c(\xi)c(dx_n)c(\xi)}{(1+\xi^2_n)^4};\\
B_{2}&=&\frac{1}{(1+\xi_n^2)^3}\Big[(2\xi_n-2\xi_n^3)c(dx_n)\beta_1c(dx_n)+(1-3\xi_n^2)c(dx_n)\beta_1c(\xi')\nonumber\\
    &&+(1-3\xi_n^2)c(\xi')\beta_1c(dx_n)-4\xi_nc(\xi')\beta_1c(\xi')\Big]
\end{eqnarray}
and
\begin{eqnarray}
\beta_1&=&\frac{1}{4}\sum_{i\neq s\neq t}A_{ist}
  c(\widetilde{e_{i}})c(\widetilde{e_{s}})c(\widetilde{e_{t}})+\frac{1}{4}\sum_{i, s, t}[-A_{iit}c(\widetilde{e_{t}})
  +A_{isi}c(\widetilde{e_{s}}) -A_{iss}c(\widetilde{e_{i}})+2A_{iii}c(\widetilde{e_{i}})].
\end{eqnarray}
Then similarly to computations of the (3.50), we have
\begin{equation}
-i\int_{|\xi'|=1}\int_{-\infty}^{+\infty}\text{trace}[\pi_{\xi_{n}}^{+}\sigma_{-1}((D_{T}^{*})^{-1})\times B_{1}]
               (x_{0})\texttt{d}\xi_{n}\sigma(\xi')\texttt{d}x'
                 =-\frac{9}{8}\pi h'(0)\Omega_{3}\texttt{d}x'.
\end{equation}
From (3.53) and (3.57) we obtain
\begin{equation}
\text{trace}[\pi_{\xi_{n}}^{+}\sigma_{-1}((D_{T}^{*})^{-1})\times B_{2}](x_{0})
=\frac{-i \tilde{c}_0}{(\xi_{n}+i)^{3}(\xi_{n}-i)},
\end{equation}
where
\begin{equation}
\tilde{c}_0= 2\sum_{i}A_{iin}.
\end{equation}
Then
\begin{eqnarray}
&&-i\int_{|\xi'|=1}\int_{-\infty}^{+\infty}\text{trace}[\pi_{\xi_{n}}^{+}\sigma_{-1}((D_{T}^{*})^{-1})\times B_{2}]
                  (x_{0})\texttt{d}\xi_{n}\sigma(\xi')\texttt{d}x'\nonumber\\
&=&-i\Omega_{3}\int_{\Gamma^{+}}\frac{i \tilde{c}_0}{(\xi_{n}+i)^{3}(\xi_{n}-i)}\texttt{d}\xi_{n}\texttt{d}x' \nonumber\\
&=& 2\pi i \Omega_{3}\frac{ \tilde{c}_0}{(\xi_{n}+i)^{3}}|_{\xi_{n}=i}\texttt{d}x'  \nonumber\\
&=&-\frac{1}{4}\pi \tilde{c}_0\Omega_{3}\texttt{d}x'.
\end{eqnarray}
Combining (3.59) and (3.62), we have
\begin{equation}
\texttt{case}\ c=\Big[-\frac{9}{8}h'(0)-\frac{1}{2}\sum_{i}A_{iin}\Big]\pi \Omega_{3}\texttt{d}x'.
\end{equation}
Now $\Phi$  is the sum of the \textbf{case (a, b, c)}, so
\begin{equation}
\sum \textbf{case a, b , c}=-\sum_{i}A_{iin}\pi \Omega_{3}\texttt{d}x'.
\end{equation}
Hence we conclude that

\begin{thm}
 Let M be a 4-dimensional compact manifold  with the boundary $\partial M$ and $\tilde{\nabla}$ be an orthogonal
connection with torsion. Then we get the volumes  associated to $D^{*}D$,
 \begin{equation}
Vol_{4}^{(1, 1)}=-\frac{1}{48\pi^{2}}\int_{M}\tilde{R}(x)dx-\int_{\partial_{M}}\sum_{i}A_{iin}\pi \Omega_{3}\texttt{d}x',
\end{equation}
where $\tilde{R}=R+18div(V)-54|V|^{2}-9\parallel T\parallel^{2}$ and $\int_{M}div(V) dVol_{M}=-\int_{\partial_{M}}g(n,V)dVol_{\partial_{M}}$.
\end{thm}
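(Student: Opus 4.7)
The plan is to apply the splitting formula (2.12) to $Vol_{4}^{(1,1)} = \widetilde{Wres}[\pi^{+}(D_{T}^{*})^{-1} \circ \pi^{+}D_{T}^{-1}]$, producing an interior piece $\int_{M}\int_{|\xi|=1}\mathrm{tr}[\sigma_{-4}((D_{T}^{*})^{-1}\circ D_{T}^{-1})]\sigma(\xi)\,dx$ and a boundary piece $\int_{\partial M}\Phi$. For the interior term, the crucial observation (already flagged in the excerpt) is that $\sigma_{-4}((D_{T}^{*})^{-1}\circ D_{T}^{-1})$ on $M$ coincides locally with the corresponding full-symbol expression on a closed manifold, so I can directly invoke Theorem 3.1 (Pf\"affle--Stephan) to identify this term with $-\frac{1}{48\pi^{2}}\int_{M}\tilde{R}(x)\,dx$, where $\tilde{R}=R+18\,\mathrm{div}(V)-54|V|^{2}-9\|T\|^{2}$.

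The real work is in the boundary term. Following the constraint $r-k+|\alpha|+\ell-j-1=-n=-4$ with $r,\ell\leq -1$, only the five cases (a\textrm{I}), (a\textrm{II}), (a\textrm{III}), (b), (c) contribute. I will use Lemmas 3.4, 3.5, 3.6 together with the orthonormal framing adapted to the metric $\tfrac{1}{h(x_{n})}g^{\partial M}+dx_{n}^{2}$, and rely on the standard trace identities $\mathrm{tr}[c(\xi')c(dx_{n})]=0$, $\mathrm{tr}[c(dx_{n})^{2}]=-4$, etc. Case (a\textrm{I}) vanishes since $\partial_{x_{i}}\sigma_{-1}(D_{T}^{-1})(x_{0})=0$ for $i<n$ by Lemma 3.5. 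Cases (a\textrm{II}) and (a\textrm{III}) involve only the leading symbol $\sqrt{-1}c(\xi)/|\xi|^{2}$ and reduce to the pure geometric data $h'(0)$; a residue computation along a Jordan contour $\Gamma^{+}$ encircling $\xi_{n}=i$ yields $\mp\tfrac{3}{8}\pi h'(0)\Omega_{3}\,dx'$, which cancel.

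Cases (b) and (c) are where the torsion enters. For these I decompose the relevant subleading symbols as $\pi_{\xi_{n}}^{+}\sigma_{-2}((D_{T}^{*})^{-1})=A_{1}+A_{2}$ and $\partial_{\xi_{n}}\sigma_{-2}(D_{T}^{-1})=B_{1}+B_{2}$, isolating the Levi-Civita piece $\alpha_{0}=-\tfrac{3}{4}h'(0)c(dx_{n})$ from the torsion pieces $\beta_{0},\beta_{1}$ given by the $A_{ist}$-terms in Lemma 3.6. The $A_{1},B_{1}$ contributions are formally identical to the torsion-free computation in \cite{Wa3}, so I can quote the values $+\tfrac{9}{8}\pi h'(0)\Omega_{3}\,dx'$ and $-\tfrac{9}{8}\pi h'(0)\Omega_{3}\,dx'$ respectively, and they cancel. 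For $A_{2}$ and $B_{2}$ I expand the Clifford triple products and apply the trace relation $\mathrm{tr}[c(\widetilde{e_{i}})c(dx_{n})]=-4\delta_{in}$: only the summands in $\beta_{0},\beta_{1}$ of the form $A_{iin}c(\widetilde{e_{n}})$ survive the trace with $c(\xi'),c(dx_{n})$, producing integrands of the form $\tfrac{ic_{0}}{2(\xi_{n}+i)^{2}(\xi_{n}-i)}$ and $\tfrac{-i\tilde{c}_{0}}{(\xi_{n}+i)^{3}(\xi_{n}-i)}$ with $c_{0}=-2\sum_{i}A_{iin}=-\tilde{c}_{0}$.

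The residues at $\xi_{n}=i$ yield $+\tfrac{1}{4}\pi c_{0}\Omega_{3}\,dx'$ and $-\tfrac{1}{4}\pi\tilde{c}_{0}\Omega_{3}\,dx'$, which sum to $-\sum_{i}A_{iin}\pi\Omega_{3}\,dx'$ after all five cases are combined and the $h'(0)$ terms cancel globally. Integrating $\Phi$ over $\partial M$ and adding the interior term proves the theorem. The main obstacle will be the careful bookkeeping in cases (b) and (c): identifying which Clifford monomials in the torsion expansions of $\beta_{0},\beta_{1}$ survive after multiplication by $c(\xi),c(\xi'),c(dx_{n})$ and evaluation of the trace at $x_{0}$, since many candidate terms look superficially nonzero but are killed by either the skew-symmetry of the Clifford product or the orthogonality $\mathrm{tr}[c(\widetilde{e_{i}})c(dx_{n})]=0$ for $i<n$.
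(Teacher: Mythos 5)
Your proposal is correct and follows essentially the same route as the paper: the same splitting via (2.12) with the interior term identified through Pf\"affle--Stephan's closed-manifold result, the same five boundary cases with a(I) vanishing and a(II), a(III) cancelling, and the same decompositions $A_{1}+A_{2}$, $B_{1}+B_{2}$ isolating $\alpha_{0}$ from $\beta_{0},\beta_{1}$, with the $h'(0)$ contributions quoted from \cite{Wa3} cancelling and the torsion pieces giving $c_{0}=-2\sum_{i}A_{iin}=-\tilde{c}_{0}$ and the boundary density $-\sum_{i}A_{iin}\pi\Omega_{3}\texttt{d}x'$, exactly as in the paper.
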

\section{The gravitational action for 4-dimensional manifolds with boundary}
\label{4}
Firstly, we recall the Einstein-Hilbert action with torsion for manifolds with boundary (see \cite{Wa3} or \cite{Wa4}),
 \begin{equation}
 I_{Gr}=\frac{1}{16\pi}\int_{M}\tilde{R}\texttt{d}vol_{M}+2\int_{\partial_{M}}\tilde{K}\texttt{d}vol_{\partial_{M}}:=I_{Gr, i}+I_{Gr, b},
 \end{equation}
where $\tilde{R}=R+18div(V)-54|V|^{2}-9\parallel T\parallel^{2}$ be the scalar curvature of this orthogonal connection without the Cartan
type torsion $S$.
And
\begin{equation}
\tilde{K}=K+\sum_{i}A_{iin}; \ K=\sum_{1\leq i,j\leq n-1}K_{i,j}g_{\partial_{M}}^{i,j},
\end{equation}
where $K_{i,j}$ is the second fundamental form, or extrinsic curvature. Take the metric in Section 2, and by Lemma A.2 in \cite{Wa3},
  for $n=4$, we assume the manifold approach  the boundary $\partial_{M}$ is flat, then
  \begin{equation}
\tilde{K}(x_{0})=\sum_{i}A_{iin} , \ K(x_{0})=0.
\end{equation}

Let
 \begin{equation}
\widetilde{{\rm Wres}}[\pi^+(D_{T}^{*})^{-1}\circ\pi^+D_{T}^{-1}]
=\widetilde{{\rm Wres}}_{i}[\pi^+(D_{T}^{*})^{-1}\circ\pi^+D_{T}^{-1}]+\widetilde{{\rm Wres}}_{b}[\pi^+(D_{T}^{*})^{-1}\circ\pi^+D_{T}^{-1}],
\end{equation}
where
 \begin{equation}
\widetilde{{\rm Wres}}_{i}[\pi^+(D_{T}^{*})^{-1}\circ\pi^+D_{T}^{-1}]
=\int_M\int_{|\xi|=1}{\rm trace}_{S(TM)}[\sigma_{-4}((D_{T}^{*})^{-1} \circ D_{T}^{-1})]\sigma(\xi)dx
\end{equation}
and
\begin{eqnarray}
&&\widetilde{{\rm Wres}}_{b}[\pi^+((D_{T}^{*})^{-1}\circ\pi^+D_{T}^{-1}]\nonumber\\
&=& \int_{\partial M}\int_{|\xi'|=1}\int^{+\infty}_{-\infty}\sum^{\infty}_{j, k=0}\sum\frac{(-i)^{|\alpha|+j+k+1}}{\alpha!(j+k+1)!}
\times {\rm trace}_{S(TM)}[\partial^j_{x_n}\partial^\alpha_{\xi'}\partial^k_{\xi_n}\sigma^+_{r}(((D_{T}^{*})^{-1})(x',0,\xi',\xi_n)
\nonumber\\
&&\times\partial^\alpha_{x'}\partial^{j+1}_{\xi_n}\partial^k_{x_n}\sigma_{l}(D_{T}^{-1})(x',0,\xi',\xi_n)]d\xi_n\sigma(\xi')dx'
\end{eqnarray}
denote the interior term and boundary term of $\widetilde{{\rm Wres}}[\pi^+(D_{T}^{*})^{-1}\circ\pi^+D_{T}^{-1}]$.

Combining (3.65), (4.1) and (4.4), we obtain
\begin{thm}
 Let M be a 4-dimensional compact manifold  with the boundary $\partial M$ and $\tilde{\nabla}$ be an orthogonal
connection with torsion. Then we get the volumes  associated to $D^{*}D$,
\begin{eqnarray}
&& I_{\rm {Gr,i}}=-3\pi \widetilde{{\rm Wres}}_{i}[\pi^+(D_{T}^{*})^{-1}\circ\pi^+D_{T}^{-1}]; \nonumber\\
&&I_{\rm {Gr,b}}=\frac{-2}{\pi\Omega_3 }\widetilde{{\rm Wres}}_{b}[\pi^+(D_{T}^{*})^{-1}\circ\pi^+D_{T}^{-1}].
\end{eqnarray}
\end{thm}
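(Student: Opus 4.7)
The plan is to read off Theorem 4.2 by decomposing $\widetilde{\rm Wres}[\pi^+(D_T^{*})^{-1}\circ\pi^+D_T^{-1}]$ into its interior and boundary pieces as in (4.4)--(4.6), and then matching each piece to the corresponding term of the Einstein--Hilbert action (4.1) as a local density on $M$ and on $\partial M$ respectively. The two identities of the theorem are nothing more than the proportionality constants that come out of this matching; no new symbol calculus is required since all the hard work has been done in Theorems 3.1 and 3.7.

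First I would handle the interior identity. The interior symbol $[\sigma_{-n}((D_T^{*})^{-1}\circ D_T^{-1})]|_{M}$ agrees pointwise with its analogue on a closed manifold, so by Theorem 3.1 applied to the closed double $\hat{M}$ and restriction to $M$,
$$\widetilde{\rm Wres}_{i}[\pi^+(D_T^{*})^{-1}\circ\pi^+D_T^{-1}]=-\frac{1}{48\pi^{2}}\int_{M}\tilde{R}(x)\,dx.$$
Since $I_{Gr,i}=\frac{1}{16\pi}\int_{M}\tilde{R}\,dvol_{M}$, dividing gives the factor $\frac{1/(16\pi)}{-1/(48\pi^{2})}=-3\pi$, which establishes the first equation of the theorem.

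Next I would handle the boundary identity. Subtracting the interior contribution from $Vol_{4}^{(1,1)}$ in (3.65) isolates the boundary piece, which is exactly the sum of Cases (a), (b), (c) computed in Section 3:
$$\widetilde{\rm Wres}_{b}[\pi^+(D_T^{*})^{-1}\circ\pi^+D_T^{-1}]=-\int_{\partial M}\sum_{i}A_{iin}\,\pi\Omega_{3}\,dx'.$$
Under the flat-boundary hypothesis, (4.3) gives $K(x_{0})=0$ and $\tilde{K}(x_{0})=\sum_{i}A_{iin}$, so $I_{Gr,b}=2\int_{\partial M}\sum_{i}A_{iin}\,dvol_{\partial M}$. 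Comparing coefficients and using that $dx'=dvol_{\partial M}$ at $x_{0}$ in the chosen normal coordinates on $\partial M$, one reads off the proportionality constant $-2/(\pi\Omega_{3})$, which gives the second equation.

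The main point that needs care is the pointwise identification of the two boundary integrands: the $x'$-integrand in $\widetilde{\rm Wres}_{b}$ was produced in the orthonormal frame $\{\widetilde{e}_{1},\dots,\widetilde{e}_{n}\}$ adapted to the collar metric $\frac{1}{h(x_n)}g^{\partial M}+dx_{n}^{2}$, whereas $\tilde{K}$ in (4.1) is defined invariantly. The flat-boundary assumption (which forces $h'(0)$-contributions to cancel between Cases b and c, as already seen in (3.51), (3.63), (3.64)) is precisely what makes the remaining trace reduce cleanly to $\sum_{i}A_{iin}$, and the same condition ensures $\tilde{K}(x_{0})=\sum_{i}A_{iin}$. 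So the only genuine verification is that the local frame conventions used in Section 3 for the $A_{ist}$ coefficients agree with those entering $\tilde{K}$; once that is checked, both identities follow from coefficient matching.
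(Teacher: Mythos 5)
Your proposal is correct and follows essentially the same route as the paper, which simply combines the computation (3.64)--(3.65), the action (4.1) with (4.3), and the splitting (4.4)--(4.6), the identities reducing to the coefficient ratios $\frac{1/(16\pi)}{-1/(48\pi^{2})}=-3\pi$ and $\frac{2}{-\pi\Omega_{3}}=\frac{-2}{\pi\Omega_{3}}$ exactly as you compute. One minor imprecision: the cancellation of the $h'(0)$-terms between Cases b and c in (3.51) and (3.63) is algebraic and holds without any flatness hypothesis; the flat-boundary assumption is needed only on the action side, to give $K(x_{0})=0$ so that $\tilde{K}(x_{0})=\sum_{i}A_{iin}$.
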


\section{A Kastler-Kalau-Walze type theorem for $6$-dimensional spin manifolds with boundary
associated to $(D_{T}^{*})^{2}$ and $D_{T}^{2}$ }
 In this section, We compute the lower dimensional volume ${\rm Vol}^{(2,2)}_6$ for $6$-dimensional spin manifolds with
boundary of  metric $ g^M=\frac{1}{h(x_n)}g^{\partial M}+dx_n^2$
 and get a Kastler-Kalau-Walze type theorem in this case.

Firstly, we compute $\int_{\partial M}\Phi$ in this case. By Lemma 1 in \cite{Wa4} , we have

\begin{lem}
\begin{eqnarray}
&&\sigma_{-2}((D_{T}^{*})^{-2})=|\xi|^{-2}; \\
&&\sigma_{-2}(D_{T}^{-2})=|\xi|^{-2}; \\
&&\sigma_{-3}((D_{T}^{*})^{-2})=-\sqrt{-1}|\xi|^{-4}\xi_k(\tilde{\Gamma}^k-2\hat{\delta}^k)-\sqrt{-1}|\xi|^{-6}2\xi^j\xi_\alpha\xi_\beta
\partial_jg^{\alpha\beta}-2\sqrt{-1}|\xi|^{-4}(u-v)c(\xi);\\
&&\sigma_{-3}(D_{T}^{-2})=-\sqrt{-1}|\xi|^{-4}\xi_k(\tilde{\Gamma}^k-2\check{\delta}^k)-\sqrt{-1}|\xi|^{-6}2\xi^j\xi_\alpha\xi_\beta
\partial_jg^{\alpha\beta}-2\sqrt{-1}|\xi|^{-4}(u+v)c(\xi).
\end{eqnarray}
\end{lem}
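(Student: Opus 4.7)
The plan is to apply standard pseudodifferential symbol calculus to the squares $(D_T^*)^2$ and $D_T^2$ and then invert. Since $D_T^*$ and $D_T$ are first-order elliptic operators with $\sigma_1(D_T^*) = \sigma_1(D_T) = \sqrt{-1}c(\xi)$ by (3.13), and since $c(\xi)^2 = -|\xi|^2$, the composition formula immediately gives $\sigma_2((D_T^*)^2) = \sigma_2(D_T^2) = |\xi|^2$. The inversion identity $\sigma(Q)\#\sigma(Q^{-1}) = 1$ at the leading order then yields $\sigma_{-2}((D_T^*)^{-2}) = \sigma_{-2}(D_T^{-2}) = |\xi|^{-2}$, proving the first two identities.

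For the $\sigma_{-3}$ formulas, I first compute the degree-$1$ sub-principal symbols of the squares via
\begin{equation*}
\sigma_1((D_T^*)^2) = \sqrt{-1}c(\xi)\sigma_0(D_T^*) + \sigma_0(D_T^*)\sqrt{-1}c(\xi) - i\sum_j \partial_{\xi_j}(\sqrt{-1}c(\xi))\,\partial_{x_j}(\sqrt{-1}c(\xi)),
\end{equation*}
and analogously with $D_T$ in place of $D_T^*$. Substituting the explicit expressions for $\sigma_0(D_T^*)$ and $\sigma_0(D_T)$ from (3.14)--(3.15), the two results differ exactly in the sign of the vectorial-torsion block $\pm\tfrac{n-1}{2}V$ from (2.6), which is the origin of the $\pm v$ discrepancy appearing in the lemma's statement. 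Organizing the resulting Clifford products by contraction against $\xi$, one isolates a $c(\xi)$-proportional part (encoding the Christoffel-type contribution $\xi_k\tilde{\Gamma}^k$ together with the torsion one-form contributions $\hat\delta^k$, resp.\ $\check\delta^k$) and a scalar part (encoding $u\mp v$).

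Next, I extract $\sigma_{-3}$ by solving the order-$(-3)$ piece of $\sigma(Q)\#\sigma(Q^{-1}) = 1$:
\begin{equation*}
\sigma_2(Q)\sigma_{-3}(Q^{-1}) + \sigma_1(Q)\sigma_{-2}(Q^{-1}) - i\sum_j\partial_{\xi_j}\sigma_2(Q)\,\partial_{x_j}\sigma_{-2}(Q^{-1}) = 0.
\end{equation*}
Multiplying through by $|\xi|^{-2}$ and using $\partial_{x_j}|\xi|^{-2} = |\xi|^{-6}\cdot 2\xi_\alpha\xi_\beta\partial_j g^{\alpha\beta}$ produces the $-\sqrt{-1}|\xi|^{-6}\cdot 2\xi^j\xi_\alpha\xi_\beta\partial_j g^{\alpha\beta}$ contribution directly. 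The $\sigma_1(Q)$-term contributes $-|\xi|^{-4}\sigma_1(Q)$, which, after the Step~2 decomposition, supplies both the $-\sqrt{-1}|\xi|^{-4}\xi_k(\tilde\Gamma^k - 2\hat\delta^k)$ piece (resp.\ $\tilde\Gamma^k - 2\check\delta^k$) and the $-2\sqrt{-1}|\xi|^{-4}(u\mp v)c(\xi)$ piece, giving both formulas at once.

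The main obstacle is the Clifford-algebra bookkeeping in Step 2: expanding the triple products $c(\widetilde e_i)c(\widetilde e_s)c(\widetilde e_t)$ and the first-order terms that appear in $\sigma_0$ via (3.14)--(3.15), and identifying which contractions collapse into $c(\xi)$-proportional terms versus pure scalar terms, so as to match precisely the $(u, v, \hat\delta^k, \check\delta^k)$ decomposition used in \cite{Wa4}. Once this accounting is made, the sign flip between $(D_T^*)^{-2}$ and $D_T^{-2}$ propagates directly from the single sign difference in (3.14) versus (3.15), and both $\sigma_{-3}$ expressions emerge simultaneously.
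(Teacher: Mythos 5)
Your proposal is correct and is essentially the paper's route: the paper proves this lemma simply by invoking Lemma 1 of \cite{Wa4}, whose underlying argument is exactly your computation — write $\sigma((D_T^{*})^{2})$, $\sigma(D_T^{2})$ via the symbol composition formula (with $\sigma_2=|\xi|^2$ and $\sigma_1$ picking up the anticommutator with $\sigma_0(D_T^{*})$, resp. $\sigma_0(D_T)$, whose single sign difference produces the $u\mp v$ terms), and then solve the order $-2$ and order $-3$ pieces of $\sigma(Q)\#\sigma(Q^{-1})=1$. Only a cosmetic slip: the intermediate identity should read $\partial_{x_j}(|\xi|^{-2})=-|\xi|^{-4}\xi_\alpha\xi_\beta\partial_jg^{\alpha\beta}$, which after multiplication by $-i\,\partial_{\xi_j}(|\xi|^{2})=-2i\xi^{j}$ and by $-|\xi|^{-2}$ indeed yields the stated $-\sqrt{-1}\,|\xi|^{-6}2\xi^{j}\xi_\alpha\xi_\beta\partial_jg^{\alpha\beta}$ contribution, so the conclusion is unaffected.
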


 Now we can compute $\Phi$ (see formula (2.13) for the definition of $\Phi$), since the sum is taken over $
-r-l+k+j+|\alpha|=5,~~r,~l\leq -2,$ then we have the following five cases:

{\bf case $\hat{a}$)~I)}~$r=-2,~l=-2,~k=j=0,~|\alpha|=1$

From (2.13) we have
 \begin{equation}
{\rm case~\hat{a})~I)}=-\int_{|\xi'|=1}\int^{+\infty}_{-\infty}\sum_{|\alpha|=1}
 {\rm trace}[\partial^\alpha_{\xi'}\pi^+_{\xi_n}\sigma_{-2}((D_{T}^{*})^{-2})\times
 \partial^\alpha_{x'}\partial_{\xi_n}\sigma_{-2}(D_{T}^{-2})](x_0)d\xi_n\sigma(\xi')dx'.
\end{equation}
By Lemma 3.5, for $i<n$, then
 \begin{equation}
\partial_{x_i}\sigma_{-2}(D_{T}^{-2})(x_0)=\partial_{x_i}\left(\frac{1}{|\xi|^2}\right)(x_0)
=-\frac{\partial_{x_i}(|\xi|^2)(x_0)}{|\xi|^4}\nonumber\\
=0.
\end{equation}
Then case $\hat{a}$) I) vanishes.

 {\bf case $\hat{a}$)~II)}~$r=-1,~l=-1, ~k=|\alpha|=0,~j=1$

From (2.13) we have
 \begin{equation}
{\rm case~ \hat{a})~II)}=-\frac{1}{2}\int_{|\xi'|=1}\int^{+\infty}_{-\infty} {\rm
trace} [\partial_{x_n}\pi^+_{\xi_n}\sigma_{-2}((D_{T}^{*})^{-2})\times
\partial_{\xi_n}^2\sigma_{-2}(D_{T}^{-2})](x_0)d\xi_n\sigma(\xi')dx'.
\end{equation}
By Lemma 3.5 and Lemma 5.1, we have
 \begin{equation}
\partial^2_{\xi_n}\sigma_{-2}(D_{T}^{-2})(x_0)=\partial^2_{\xi_n}(\frac{1}{|\xi|^2})(x_0)=\frac{-2+6\xi_n^{2}}{(1+\xi_n^{2})^3},
\end{equation}
and
 \begin{equation}
\partial_{x_n}\sigma_{-2}((D_{T}^{*})^{-2})(x_0)=\frac{-h'(0)}{(1+\xi_n^{2})^3}.
\end{equation}
Then
 \begin{equation}
\pi^+_{\xi_n}\left[\partial_{x_n}\sigma_{-2}((D_{T}^{*})^{-2})\right](x_0)|_{|\xi'|=1}=\frac{(i\xi_n^{2}+2)h'(0)}{4(\xi_n-i)^{2}}.
\end{equation}
Combining (5.7) and (5.9), we obtain
\begin{eqnarray}
 &&\int^{+\infty}_{-\infty}
\frac{(i\xi_n^{2}+2)h'(0)}{4(\xi_n-i)^{2}}\times
 \frac{-2+6\xi_n^{2}}{(1+\xi_n^{2})^3} d\xi_n\nonumber\\
&=&-\frac{1}{2}\int_{\Gamma^+} \frac{(3\xi_n^{2}-1)\big(-2h'(0)-i\xi_n h'(0)\big)}
{(\xi_n-i)^5(\xi_n+i)^3}d\xi_ndx'\nonumber\\
&=&-\frac{1}{2}\pi i\Big[\frac{(3\xi_n^{2}-1)\big(-2h'(0)-i\xi_nh'(0)\big)}
{(\xi_n+i)^3}\Big]^{(4)}|_{\xi_n=i}dx'\nonumber\\
&=&\frac{5h'(0)}{32}.
\end{eqnarray}
Since $n=6$, ${\rm tr}_{S(TM)}[{\rm id}]={\rm dim}(\wedge^*(3))=8.$
Combining (5.6) and (5.10), we have
 \begin{equation}
{\rm case~ \hat{a})~II)}=-\frac{5h'(0)}{8} \Omega_4dx'.
\end{equation}
 where $\Omega_4$ is the canonical volume of $S^4$.

{\bf case $\hat{a}$)~III)}~$r=-2,~l=-2,~j=|\alpha|=0,~k=1$\\

From (2.13)  and an integration by parts, we get
\begin{eqnarray}
{\rm case~ \hat{a})~III)}&=&-\frac{1}{2}\int_{|\xi'|=1}\int^{+\infty}_{-\infty}
{\rm trace} [\partial_{\xi_n}\pi^+_{\xi_n}\sigma_{-2}((D_{T}^{*})^{-2})\times
\partial_{\xi_n}\partial_{x_n}\sigma_{-2}(D_{T}^{-2})](x_0)d\xi_n\sigma(\xi')dx'\nonumber\\
&=&\frac{1}{2}\int_{|\xi'|=1}\int^{+\infty}_{-\infty} {\rm trace}
[\partial_{\xi_n}^2\pi^+_{\xi_n}\sigma_{-2}((D_{T}^{*})^{-2})\times
\partial_{x_n}\sigma_{-2}(D_{T}^{-2})](x_0)d\xi_n\sigma(\xi')dx'.
\end{eqnarray}
By Lemma 3.5 and Lemma 5.1, we have
 \begin{equation}
\partial_{\xi_n}^2\pi^+_{\xi_n}\sigma_{-2}((D_{T}^{*})^{-2})(x_0)|_{|\xi'|=1}=\frac{-i}{(\xi_n-i)^3}.
\end{equation}
Substituting (5.8) and (5.13) into (5.12), one sees that
\begin{eqnarray}
{\rm {\bf case~\hat{a})~III)}}&=&
\frac{1}{2}\int_{|\xi'|=1}\int^{+\infty}_{-\infty}\frac{8ih'(0)}{(\xi_n-i)^5(\xi_n+i)^2}d\xi_n\sigma(\xi')dx' \nonumber\\
&=&\frac{5h'(0)}{8}\pi \Omega_4dx'.
\end{eqnarray}

{\bf case $\hat{b}$)}~$r=-2,~l=-3,~k=j=|\alpha|=0$\\

From (2.13)  and an integration by parts, we get
\begin{eqnarray}
{\rm case~ \hat{b})}&=&-i\int_{|\xi'|=1}\int^{+\infty}_{-\infty}
{\rm trace} [\pi^+_{\xi_n}\sigma_{-2}((D_{T}^{*})^{-2})\times
\partial_{\xi_n}\sigma_{-3}(D_{T}^{-2})](x_0)d\xi_n\sigma(\xi')dx'\nonumber\\
&=&i\int_{|\xi'|=1}\int^{+\infty}_{-\infty}
{\rm trace} [\partial_{\xi_n}\pi^+_{\xi_n}\sigma_{-2}((D_{T}^{*})^{-2})\times
\sigma_{-3}(D_{T}^{-2})](x_0)d\xi_n\sigma(\xi')dx'.
\end{eqnarray}
By Lemma 5.1, we have
 \begin{equation}
\partial_{\xi_n}\pi_{\xi_n}^+\sigma_{-2}((D_{T}^{*})^{-2})(x_0)|_{|\xi'|=1}=\frac{i}{2(\xi_n-i)^2}.
\end{equation}

In the normal coordinate, $g^{ij}(x_0)=\delta_i^j$ and $\partial_{x_j}(g^{\alpha\beta})(x_0)=0,$ {\rm if
}$j<n;~=h'(0) \delta^\alpha_\beta,~{\rm if }~j=n.$ So by Lemma A.2 in \cite{Wa3},
we have $\Gamma^n(x_0)=\frac{5}{2}h'(0)$ and
$\Gamma^k(x_0)=0$ for $k<n$. By the definition of $\delta^k$ and Lemma 2.3 in \cite{Wa3}, we have $\delta^n(x_0)=0$ and
$\delta^k=\frac{1}{4}h'(0)c(\widetilde{e_k})c(\widetilde{e_n})$ for $k<n$. So
\begin{eqnarray}
&&\sigma_{-3}(D^{-2})(x_0)|_{|\xi'|=1}\nonumber\\
&=&-\sqrt{-1}|\xi|^{-4}\xi_k(\tilde{\Gamma}^k-2\delta^k)(x_0)|_{|\xi'|=1}-\sqrt{-1}|\xi|^{-6}2\xi^j\xi_\alpha\xi_\beta
\partial_jg^{\alpha\beta}(x_0)|_{|\xi'|=1}-2\sqrt{-1}|\xi|^{-4}(u+v)c(\xi)\nonumber\\
&=&\frac{-i}{(1+\xi_n^2)^2}\Big(-\frac{1}{2}h'(0)\sum_{k<n}\xi_k
c(\widetilde{e_k})c(\widetilde{e_n})+\xi_n\frac{5}{2}h'(0)\Big)
-\frac{2i\xi_n h'(0)}{(1+\xi_n^2)^3}-2\sqrt{-1}|\xi|^{-4}(u+v)c(\xi).
\end{eqnarray}
We note that $\int_{|\xi'|=1}\xi_1\cdots\xi_{2q+1}\sigma(\xi')=0$ and
$\textbf{tr}[v\times c(dx_{n})]=4\sum_{i}A_{iin}$ . Then
\begin{eqnarray}
{\bf case~ \hat{b})}
&=&i\int_{|\xi'|=1}\int^{+\infty}_{-\infty}\textbf{tr} [\partial_{\xi_n}\pi^+_{\xi_n}\sigma_{-2}((D_{T}^{*})^{-2})\times
   \sigma_{-3}(D_{T}^{-2})](x_0)d\xi_n\sigma(\xi')dx' \nonumber\\
 &=& -\frac{15h'(0)}{8}\pi \Omega_4dx'+i\int_{|\xi'|=1}\int^{+\infty}_{-\infty}
  \textbf{tr}\Big[\frac{i}{2(\xi_n-i)^2}\times
   \Big(-2\sqrt{-1}|\xi|^{-4}(u+v)c(\xi)\Big)\Big](x_0)d\xi_n\sigma(\xi')dx'\nonumber\\
&=&-\frac{15h'(0)}{8}\pi \Omega_4dx'+i\int_{|\xi'|=1}\int^{+\infty}_{-\infty}
 \frac{1}{(\xi_n-i)^2(1+\xi_n^{2})^{2}} \textbf{tr}\Big[
   (u+v)c(\xi)\Big](x_0)d\xi_n\sigma(\xi')dx'\nonumber\\
 &=&-\frac{15h'(0)}{8}\pi \Omega_4dx'+4i\sum_{i}A_{iin}
 \int_{|\xi'|=1}\int^{+\infty}_{-\infty} \frac{\xi_n}{(\xi_n-i)^2(1+\xi_n^{2})^{2}}d\xi_n\sigma(\xi')dx'\nonumber\\
 &=&\Big(-\frac{15h'(0)}{8}-\frac{1}{2}\sum_{i}A_{iin}\Big)\pi \Omega_4dx'.
\end{eqnarray}

 {\bf  case $\hat{c}$)}~$r=-3,~l=-2,~k=j=|\alpha|=0$\\

From (2.13) we have
 \begin{equation}
{\rm {\bf case~ \hat{c})}}=-i\int_{|\xi'|=1}\int^{+\infty}_{-\infty}{\rm trace} [\pi^+_{\xi_n}\sigma_{-3}((D_{T}^{*})^{-2})\times
\partial_{\xi_n}\sigma_{-2}(D_{T}^{-2})](x_0)d\xi_n\sigma(\xi')dx'.
\end{equation}
By (24) in \cite{Wa4}, we have
 \begin{equation}
{\rm {\bf case~ \hat{c})}}={\rm {\bf case~ \hat{b})}}-i\int_{|\xi'|=1}\int^{+\infty}_{-\infty}{\rm
tr}[\partial_{\xi_n}\sigma_{-2}(D_{T}^{-2})\times
\sigma_{-3}((D_{T}^{*})^{-2})]d\xi_n\sigma(\xi')dx'.
\end{equation}
Then an application of Lemma 5.1 shows
 \begin{equation}
\partial_{\xi_n}\sigma_{-2}(D_{T}^{-2})(x_0)=\frac{-2\xi_n}{(1+\xi_n^{2})^2}.
\end{equation}
Combining (5.3) and (5.21), we obtain
\begin{eqnarray}
&&-i\int_{|\xi'|=1}\int^{+\infty}_{-\infty}{\rm tr}[\partial_{\xi_n}\sigma_{-2}(D_{T}^{-2})\times
\sigma_{-3}((D_{T}^{*})^{-2})]d\xi_n\sigma(\xi')dx'\nonumber\\
 &=&\frac{15h'(0)}{4}\pi \Omega_4dx'-i\int_{|\xi'|=1}\int^{+\infty}_{-\infty}
  \textbf{tr}\Big[\frac{-2\xi_n}{(1+\xi_n^{2})^2}\times
   \Big(-2\sqrt{-1}|\xi|^{-4}(u-v)c(\xi)\Big)\Big](x_0)d\xi_n\sigma(\xi')dx'\nonumber\\
&=&\frac{15h'(0)}{4}\pi \Omega_4dx'-i\int_{|\xi'|=1}\int^{+\infty}_{-\infty}
 \frac{4i\xi_n}{(1+\xi_n^{2})^{4}} \textbf{tr}\Big[
   (u-v)c(\xi)\Big](x_0)d\xi_n\sigma(\xi')dx'\nonumber\\
 &=&\frac{15h'(0)}{4}\pi \Omega_4dx'-\sum_{i}A_{iin}
 \int_{|\xi'|=1}\int^{+\infty}_{-\infty} \frac{16\xi_n^{2}}{(\xi_n-i)^4(\xi_n+i)^{4}}d\xi_n\sigma(\xi')dx'\nonumber\\
 &=&\Big(\frac{15h'(0)}{4}-\sum_{i}A_{iin}\Big)\pi \Omega_4dx'.
\end{eqnarray}
From (5.18) and (5.22), we have
  \begin{equation}
{\bf case~ \hat{c})}=\Big(\frac{15h'(0)}{8}-\frac{3}{2}\sum_{i}A_{iin}\Big)\pi \Omega_4dx'.
\end{equation}
Since $\Phi$ is the sum of the cases $\hat{a}$), $\hat{b}$) and $\hat{c}$), so $\Phi=-2\sum_{i}A_{iin}\pi \Omega_4dx'$. Hence we conclude that

\begin{thm}
Let  $M$ be a $6$-dimensional compact spin manifold with the boundary $\partial M$ and $\tilde{\nabla}$ be an orthogonal
connection with torsion. Then we get the volumes  associated to $D_{T}^{*}D_{T}$ with torsion on $\widehat{M}$
 \begin{equation}
\widetilde{Wres}_{b}(\pi^{+}(D_{T}^{*})^{-2}\circ\pi^{+}(D_{T})^{-2})=-2\int_{\partial_{M}}\sum_{i}A_{iin}\pi \Omega_4dx';
\end{equation}
when $V=0$,
  \begin{equation}
Wres(\pi^{+}(D_{T}^{*})^{-2}\circ\pi^{+}(D_{T})^{-2})=-\frac{1}{48\pi^{2}}\int_{M}(R-9\parallel T\parallel^{2})(x)dx.
\end{equation}
 \end{thm}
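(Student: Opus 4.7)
The plan is to prove the two identities separately: the boundary formula (5.24) follows immediately by collecting the case computations that precede the theorem statement, while the interior formula (5.25) requires the Lichnerowicz formula from Theorem 2.1 together with a standard Kastler--Kalau--Walze type computation in dimension $6$.

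First I would assemble $\int_{\partial M}\Phi$ for (5.24) from the five sub-cases already tabulated. Case $\hat{a}$) I) vanishes by Lemma 3.5, since $\partial_{x_i}\sigma_{-2}(D_T^{-2})(x_0)=0$ for $i<n$ (equation (5.6)). Cases $\hat{a}$) II) and $\hat{a}$) III), from (5.11) and (5.14), cancel exactly because their contributions are $\mp\frac{5h'(0)}{8}\pi\Omega_4\,dx'$. The torsion-dependent cases $\hat{b}$) and $\hat{c}$) of (5.18) and (5.23) give
\[
\Bigl(-\tfrac{15h'(0)}{8}-\tfrac{1}{2}\sum_i A_{iin}\Bigr)\pi\Omega_4\,dx'
+\Bigl(\tfrac{15h'(0)}{8}-\tfrac{3}{2}\sum_i A_{iin}\Bigr)\pi\Omega_4\,dx'
=-2\sum_i A_{iin}\,\pi\Omega_4\,dx',
\]
so $\Phi(x_0)=-2\sum_i A_{iin}\,\pi\Omega_4\,dx'$, and integrating over $\partial M$ delivers (5.24). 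The $h'(0)$ terms, which are the only remnants of the Levi--Civita part, cancel between $\hat{b}$) and $\hat{c}$), which is the book-keeping one should verify most carefully.

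For (5.25), I would exploit that $V=0$ implies $D_T$ is self-adjoint (recalled before Theorem 2.1), so $(D_T^*)^{-2}\circ(D_T)^{-2}=(D_T^*D_T)^{-2}$. The Lichnerowicz formula of Theorem 2.1 specializes, with $V=0$, to
\[
D_T^*D_T=\Delta+\tfrac{1}{4}R+\tfrac{3}{2}\,dT\cdot-\tfrac{3}{4}\|T\|^2.
\]
Then $(D_T^*D_T)^{-2}$ is a classical pseudodifferential operator of order $-4$ on the closed $6$-manifold, and its residue is extracted from $\sigma_{-6}((D_T^*D_T)^{-2})$. Expanding the symbol and applying the fibrewise trace, the Clifford $4$-form $dT$ drops out because its spinor trace against the scalar symbol of $(\Delta)^{-j}$ vanishes (Clifford trace of a non-scalar degree-$4$ element is zero in even dimension $6$). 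Consequently only the scalar endomorphism $\tfrac{1}{4}R-\tfrac{3}{4}\|T\|^2$ survives. An application of the standard Kastler--Kalau--Walze computation for $\widetilde{\rm Wres}\bigl((\Delta+E)^{-(n-2)/2}\bigr)$ with $n=6$, normalized exactly as in Theorem 3.1, produces the coefficient $-\frac{1}{48\pi^2}$ and yields (5.25).

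The main obstacle is the vanishing of the $dT$ contribution in (5.25): one must check that every term arising in the symbol expansion $\sigma_{-6}((D_T^*D_T)^{-2})$ that contains an explicit factor of $c(dT)$ has vanishing spinor trace after integrating over $|\xi|=1$. This reduces to an even/odd parity argument combined with trace identities for products of the form $c(dT)c(\xi)^{2k}$, which has no scalar component in the Clifford algebra of rank $6$. A secondary source of delicacy is the boundary assembly: the $\hat{b}$) and $\hat{c}$) computations already required the trace identity $\mathrm{tr}[v\cdot c(dx_n)]=4\sum_i A_{iin}$ and the symmetry reduction via (24) in \cite{Wa4}, and one must track the signs carefully so that the Levi--Civita contributions ($\pm\tfrac{15h'(0)}{8}$) cancel while the torsion contributions add.
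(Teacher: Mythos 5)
Your proposal is correct and follows essentially the same route as the paper: the boundary identity is obtained exactly as in the text by summing the five cases (case $\hat{a}$) I) vanishing, cases $\hat{a}$) II)--III) cancelling, and the $h'(0)$ parts of cases $\hat{b}$) and $\hat{c}$) cancelling while the torsion traces add to $-2\sum_i A_{iin}\pi\Omega_4\,dx'$). For the interior identity the paper gives no explicit derivation and relies, as you do, on the observation that with $V=0$ the operator reduces to $(D_T^*D_T)^{-2}$, whose residue is read off from the Lichnerowicz formula of Theorem 2.1 with the $c(dT)$ term killed by the spinor trace, so your sketch fills in precisely the step the paper leaves implicit.
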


\section{The Kastler-Kalau-Walze theorem  for $4$-dimensional spin manifolds with boundary about Dirac operator $P^{+}D_{T}^{*}D_{T}$ }

Next we consider the volume form $\omega^{g}$  acting on the spinor bundle $\Sigma M$. Setting $P = P^{+}=\frac{1}{2}(id_{\Sigma} +\omega^{g})$
 we have a parallel field of orthogonal projections. If we now calculate the Seeley-deWitt coefficients of $H^{+ }= P^{+}D^{*}D$,
we obtain relations to Loop Quantum Gravity. The Holst term for the modified connection $\tilde{\nabla}$ is the 4-form
 \begin{equation}
\tilde{C}_{H}=18(dT-\langle T, *V\rangle\omega^{g}),
\end{equation}
where $\omega^{g}=e_{1}^{*}\wedge e_{2}^{*}\wedge e_{3}^{*}\wedge e_{4}^{*}$, see Remark 3.2 and Proposition 3.3 in \cite{PS1}.

\begin{thm}\cite{PS1}
 Let M be a 4-dimensional compact manifold  without boundary  and let $\tilde{R}$ be the scalar curvature
of the modified connection $\tilde{\nabla}$. Then we get the volumes  associated to $P^{+}D_{T}^{*}D_{T}$,
 \begin{equation}
Wres(P^{+}(D_{T}^{*}D_{T})^{-1})=-\frac{1}{96\pi^{2}}\int_{M}(\tilde{R}\omega^{g}+\tilde{C}_{H}),
\end{equation}
where $\tilde{R}=R+18div(V)-54|V|^{2}-9\parallel T\parallel^{2}$ be the scalar curvature of the modified connection $\tilde{\nabla}$.
\end{thm}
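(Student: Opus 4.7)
The strategy is to reduce the computation to a Wodzicki-residue calculation for a generalised Laplacian with the projection $P^{+}$ inserted as a zeroth-order parallel factor. First, I would apply the Lichnerowicz-type formula of Theorem 2.1 to obtain $D_T^{*}D_T=\Delta+E$, where $\Delta$ is the connection Laplacian associated with (2.8) and, for $n=4$,
\begin{equation*}
E=\tfrac{1}{4}R+\tfrac{3}{2}dT-\tfrac{3}{4}\|T\|^{2}+\tfrac{3}{2}\mathrm{div}^{g}(V)-\tfrac{9}{2}|V|^{2}+9\bigl(T\cdot V+V_{\rfloor}T\bigr).
\end{equation*}
Then I would invoke the standard dimension-four Wodzicki residue formula for an insertion $A$ into a generalised Laplacian,
\begin{equation*}
\mathrm{Wres}\bigl(A(\Delta+E)^{-1}\bigr)=\frac{1}{8\pi^{2}}\int_{M}\mathrm{tr}_{\Sigma}\bigl(A\cdot(\tfrac{R}{6}\mathrm{id}_{\Sigma}-E)\bigr)\mathrm{dvol}_{M},
\end{equation*}
and split via $P^{+}=\tfrac{1}{2}(\mathrm{id}_{\Sigma}+\omega^{g})$ into
\begin{equation*}
\mathrm{Wres}(P^{+}(D_T^{*}D_T)^{-1})=\tfrac{1}{2}\mathrm{Wres}((D_T^{*}D_T)^{-1})+\tfrac{1}{2}\mathrm{Wres}(\omega^{g}(D_T^{*}D_T)^{-1}).
\end{equation*}

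The first summand is delivered directly by Theorem 3.1 and contributes the scalar-curvature piece $-\tfrac{1}{96\pi^{2}}\int_{M}\tilde R\,\omega^{g}$ of the target formula. For the second summand I would use the spinor-trace identity in dimension four that $\mathrm{tr}_{\Sigma}(\omega^{g}c(\alpha))$ vanishes for any differential form $\alpha$ of degree strictly less than $4$ and is a nonzero multiple of the volume coefficient when $\alpha$ is a $4$-form. Applied to $\omega^{g}\cdot(\tfrac{R}{6}\mathrm{id}_{\Sigma}-E)$ this annihilates every scalar multiple of $\mathrm{id}_{\Sigma}$ (the $R$, $\|T\|^{2}$, $\mathrm{div}^{g}(V)$ and $|V|^{2}$ contributions), leaving only the $4$-form parts of $E$: the explicit $\tfrac{3}{2}dT$, and the top-degree component of $9(T\cdot V+V_{\rfloor}T)$, which after Clifford decomposition of the product of a $3$-form and a $1$-form equals $9\,T\wedge V=9\langle T,\ast V\rangle\omega^{g}$.

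Matching coefficients against $\tilde{C}_{H}=18(dT-\langle T,\ast V\rangle\omega^{g})$, the second summand evaluates to $-\tfrac{1}{96\pi^{2}}\int_{M}\tilde{C}_{H}$, and adding it to the first gives the claim. The main obstacle will be the numerical bookkeeping in this last paragraph: one must chase signs and combinatorial factors through the Clifford decomposition of $T\cdot V+V_{\rfloor}T$ into pure-degree pieces and through the normalisation of $\mathrm{tr}_{\Sigma}(\omega^{g}c(\omega^{g}))$, ensuring that the coefficients $\tfrac{3}{2}$ and $9$ appearing in $E$ combine with the overall factor $\tfrac{1}{2}\cdot\tfrac{1}{8\pi^{2}}$ to reproduce exactly the $18$-coefficients of $\tilde{C}_{H}$. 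A minor subsidiary point, already noted just above Theorem 2.1, is that the Cartan-type torsion component $S$ drops out automatically because it never enters $D_T$ and therefore not $(D_T^{*}D_T)^{-1}$.
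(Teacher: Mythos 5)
This statement is not proved in the paper at all: it is quoted verbatim from Pf\"affle--Stephan \cite{PS1} (see the pointer to Remark 3.2 and Proposition 3.3 there), so there is no in-paper argument to compare with. Your overall strategy is nevertheless the natural and essentially the intended one: split $P^{+}=\frac12(\mathrm{id}+\omega^{g})$, feed the Lichnerowicz decomposition $D_T^{*}D_T=\Delta+E$ into the dimension-four residue/heat-coefficient formula, use Theorem 3.1 for the $\mathrm{id}$-part, and observe that only the Clifford $4$-form components of $E$ survive the trace against $\omega^{g}$, which is precisely how the Holst term $\tilde C_H$ arises.

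The genuine gap is that the one step on which the theorem actually hinges --- the coefficient matching --- is asserted rather than carried out, and the constants you do commit to are inconsistent. First, your residue formula $\mathrm{Wres}(A(\Delta+E)^{-1})=\frac{1}{8\pi^{2}}\int\mathrm{tr}(A(\frac{R}{6}-E))$ is off by a factor $2$ in the normalization the paper is using: taking $A=\mathrm{id}$ and the scalar part of $E$ gives $-\frac{1}{24\pi^{2}}\int\tilde R$, contradicting Theorem 3.1, which you invoke for the first summand; consistency forces $\frac{1}{16\pi^{2}}$ (for the rank-$4$ spinor trace). Second, with the corrected constant the $dT$-part does come out right, namely $\frac12\cdot\frac{1}{16\pi^{2}}\cdot\mathrm{tr}(\omega^{g}\cdot(-\tfrac32 c(dT)))$ integrates to $-\frac{18}{96\pi^{2}}\int dT$, but the cross term does not: if you take Theorem 2.1 at face value and, as you propose, identify the top-degree part of $9(T\cdot V+V_{\rfloor}T)$ with $9\,T\wedge V^{\flat}=9\langle T,\ast V\rangle\omega^{g}$, then $\mathrm{tr}(\omega^{g}\cdot 9\,c(T\wedge V^{\flat}))=36\langle T,\ast V\rangle$ and the second summand becomes $-\frac{1}{96\pi^{2}}\int(18\,dT\pm108\,\langle T,\ast V\rangle\omega^{g})$ rather than $-\frac{1}{96\pi^{2}}\int 18(dT-\langle T,\ast V\rangle\omega^{g})$; with your $\frac{1}{8\pi^{2}}$ it is worse still. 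So either the transcription of the cross term in Theorem 2.1, or the normalization conventions for the torsion form and the inner product $\langle T,\ast V\rangle$ relative to \cite{PS1}, must be reconciled (including the sign) before the ``$18$'' in $\tilde C_H$ can be reproduced. Since that reconciliation is exactly the content of the theorem and your sketch defers it, the proposal as written does not establish the statement.
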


\begin{thm}\cite{PS1}
 Let M be a 6-dimensional compact manifold  without boundary and let $\tilde{R}$ be the scalar curvature
of the modified connection $\tilde{\nabla}$. Then we get the volumes  associated to $P^{+}D_{T}^{*}D_{T}$,
\begin{eqnarray}
Wres(P^{+}(D_{T}^{*}D_{T})^{-1})&=&\frac{11}{1440}\mathcal{X}(M)-\frac{1}{96}p_{1}(M)-\frac{1}{640\pi^{2}}\int_{M}\parallel C^{g}\parallel^{2}dx
  \nonumber\\
  &&-\frac{3}{64\pi^{2}}\int_{M}\Big(\parallel \delta T\parallel^{2}+\parallel d(V)\parallel^{2}\big)dx+ \frac{1}{1152\pi^{2}}\int_{M} \tilde{R}\tilde{C}_{H}.
\end{eqnarray}
where $C^{g}$ denote the Weyl curvature of the Levi-Civita connection,  and $p_{1}(M)$ denote the first Pontryagin class of $M$ and
 $\mathcal{X}(M)$ denote the Euler characteristics of $M$.

\end{thm}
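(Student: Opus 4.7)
The plan is to proceed by reducing the noncommutative residue computation to a Seeley--deWitt coefficient computation via the Lichnerowicz formula, and then to split the chirality projector $P^{+}=\tfrac{1}{2}(\mathrm{id}_{\Sigma}+\omega^{g})$ into its trace and supertrace pieces. First I would apply Theorem 2.1 to write $D_{T}^{*}D_{T}=\Delta+E$ as a generalized Laplacian on the spinor bundle, where $E$ is the endomorphism assembled from $\tfrac{1}{4}R^{g}$, $\tfrac{3}{2}dT\cdot$, $-\tfrac{3}{4}\|T\|^{2}$, the $\mathrm{div}^{g}(V)$ and $|V|^{2}$ pieces, and the mixed $T\cdot V$ and $V_{\rfloor}T$ terms. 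This puts $(D_{T}^{*}D_{T})^{-1}$ into the standard framework for which the Wodzicki residue in dimension $6$ is proportional to the integrated fibrewise trace of the fourth Seeley--deWitt coefficient $a_{4}(x,D_{T}^{*}D_{T})$.

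Next I would decompose
\begin{equation}
\mathrm{Wres}\bigl(P^{+}(D_{T}^{*}D_{T})^{-1}\bigr)=\tfrac{1}{2}\,\mathrm{Wres}\bigl((D_{T}^{*}D_{T})^{-1}\bigr)+\tfrac{1}{2}\,\mathrm{Wres}\bigl(\omega^{g}\cdot(D_{T}^{*}D_{T})^{-1}\bigr).
\end{equation}
The first summand is obtained directly from Theorem 5.2, contributing $\tfrac{11}{1440}\mathcal{X}(M)$, the half of the Weyl-squared integrand, and $-\tfrac{3}{64\pi^{2}}\int_{M}(\|\delta T\|^{2}+\|d(V)\|^{2})$. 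The point to observe is that the Euler-characteristic coefficient and the $\delta T, dV$ terms match the final formula exactly when halved, so all the remaining discrepancy (the Pontryagin piece, the mixed $\tilde{R}\tilde{C}_{H}$ piece, and the correction in the Weyl term from $\tfrac{1}{720}$ to $\tfrac{1}{640}$) must come from the $\omega^{g}$-twisted half.

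For the $\omega^{g}$ half I would use Gilkey's universal formula for $\mathrm{tr}(\omega^{g}\,a_{4}(x,\Delta+E))$. Because $\omega^{g}$ is the chirality element, this trace is a supertrace on $\Sigma M$, and by the standard Clifford-algebra identities on a $6$-manifold it selects exactly those monomials in $R^{g}$, $E$, and their covariant derivatives which are top-degree differential forms. Evaluating these monomials with the explicit $E$ from the Lichnerowicz formula produces: (i) a Pfaffian-type contribution that the local Gauss--Bonnet--Chern recipe identifies with a Pontryagin density (hence $-\tfrac{1}{96}p_{1}(M)$), (ii) a Weyl-squared contribution that combines with the first half to yield the stated coefficient $-\tfrac{1}{640\pi^{2}}$, and (iii) a cross term between the $\tfrac{1}{4}R^{g}$ piece of $E$ and the $\tfrac{3}{2}dT\cdot$ piece paired with $\omega^{g}$, which after using $\tilde{C}_{H}=18(dT-\langle T,*V\rangle\omega^{g})$ collapses to the mixed density $\tilde{R}\tilde{C}_{H}$ with the coefficient $\tfrac{1}{1152\pi^{2}}$.

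The main obstacle will be step (iii): identifying the supertrace of $\omega^{g}$ against the $E$-quadratic and $E$-with-curvature terms in $a_{4}$ and recognising the result as the Holst $4$-form $\tilde{C}_{H}$ rather than as $dT$ alone. This requires careful bookkeeping of which skew-symmetric components of the torsion survive Clifford multiplication with $\omega^{g}$, and of the $\langle T,*V\rangle\omega^{g}$ cancellation built into $\tilde{C}_{H}$; the remaining numerical coefficients are then fixed by the standard Wodzicki normalisation $\mathrm{Wres}=\tfrac{1}{(2\pi)^{6}}\cdot\tfrac{1}{6}\int_{|\xi|=1}\mathrm{tr}[\sigma_{-6}]\,\sigma(\xi)\,dx$ together with the volume $\Omega_{5}$.
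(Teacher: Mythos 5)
First, a point of reference: the paper does not prove this statement at all — it is quoted from Pf\"affle--Stephan \cite{PS1}, and the Seeley--deWitt/heat-kernel strategy you outline (Lichnerowicz formula from Theorem 2.1, Gilkey's universal $a_4$, splitting $P^{+}=\frac{1}{2}(\mathrm{id}+\omega^{g})$ into an untwisted trace plus a chirality-twisted trace) is indeed the method by which such formulas are established in \cite{PS1}. So your overall plan is the right one in spirit, and your reduction of the first summand to one half of Theorem 3.2 is fine for the Euler-characteristic and $\|\delta T\|^{2}+\|dV\|^{2}$ terms.

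There is, however, a genuine gap at your step (ii). The twisted half $\frac{1}{2}\mathrm{Wres}\bigl(\omega^{g}(D_{T}^{*}D_{T})^{-1}\bigr)$ only receives contributions from Clifford traces against the volume element, i.e. from densities built with the Levi-Civita $\epsilon$-tensor: $\mathrm{tr}(\omega^{g}\Omega_{ij}\Omega_{ij})$ yields the Pontryagin density (this, not a ``Pfaffian/Gauss--Bonnet--Chern recipe'' as in your step (i), is the source of the $-\frac{1}{96}p_{1}(M)$ term; the Pfaffian gives the Euler density, which already sits in the untwisted half), and $\mathrm{tr}(\omega^{g}E^{2})$, $\mathrm{tr}(\omega^{g}RE)$, $\mathrm{tr}(\omega^{g}E_{;kk})$ yield the parity-odd torsion terms of $\tilde{R}\tilde{C}_{H}$ type. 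None of these can produce a multiple of $\|C^{g}\|^{2}$, because $\|C^{g}\|^{2}$ is parity-even while every $\epsilon$-contracted curvature polynomial changes sign under orientation reversal. Consequently the discrepancy you try to absorb there — half of Theorem 3.2's Weyl coefficient is $\frac{1}{720\pi^{2}}$, whereas the statement has $\frac{1}{640\pi^{2}}$ — cannot be explained by a ``Weyl correction from the twisted half''; as printed the two quoted coefficients are simply not reconcilable by your decomposition (one of them must be a transcription issue from \cite{PS1}, whose computation is really the four-dimensional $a_{4}$/spectral-action coefficient — note also that $\int_{M}\tilde{R}\tilde{C}_{H}$, with $\tilde{C}_{H}$ a 4-form and $\omega^{g}=e_{1}^{*}\wedge e_{2}^{*}\wedge e_{3}^{*}\wedge e_{4}^{*}$, does not literally make sense over a 6-manifold). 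Finally, your steps (ii)--(iii) remain statements of intent: the substance of the proof is exactly the evaluation of Gilkey's formula against $\omega^{g}$ with the explicit endomorphism $E$ from Theorem 2.1 and the determination of the coefficients $-\frac{1}{96}$ and $\frac{1}{1152\pi^{2}}$, and none of that bookkeeping is carried out.
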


Now we can compute $\Phi$ (see formula (2.13) for definition of $\Phi$), since the sum is taken over $-r-\ell+k+j+|\alpha|=3,
 \ r, \ell\leq-1$, then we have the following five cases:

\textbf{Case $\bar{a})$ I)}: \ $r=-1, \ \ell=-1, \ k=j=0, \ |\alpha|=1$

By (2.13), we get
\begin{eqnarray}
{\rm Case~\bar{a})~I)}&=&-\int_{|\xi'|=1}\int_{-\infty}^{+\infty}\sum_{|\alpha|=1}\text{trace}
        [\partial_{\xi'}^{\alpha}\pi_{\xi_{n}}^{+}P^{+}\sigma_{-1}((D_{T}^{*})^{-1})
       \partial_{x'}^{\alpha}\partial_{\xi_{n}}\sigma_{-1}(D_{T}^{-1})](x_{0})\texttt{d}\xi_{n}\sigma(\xi')\texttt{d}x' \nonumber\\
&=&-\frac{1}{2}\int_{|\xi'|=1}\int_{-\infty}^{+\infty}\sum_{|\alpha|=1}\text{trace}
        [\partial_{\xi'}^{\alpha}\pi_{\xi_{n}}^{+}\sigma_{-1}((D_{T}^{*})^{-1})
       \partial_{x'}^{\alpha}\partial_{\xi_{n}}\sigma_{-1}(D_{T}^{-1})](x_{0})\texttt{d}\xi_{n}\sigma(\xi')\texttt{d}x' \nonumber\\
&&-\frac{1}{2}\int_{|\xi'|=1}\int_{-\infty}^{+\infty}\sum_{|\alpha|=1}\text{trace}
        [\partial_{\xi'}^{\alpha}\pi_{\xi_{n}}^{+}\omega^{g}\sigma_{-1}((D_{T}^{*})^{-1})
       \partial_{x'}^{\alpha}\partial_{\xi_{n}}\sigma_{-1}(D_{T}^{-1})](x_{0})\texttt{d}\xi_{n}\sigma(\xi')\texttt{d}x'\nonumber\\
\end{eqnarray}
By Lemma 3.5, for $j<n$
\begin{equation}
\partial_{x_i}\sigma_{-1}(D_{T}^{-1})(x_0)=\partial_{x_i}\left(\frac{\sqrt{-1}c(\xi)}{|\xi|^2}\right)(x_0)=
\frac{\sqrt{-1}\partial_{x_i}[c(\xi)](x_0)}{|\xi|^2}
-\frac{\sqrt{-1}c(\xi)\partial_{x_i}(|\xi|^2)(x_0)}{|\xi|^4}=0,
\end{equation}
so ${\rm Case~\bar{a})~I)}$ vanishes.

\textbf{Case $\bar{a}$ ) II)}: \ $r=-1, \ \ell=-1, \ k=|\alpha|=0, \ j=1$

From (2.13), we have
\begin{eqnarray}
{\rm Case~\bar{a})~II)}&=&-\frac{1}{2}\int_{|\xi'|=1}\int_{-\infty}^{+\infty}\text{trace}[\partial_{x_{n}}\pi_{\xi_{n}}^{+}
   P^{+}\sigma_{-1}((D_{T}^{*})^{-1})\partial_{\xi_{n}}^{2}\sigma_{-1}(D_{T}^{-1})](x_{0})\texttt{d}\xi_{n}\sigma(\xi')\texttt{d}x' \nonumber\\
&=&-\frac{1}{4}\int_{|\xi'|=1}\int_{-\infty}^{+\infty}\text{trace}[\partial_{x_{n}}\pi_{\xi_{n}}^{+}
\sigma_{-1}((D_{T}^{*})^{-1})\partial_{\xi_{n}}^{2}\sigma_{-1}(D_{T}^{-1})](x_{0})\texttt{d}\xi_{n}\sigma(\xi')\texttt{d}x' \nonumber\\
&&-\frac{1}{4}\int_{|\xi'|=1}\int_{-\infty}^{+\infty}\text{trace}[\partial_{x_{n}}\pi_{\xi_{n}}^{+}
   \omega^{g}\sigma_{-1}((D_{T}^{*})^{-1})\partial_{\xi_{n}}^{2}\sigma_{-1}(D_{T}^{-1})](x_{0})\texttt{d}\xi_{n}\sigma(\xi')\texttt{d}x'\nonumber\\
\end{eqnarray}
Similarly to \textbf{Case a(II)} in \cite{Wa3}, we have
\begin{equation}
-\frac{1}{4}\int_{|\xi'|=1}\int_{-\infty}^{+\infty}\text{trace}[\partial_{x_{n}}\pi_{\xi_{n}}^{+}
\sigma_{-1}((D_{T}^{*})^{-1})\partial_{\xi_{n}}^{2}\sigma_{-1}(D_{T}^{-1})](x_{0})\texttt{d}\xi_{n}\sigma(\xi')\texttt{d}x'
=-\frac{3}{16}\pi h'(0)\Omega_{3}\texttt{d}x'.
\end{equation}
On the other hand, similarly to (2.2.18) in \cite{Wa3}, we have
\begin{eqnarray}
&&\partial_{x_{n}}\pi_{\xi_{n}}^{+} \omega^{g}\sigma_{-1}((D_{T}^{*})^{-1})(x_{0})|_{|\xi'|=1} \nonumber\\
&=&c(\tilde{e}_{1})c(\tilde{e}_{2})c(\tilde{e}_{3})c(\texttt{d}x_{n})\Big[\frac{\partial_{x_{n}}[c(\xi')](x_{0})}{2(\xi_{n}-i)}
          +\frac{(2i-\xi_{n})h'(0)c(\xi')}{4(\xi_{n}-i)^{2}}-\frac{h'(0)c(\texttt{d}x_{n})}{4(\xi_{n}-i)^{2}}\Big]
\end{eqnarray}
and
\begin{eqnarray}
\partial^2_{\xi_n}\sigma_{-1}(D_{T}^{-1})&=&\sqrt{-1}\Big(-\frac{6\xi_nc(dx_n)+2c(\xi')}
{|\xi|^4}+\frac{8\xi_n^2c(\xi)}{|\xi|^6}\Big) \nonumber\\
&=&\frac{6i\xi_n^{2}-2i}{(1+\xi_n^{2})^3}c(\xi')+\frac{2i\xi_n^{3}-6i\xi_n}{(1+\xi_n^{2})^3}c(dx_n).
\end{eqnarray}
By the relation of the Clifford action and $\texttt{tr}AB=\texttt{tr}BA$, considering for $i<n$ ,
$\int_{|\xi'|=1}\{\xi_{i_1}\xi_{i_2}\cdots\xi_{i_{2d+1}}\}\sigma(\xi')=0$, then
\begin{equation}
\texttt{tr}[c(\tilde{e}_{1})c(\tilde{e}_{2})c(\tilde{e}_{3})c(\xi')]=0, \ \
\texttt{tr}[c(\tilde{e}_{1})c(\tilde{e}_{2})c(\tilde{e}_{3})c(\texttt{d}x_{n})\partial_{x_{n}}[c(\xi')]c(\xi')]=0.
\end{equation}
Hence in this case,
\begin{equation}
\text{trace}[\partial_{\xi'}^{\alpha}\pi_{\xi_{n}}^{+} \omega^{g}\sigma_{-1}((D_{T}^{*})^{-1})\partial_{x'}^{\alpha}
\partial_{\xi_{n}}\sigma_{-1}(D_{T}^{-1})](x_{0})
=0.
\end{equation}
Therefore
\begin{equation}
{\rm Case~\bar{a})~II)}=-\frac{3}{16}\pi h'(0)\Omega_{3}\texttt{d}x',
\end{equation}
where $\Omega_{3}$ is the canonical volume of $S^{3}.$

\textbf{Case $\bar{a}$) III)}: \ $r=-1, \ \ell=-1, \ j=|\alpha|=0, \ k=1$

From (2.13), we have
 \begin{eqnarray}
{\rm Case~\bar{a})~III)}&=&-\frac{1}{2}\int_{|\xi'|=1}\int_{-\infty}^{+\infty}\text{trace}[\partial_{\xi_{n}}\pi_{\xi_{n}}^{+}
       P^{+}\sigma_{-1}((D_{T}^{*})^{-1})\partial_{\xi_{n}}\partial_{x_{n}}\sigma_{-1}(D_{T}^{-1})](x_{0})\texttt{d}\xi_{n}\sigma(\xi')
       \texttt{d}x'  \nonumber\\
&=&-\frac{1}{4}\int_{|\xi'|=1}\int_{-\infty}^{+\infty}\text{trace}[\partial_{\xi_{n}}\pi_{\xi_{n}}^{+}
       \sigma_{-1}((D_{T}^{*})^{-1})\partial_{\xi_{n}}\partial_{x_{n}}\sigma_{-1}(D_{T}^{-1})](x_{0})\texttt{d}\xi_{n}\sigma(\xi')\texttt{d}x'
        \nonumber\\
&&-\frac{1}{4}\int_{|\xi'|=1}\int_{-\infty}^{+\infty}\text{trace}[\partial_{\xi_{n}}\pi_{\xi_{n}}^{+}
      \omega^{g} \sigma_{-1}((D_{T}^{*})^{-1})\partial_{\xi_{n}}\partial_{x_{n}}\sigma_{-1}(D_{T}^{-1})](x_{0})\texttt{d}\xi_{n}\sigma(\xi')
      \texttt{d}x'.\nonumber\\
\end{eqnarray}
 Similarly to \textbf{Case a(III)} in \cite{Wa3}, we have
\begin{equation}
-\frac{1}{4}\int_{|\xi'|=1}\int_{-\infty}^{+\infty}\text{trace}[\partial_{\xi_{n}}\pi_{\xi_{n}}^{+}
       \sigma_{-1}((D_{T}^{*})^{-1})\partial_{\xi_{n}}\partial_{x_{n}}\sigma_{-1}(D_{T}^{-1})](x_{0})\texttt{d}\xi_{n}\sigma(\xi')\texttt{d}x'
       =\frac{3}{16}\pi h'(0)\Omega_{3}\texttt{d}x'.
\end{equation}
 Similarly to (2.2.27) in \cite{Wa3}, we have
\begin{equation}
\partial_{\xi_{n}}\pi_{\xi_{n}}^{+} \omega^{g}\sigma_{-1}((D^{*})^{-1})(x_{0})|_{|\xi'|=1}
=-c(\tilde{e}_{1})c(\tilde{e}_{2})c(\tilde{e}_{3})c(\texttt{d}x_{n})\Big[\frac{c(\xi')+ic(\texttt{d}x_{n})}{2(\xi_{n}-i)^{2}}\Big],
\end{equation}
and
\begin{equation}
\partial_{\xi_{n}}\partial_{x_{n}}\sigma_{-1}(D^{-1})(x_{0})|_{|\xi'|=1}
=-\sqrt{-1}h'(0)\Big[\frac{c(\texttt{d}x_{n})}{|\xi|^{4}}-4\xi_{n}\frac{c(\xi')+\xi_{n}c(\texttt{d}x_{n})}{|\xi|^{6}}\Big]
-\frac{2\sqrt{-1}\xi_{n}\partial_{x_{n}}c(\xi')(x_{0})}{|\xi|^{4}}.
\end{equation}
Similarly to (6.11), we have
\begin{equation}
\text{trace}[\partial_{\xi_{n}}\pi_{\xi_{n}}^{+} \omega^{g}\sigma_{-1}(D_{T}^{-1})\partial_{\xi_{n}}\partial_{x_{n}}\sigma_{-1}(D_{T}^{-1})](x_{0})
=0.
\end{equation}
Then
\begin{equation}
{\rm Case~\bar{a})~III)}=\frac{3}{16}\pi h'(0)\Omega_{3}\texttt{d}x',
\end{equation}
where $\Omega_{3}$ is the canonical volume of $S^{3}.$

Thus the sum of ${\rm Case~\bar{a})~II)}$ and ${\rm Case~\bar{a})~III)}$ is zero.

\textbf{Case $\bar{b}$)}: \ $r=-2, \ \ell=-1, \ k=j=|\alpha|=0$

By (2.13), we get
\begin{eqnarray}
{\rm Case~\bar{b})}&=&-i\int_{|\xi'|=1}\int_{-\infty}^{+\infty}\text{trace}[\pi_{\xi_{n}}^{+}P^{+}\sigma_{-2}((D_{T}^{*})^{-1})
                 \partial_{\xi_{n}}\sigma_{-1}(D_{T}^{-1})](x_{0})\texttt{d}\xi_{n}\sigma(\xi')\texttt{d}x' \nonumber\\
&=&-\frac{i}{2}\int_{|\xi'|=1}\int_{-\infty}^{+\infty}\text{trace}[\pi_{\xi_{n}}^{+}\sigma_{-2}((D_{T}^{*})^{-1})
                 \partial_{\xi_{n}}\sigma_{-1}(D_{T}^{-1})](x_{0})\texttt{d}\xi_{n}\sigma(\xi')\texttt{d}x'\nonumber\\
&&-\frac{i}{2}\int_{|\xi'|=1}\int_{-\infty}^{+\infty}\text{trace}[\pi_{\xi_{n}}^{+}\omega^{g}\sigma_{-2}((D_{T}^{*})^{-1})
                 \partial_{\xi_{n}}\sigma_{-1}(D_{T}^{-1})](x_{0})\texttt{d}\xi_{n}\sigma(\xi')\texttt{d}x'.
\end{eqnarray}
 Similarly to (3.51), we have
\begin{equation}
-\frac{i}{2}\int_{|\xi'|=1}\int_{-\infty}^{+\infty}\text{trace}[\pi_{\xi_{n}}^{+}\sigma_{-2}((D_{T}^{*})^{-1})
                 \partial_{\xi_{n}}\sigma_{-1}(D_{T}^{-1})](x_{0})\texttt{d}\xi_{n}\sigma(\xi')\texttt{d}x'
=\Big[\frac{9}{16}h'(0)-\frac{1}{4}\sum_{i}A_{iin}\Big]\pi \Omega_{3}\texttt{d}x'.
\end{equation}
By Lemma 3.4 and Lemma 3.5, we have
\begin{eqnarray}
\sigma_{-2}((D_{T}^{*})^{-1})(x_{0})&=&\frac{c(\xi)\sigma_{0}(D_{T}^{*})(x_{0})c(\xi)}{|\xi|^{4}}+\frac{c(\xi)}{|\xi|^{6}}\sum_{j}c(dx_{j})
                    \Big[\partial_{x_{j}}(c(\xi))|\xi|^{2}-c(\xi)\partial_{x_{j}}(|\xi|^{2})\Big](x_{0})\nonumber\\
                   &=&\frac{c(\xi)\sigma_{0}(D_{T}^{*})(x_{0})c(\xi)}{|\xi|^{4}}+
                  + \frac{c(\xi)}{|\xi|^{6}}c(dx_{n})\Big[\partial x_{n}(c(\xi'))(x_{0})-c(\xi)h'(0)|\xi'|^{2}_{g^{\partial M}}\Big].
\end{eqnarray}
Hence
\begin{equation}
\pi_{\xi_{n}}^{+} \omega^{g}\sigma_{-2}((D^{*})^{-1})(x_{0}):=\tilde{A}_{1}+\tilde{A}_{2},
\end{equation}
where
\begin{eqnarray}
\tilde{A}_{1}&=&c(\tilde{e}_{1})c(\tilde{e}_{2})c(\tilde{e}_{3})c(\texttt{d}x_{n})\Big\{
-\frac{h'(0)}{2}\left[\frac{c(dx_n)}{4i(\xi_n-i)}+\frac{c(dx_n)-ic(\xi')}{8(\xi_n-i)^2}
+\frac{3\xi_n-7i}{8(\xi_n-i)^3}[ic(\xi')-c(dx_n)]\right]  \nonumber\\
&&+\frac{-1}{4(\xi_n-i)^2}\Big[(2+i\xi_n)c(\xi')\alpha_0c(\xi')+i\xi_nc(dx_n)\alpha_0c(dx_n)+(2+i\xi_n)c(\xi')c(dx_n)\partial_{x_n}c(\xi')\nonumber\\
&&~~~~+ic(dx_n)\alpha_0c(\xi')+ic(\xi')\alpha_0c(dx_n)-i\partial_{x_n}c(\xi')\Big]\Big\};\\
\tilde{A}_{2}&=&c(\tilde{e}_{1})c(\tilde{e}_{2})c(\tilde{e}_{3})c(\texttt{d}x_{n})\Big\{\frac{-1}{4(\xi_n-i)^2}\Big[(2+i\xi_n)c(\xi')\beta_0c(\xi')
                +i\xi_nc(dx_n)\beta_0c(dx_n)+ic(dx_n)\beta_0c(\xi')\nonumber\\
 &&+ic(\xi')\beta_0c(dx_n)\Big]\Big\}
\end{eqnarray}
and
\begin{eqnarray}
\alpha_0&=&-\frac{3}{4}h'(0)c(dx_n),\\
\beta_0&=&\frac{1}{4}\sum_{i\neq s\neq t}A_{ist}
  c(\widetilde{e_{i}})c(\widetilde{e_{s}})c(\widetilde{e_{t}}) -\frac{1}{4}\sum_{i, s, t}\Big[-A_{iit}c(\widetilde{e_{t}})
  +A_{isi}c(\widetilde{e_{s}}) -A_{iss}c(\widetilde{e_{i}})+2A_{iii}c(\widetilde{e_{i}})\Big].
\end{eqnarray}
On the other hand, a simple computation shows
\begin{equation}
\partial_{\xi_{n}}\sigma_{-1}(D_{T}^{-1})=\frac{-2 i \xi_{n}c(\xi')}{(1+\xi_{n}^{2})^{2}}+\frac{i(1- \xi_{n}^{2})c(\texttt{d}x_{n})}
{(1+\xi_{n}^{2})^{2}}.
\end{equation}
From (6.23) and (6.27), we obtain
\begin{equation}
-\frac{i}{2}\int_{|\xi'|=1}\int_{-\infty}^{+\infty}\text{trace}[\tilde{A}_{1}\times
                 \partial_{\xi_{n}}\sigma_{-1}(D_{T}^{-1})](x_{0})\texttt{d}\xi_{n}\sigma(\xi')\texttt{d}x'
                 =0.
\end{equation}
Combining (6.24) and (6.27), we have
\begin{equation}
\text{trace}[\tilde{A}_{2}\times\partial_{\xi_{n}}\sigma_{-1}(D_{T}^{-1})](x_{0})
   =\frac{-i \tilde{c}_0}{2(\xi_{n}-i)^{2}(\xi_{n}+i)^{2}},
\end{equation}
where
\begin{equation}
\tilde{c}_0= 2(A_{123}-A_{213}+A_{312}).
\end{equation}
From (6.19) and (6.29), we get
\begin{eqnarray}
&&-\frac{i}{2}\int_{|\xi'|=1}\int_{-\infty}^{+\infty}\text{trace}[\tilde{A}_{2}\times
                 \partial_{\xi_{n}}\sigma_{-1}(D_{T}^{-1})](x_{0})\texttt{d}\xi_{n}\sigma(\xi')\texttt{d}x'\nonumber\\
&=&-\frac{i}{2}\Omega_{3}\int_{\Gamma^{+}}\frac{-i \tilde{c}_0}{2(\xi_{n}-i)^{2}(\xi_{n}+i)^{2}}\texttt{d}\xi_{n}\texttt{d}x' \nonumber\\
&=&-\frac{i}{2} \frac{2\pi i }{1!}\tilde{c}_0\Omega_{3}\Big[\frac{-i }{2(\xi_{n}+i)^{2}}\Big]^{(1)}|_{\xi_{n}=i}\texttt{d}x'  \nonumber\\
&=&-\frac{1}{8}\pi\tilde{c}_0\Omega_{3}\texttt{d}x'.
\end{eqnarray}
Combining (6.20) and (6.31) , we have
\begin{equation}
{\rm Case~\bar{b})}=\Big[\frac{9}{16}h'(0)-\frac{1}{4}(A_{123}-A_{213}+A_{312})
-\frac{1}{4}\sum_{i}A_{iin}\Big]\pi \Omega_{3}\texttt{d}x'.
\end{equation}

\textbf{Case $\bar{c}$}: \ $r=-1, \ \ell=-2, \ k=j=|\alpha|=0$

From (2.13), we have
 \begin{eqnarray}
{\rm Case~\bar{c})}&=&-i\int_{|\xi'|=1}\int_{-\infty}^{+\infty}\text{trace}[\pi_{\xi_{n}}^{+}P^{+}\sigma_{-1}((D_{T}^{*})^{-1})
 \partial_{\xi_{n}}\sigma_{-2}(D_{T}^{-1})](x_{0})\texttt{d}\xi_{n}\sigma(\xi')\texttt{d}x'  \nonumber\\
&=&-\frac{i}{2}\int_{|\xi'|=1}\int_{-\infty}^{+\infty}\text{trace}[\pi_{\xi_{n}}^{+}\sigma_{-1}((D_{T}^{*})^{-1})
 \partial_{\xi_{n}}\sigma_{-2}(D_{T}^{-1})](x_{0})\texttt{d}\xi_{n}\sigma(\xi')\texttt{d}x'  \nonumber\\
&&-\frac{i}{2}\int_{|\xi'|=1}\int_{-\infty}^{+\infty}\text{trace}[\pi_{\xi_{n}}^{+}\omega^{g}\sigma_{-1}((D_{T}^{*})^{-1})
 \partial_{\xi_{n}}\sigma_{-2}(D_{T}^{-1})](x_{0})\texttt{d}\xi_{n}\sigma(\xi')\texttt{d}x'.
\end{eqnarray}
 Similarly to (3.63), we have
\begin{equation}
-\frac{i}{2}\int_{|\xi'|=1}\int_{-\infty}^{+\infty}\text{trace}[\pi_{\xi_{n}}^{+}\sigma_{-1}((D_{T}^{*})^{-1})
 \partial_{\xi_{n}}\sigma_{-2}(D_{T}^{-1})](x_{0})\texttt{d}\xi_{n}\sigma(\xi')\texttt{d}x'
=\Big[-\frac{9}{16}h'(0)-\frac{1}{4}\sum_{i}A_{iin}\Big]\pi \Omega_{3}\texttt{d}x'.
\end{equation}
By Lemma 3.5, we have
\begin{equation}
\pi_{\xi_{n}}^{+} \omega^{g}\sigma_{-1}((D_{T}^{*})^{-1})=\Big(c(\tilde{e}_{1})c(\tilde{e}_{2})c(\tilde{e}_{3})c(\texttt{d}x_{n}) \Big)
       \frac{c(\xi')+ic(\texttt{d}x_{n})}{2(\xi_{n}-i)}.
\end{equation}
By Lemma 3.4 and Lemma 3.5, we have
\begin{eqnarray}
\sigma_{-2}((D_{T})^{-1})(x_{0})&=&\frac{c(\xi)\sigma_{0}(D_{T})(x_{0})c(\xi)}{|\xi|^{4}}+\frac{c(\xi)}{|\xi|^{6}}\sum_{j}c(dx_{j})
                    \Big[\partial_{x_{j}}(c(\xi))|\xi|^{2}-c(\xi)\partial_{x_{j}}(|\xi|^{2})\Big](x_{0})\nonumber\\
                   &=&\frac{c(\xi)\sigma_{0}(D_{T})(x_{0})c(\xi)}{|\xi|^{4}}+
                  + \frac{c(\xi)}{|\xi|^{6}}c(dx_{n})\Big[\partial x_{n}(c(\xi'))(x_{0})-c(\xi)h_{x_{n}}'(0)|\xi'|^{2}_{g^{\partial M}}\Big].
\end{eqnarray}
Then
\begin{equation}
 \partial_{\xi_{n}}\sigma_{-2}(D_{T}^{-1})(x_{0}):=B_{1}+B_{2},
\end{equation}
where
\begin{eqnarray}
B_{1}&=&\frac{1}{(1+\xi_n^2)^3}\Big[(2\xi_n-2\xi_n^3)c(dx_n)\alpha_0c(dx_n)+(1-3\xi_n^2)c(dx_n)\alpha_0c(\xi')\nonumber\\
    &&+ (1-3\xi_n^2)c(\xi')\alpha_0c(dx_n)-4\xi_nc(\xi')\alpha_0 c(\xi')+(3\xi_n^2-1)\partial_{x_n}c(\xi')
    -4\xi_nc(\xi')c(dx_n)\partial_{x_n}c(\xi')\nonumber\\
    &&+2h'(0)c(\xi')+2h'(0)\xi_nc(dx_n)\Big]
    +6\xi_nh'(0)\frac{c(\xi)c(dx_n)c(\xi)}{(1+\xi^2_n)^4},\\
B_{2}&=&\frac{1}{(1+\xi_n^2)^3}\Big[(2\xi_n-2\xi_n^3)c(dx_n)\beta_1c(dx_n)+(1-3\xi_n^2)c(dx_n)\beta_1c(\xi')\nonumber\\
    &&+(1-3\xi_n^2)c(\xi')\beta_1c(dx_n)-4\xi_nc(\xi')\beta_1c(\xi')\Big]
\end{eqnarray}
and
\begin{eqnarray}
\beta_1&=&\frac{1}{4}\sum_{i\neq s\neq t}A_{ist}
  c(\widetilde{e_{i}})c(\widetilde{e_{s}})c(\widetilde{e_{t}})+\frac{1}{4}\sum_{i, s, t}[-A_{iit}c(\widetilde{e_{t}})
  +A_{isi}c(\widetilde{e_{s}}) -A_{iss}c(\widetilde{e_{i}})+2A_{iii}c(\widetilde{e_{i}})].
\end{eqnarray}
Similar to (6.28), we have
\begin{equation}
-\frac{i}{2}\int_{|\xi'|=1}\int_{-\infty}^{+\infty}\text{trace}[\pi_{\xi_{n}}^{+}\sigma_{-1} \omega^{g}((D_{T}^{*})^{-1})\times B_{1}]
               (x_{0})\texttt{d}\xi_{n}\sigma(\xi')\texttt{d}x'
                 =0.
\end{equation}
From (6.35) and (6.39), we obtain
\begin{equation}
\text{trace}[\pi_{\xi_{n}}^{+}\sigma_{-1} \omega^{g}((D_{T}^{*})^{-1})\times B_{2}](x_{0})
=\frac{i \tilde{c}_0}{(\xi_{n}+i)^{3}(\xi_{n}-i)},
\end{equation}
where
\begin{equation}
\tilde{c}_0= 2(A_{123}-A_{213}+A_{312}).
\end{equation}
By (6.33) and (6.42), we get
\begin{eqnarray}
&&-\frac{i}{2}\int_{|\xi'|=1}\int_{-\infty}^{+\infty}\text{trace}[\pi_{\xi_{n}}^{+}\sigma_{-1} \omega^{g}((D_{T}^{*})^{-1})\times B_{2}]
               (x_{0})\texttt{d}\xi_{n}\sigma(\xi')\texttt{d}x'\nonumber\\
&=&-\frac{i}{2}\Omega_{3}\int_{\Gamma^{+}}\frac{i \tilde{c}_0}{(\xi_{n}+i)^{3}(\xi_{n}-i)}\texttt{d}\xi_{n}\texttt{d}x' \nonumber\\
&=&-\frac{i}{2} 2\pi i \tilde{c}_0 \Omega_{3}[\frac{i }{(\xi_{n}+i)^{3}}]^{(0)}|_{\xi_{n}=i}\texttt{d}x'  \nonumber\\
&=&-\frac{1}{8}\pi\tilde{c}_0\Omega_{3}\texttt{d}x',
\end{eqnarray}
From (6.34) and (6.44)  we obtain
\begin{equation}
{\rm Case~\bar{c})}=\Big[-\frac{9}{16}h'(0)-\frac{1}{4}(A_{123}-A_{213}+A_{312})
-\frac{1}{4}\sum_{i}A_{iin}\Big]\pi \Omega_{3}\texttt{d}x'.
\end{equation}

Now $\Phi$  is the sum of the ${\rm Case~\bar{a})~\bar{b})~ and ~\bar{c})}$, so
\begin{equation}
\Phi=-\frac{1}{2}\Big[(A_{123}-A_{213}+A_{312})
+\sum_{i}A_{iin}\Big]\pi \Omega_{3}\texttt{d}x'.
\end{equation}
Hence we conclude that
\begin{thm}
 Let M be a 4-dimensional compact manifold  with the boundary $\partial M$  and let $\tilde{R}$ be the scalar curvature
of the modified connection $\tilde{\nabla}$. Then we get the volumes  associated to $P^{+}D^{*}D$,
 \begin{equation}
\widetilde{Vol}_{4}^{(1, 1)}=-\frac{1}{96\pi^{2}}\int_{M}(\tilde{R}\omega^{g}+\tilde{C}_{H})
  -\frac{1}{2}\int_{\partial_{M}}\Big[(A_{123}-A_{213}+A_{312})
+\sum_{i}A_{iin}\Big]\pi \Omega_{3}\texttt{d}x'.
\end{equation}
where $\tilde{R}=R+18div(V)-54|V|^{2}-9\parallel T\parallel^{2}$ be the scalar curvature of the modified connection $\tilde{\nabla}$
and $\tilde{C}_{H}=18(dT-\langle T, *V\rangle\omega^{g})$.
\end{thm}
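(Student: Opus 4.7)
The plan is to apply the Boutet de Monvel decomposition (2.12) to $\widetilde{Vol}_4^{(1,1)}=\widetilde{Wres}[\pi^+(D_T^*)^{-1}\circ \pi^+D_T^{-1}]$ with the projector $P^+$ inserted, splitting it as an interior volume integral plus the boundary form $\int_{\partial M}\Phi$. For the interior piece, the key observation is that $[\sigma_{-4}(P^+(D_T^*)^{-1}\circ D_T^{-1})]|_M$ is a purely local expression depending only on the jet of the metric and of the torsion tensor at a point, so it coincides with the symbol expression used by Pf\"affle--Stephan on a closed manifold. Therefore Theorem 6.1 applies verbatim and produces the term $-\frac{1}{96\pi^2}\int_M(\tilde R\omega^g+\tilde C_H)$.

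For the boundary term, I would fix $x_0\in\partial M$, work in the boundary normal coordinates and frame set up before Lemma 3.5, and enumerate the contributions to $\Phi$ indexed by $-r-\ell+k+j+|\alpha|=3$ with $r,\ell\le -1$. Exactly as in Section~3 this splits into \textbf{Case $\bar a$)~I), II), III)}, \textbf{Case $\bar b$)} and \textbf{Case $\bar c$)}, but now with an extra factor $P^+=\tfrac12(\mathrm{id}+\omega^g)$ on the leftmost symbol. I would use the linearity of $\pi^+_{\xi_n}$ to write each case as a sum of the ``old'' contribution (already evaluated in Section~3 and giving the $h'(0)$ and $\sum_i A_{iin}$ terms) plus a new ``$\omega^g$-twisted'' contribution.

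For the $\omega^g$-twisted pieces I would apply Lemma~3.5 to compute $\pi^+_{\xi_n}\omega^g\sigma_{-1}((D_T^*)^{-1})$ and $\partial_{\xi_n}\pi^+_{\xi_n}\omega^g\sigma_{-1}((D_T^*)^{-1})$, decompose $\omega^g\sigma_{-2}((D_T^*)^{-1})$ and $\sigma_{-2}(D_T^{-1})$ as $\tilde A_1+\tilde A_2$ and $B_1+B_2$ respectively (isolating the Levi-Civita/metric part from the torsion part $\beta_0,\beta_1$), and apply $\mathrm{tr}AB=\mathrm{tr}BA$ together with the basic Clifford trace identities $\mathrm{tr}[c(\tilde e_1)c(\tilde e_2)c(\tilde e_3)c(\xi')]=0$ and $\int_{|\xi'|=1}\xi_{i_1}\cdots\xi_{i_{2d+1}}\sigma(\xi')=0$. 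The latter kills all $\tilde A_1,B_1$ contributions, and reduces the traces against $\tilde A_2, B_2$ to a single scalar $\tilde c_0=2(A_{123}-A_{213}+A_{312})$ via the computation $\mathrm{tr}[\omega^g\,c(\tilde e_i)c(\tilde e_s)c(\tilde e_t)]\neq 0$ only for $\{i,s,t\}=\{1,2,3\}$. Then standard contour integration over $\Gamma^+$ evaluates each Cauchy-type integral.

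Summing Cases $\bar a$), $\bar b$), $\bar c$), the II)--III) cancellation from Section~3 survives, and the two copies of $\pm\tfrac{9}{16}h'(0)$ from the non-twisted parts of Cases $\bar b$) and $\bar c$) cancel, leaving $\Phi=-\tfrac12\bigl[(A_{123}-A_{213}+A_{312})+\sum_i A_{iin}\bigr]\pi\Omega_3\,dx'$. Combining with the interior term yields the claimed formula. The main obstacle I anticipate is the trace bookkeeping for the $\omega^g$-twisted part of Cases $\bar b$) and $\bar c$): one must correctly track which components of $\beta_0,\beta_1$ survive the multiplication by $c(\tilde e_1)c(\tilde e_2)c(\tilde e_3)c(dx_n)$ and an additional $c(\xi')$ or $c(dx_n)$ factor, since only the totally skew piece $A_{123}-A_{213}+A_{312}$ leaves a nonzero trace; everything else must vanish by parity, by the identity $\omega^g c(\xi')c(dx_n)c(\xi')=\omega^g|\xi'|^2 c(dx_n)$ and by the symmetric integration on $S^2\subset\{|\xi'|=1\}$.
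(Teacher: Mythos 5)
Your proposal follows essentially the same route as the paper: the interior term is taken verbatim from the Pf\"affle--Stephan closed-manifold result (Theorem 6.1), and the boundary form $\Phi$ is computed case by case with $P^+=\tfrac12(\mathrm{id}+\omega^g)$ split into the Section~3 contributions (halved) plus the $\omega^g$-twisted pieces, which are handled exactly as in the paper via the decompositions $\tilde A_1+\tilde A_2$, $B_1+B_2$, the Clifford trace identities, odd-degree integration over $|\xi'|=1$, and residue evaluation, yielding $\tilde c_0=2(A_{123}-A_{213}+A_{312})$ and the stated cancellations. This matches the paper's proof, so no further comparison is needed.
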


\section{ The Kastler-Kalau-Walze type theorem for 4-dimensional complex manifolds  associated with complex nonminimal operators  }

In this section, we compute the lower dimension volume for lower dimension compact connected manifolds with boundary and get a
Kastler-Kalau-Walze type Formula in this case.

Let $M$ be a compact Riemannian manifold of dimension $m$ without boundary. If $M$ is
equipped with integrable complex structure, one can split tangential indices into holomorphic and
antiholomorphic ones and define space of differential forms  $C^{\infty}(\Lambda^{p,q})$. The exterior differential $d$
can be also split into a sum $d=\partial+\bar{\partial}$ of anticommuting nilpotent operators:
$\partial^{2}=\bar{\partial}^{2}=\partial\bar{\partial}+\bar{\partial}\partial=0$.
If M is a K$\ddot{a}$hler manifold, the corresponding ``Laplacians" can be reduced to the De-Rham Hodge Laplacian:
 \begin{equation}
\partial^{*}\partial+\partial\partial^{*}=\overline{\partial\partial^{*}}+\overline{\partial^{*}\partial}=\frac{1}{2}\Delta=
\frac{1}{2}(d\delta+\delta d).
\end{equation}
Using these first order differential operators one can construct a  nonminimal second order
differential operator:
 \begin{equation}
\mathfrak{D}=g_{1}\partial\partial^{*} +g_{2}\partial^{*}\partial+g_{3}\overline{\partial\partial^{*}}+g_{4}\overline{\partial^{*}\partial}
+g_{5}\partial\bar{\partial^{*}}+g_{5}^{*}\bar{\partial}\partial^{*}
\end{equation}
with real constants $g_{1},\cdots, g_{4}$ and a complex constant $g_{5}$. For some values of the constants this
operator reduces to that considered previously in this paper. One can find some motivations for studying nonminimal operators.
Such operators appear naturally in quantum gauge theories after imposing gauge conditions.

For complex manifold of dimension $4$ with boundary. By Proposition 2.1 in \cite{JMB}, let $D_{T}=\sqrt{2}(\bar{\partial}+\bar{\partial^{*}})$
we have the identity  of three forms $T=\sqrt{-1}(\partial-\bar{\partial})\omega$, where
$\omega$ be the K$\ddot{a}$hler forms. In this case, we have $A_{iin}=0$. Then
\begin{cor}
 Let M be a 4-dimensional compact complex manifold  with the boundary $\partial M$ and $\tilde{\nabla}$ be an orthogonal
connection with torsion. Then we get the volumes  associated to $D_{T}^{2}$,
 \begin{equation}
Vol_{4}^{(1, 1)}=-\frac{1}{48\pi^{2}}\int_{M}\tilde{R}(x)dx,
\end{equation}
where $\tilde{R}=R+18div(V)-54|V|^{2}-9\parallel T\parallel^{2}$ and $\int_{M}div(V) dVol_{M}=-\int_{\partial_{M}}g(n,V)dVol_{\partial_{M}}$.
\end{cor}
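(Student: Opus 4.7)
The plan is to apply Theorem 3.7 directly and verify that the boundary contribution drops out in the complex setting. By Theorem 3.7 we already have
$$Vol_{4}^{(1,1)} = -\frac{1}{48\pi^{2}}\int_{M}\tilde{R}(x)\,dx - \int_{\partial M}\sum_{i}A_{iin}\,\pi\,\Omega_{3}\,dx',$$
so the entire task reduces to showing that $\sum_{i}A_{iin}$ vanishes identically on $\partial M$ under the present hypotheses; the interior term is already in the desired form.

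Next I would invoke the identification from Proposition 2.1 of \cite{JMB}: when $D_{T}=\sqrt{2}(\bar{\partial}+\bar{\partial}^{*})$ on a Kähler manifold of complex dimension $2$, the associated orthogonal connection has torsion of purely three-form type, given explicitly by $T=\sqrt{-1}(\partial-\bar{\partial})\omega$, while the vectorial component satisfies $V\equiv 0$. Together with the observation after (2.6) that the Cartan-type torsion $S$ does not contribute to $D_{T}$, this means the only nontrivial piece of $A$ in this situation is the totally antisymmetric three-form part.

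The main step is then the following symmetry observation. By (2.3) we have $A_{ist}=\langle A(E_{i},E_{s}),E_{t}\rangle$, and in the present case this coincides up to normalization with the totally antisymmetric tensor $T_{ist}$. Consequently $A_{ist}$ is skew in its first two indices, and setting $s=i$ gives $A_{iin}=-A_{iin}=0$ for every $i$. Summing over $i$ shows that $\sum_{i}A_{iin}\equiv 0$ pointwise on $\partial M$, so the boundary integral in the expression from Theorem 3.7 vanishes and the stated formula follows.

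The main obstacle, and the only substantive point, is justifying the reduction of the tensor $A_{ist}$ to its totally antisymmetric piece in the Dolbeault/Kähler setting; once the cited identification $T=\sqrt{-1}(\partial-\bar{\partial})\omega$ with $V=0$ is in hand, the rest is a one-line index argument. Thus the corollary is a clean specialization of Theorem 3.7, and no further serious computation of the boundary symbols in Section 3 needs to be revisited.
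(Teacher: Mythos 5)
Your proposal is correct and follows essentially the same route as the paper: the corollary is read off from Theorem 3.7, with the boundary term killed because, under the Bismut identification $D_{T}=\sqrt{2}(\bar{\partial}+\bar{\partial}^{*})$ with torsion the three-form $\sqrt{-1}(\partial-\bar{\partial})\omega$, one has $A_{iin}=0$. The only difference is that you spell out why $A_{iin}=0$ (purely totally antisymmetric torsion, vanishing vectorial part), a step the paper merely asserts by citing Proposition 2.1 of \cite{JMB}.
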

Next for the operator $(\bar{\partial}+\bar{\partial^{*}})^{2}$, we consider the heat kernel for nonminimal operators acting on
the space $C^{\infty}(\Lambda^{k})$ of $k$ forms.
Let us discuss the first order operators $D_{1}=\partial$ and $D_{2}=\bar{\partial}$ which satisfy the properties of Lemma 1.
And these operators  will be used to build up the following general nonminimal second order operator
 \begin{equation}
\mathfrak{D}=a^{2} \overline{\partial\partial^{*}}+b^{2}\overline{\partial^{*}\partial}.
\end{equation}
The nonminimal operator with real constants $a^{2},b^{2}$ is the most general hermitian operator on $C^{\infty}(\Lambda^{k})$
which can be constructed using $\partial,\bar{\partial},\partial^{*}$ and $\bar{\partial^{*}}$. This operator has the form (7.2).
Denote by $\sigma_{l}(\mathfrak{D})$ the $l$-order symbol of an operator $\mathfrak{D}$.  We compute the symbol expansion of
 $\mathfrak{D}=a^{2} \overline{\partial\partial^{*}}+b^{2}\overline{\partial^{*}\partial}$.
Recall \cite{Y} , we have

\begin{lem}
The following equalities hold
\begin{eqnarray}
&&\sigma_L(\overline{\partial})(x,\xi)=\frac{\sqrt{-1}}{2}\sum_{1\leq j\leq n}(\xi_j+\sqrt{-1}\xi_{j+n})\varepsilon(e^j-\sqrt{-1}e^{j+n});\\
&&\sigma_L(\overline{\partial}^*)(x,\xi)=-\frac{\sqrt{-1}}{2}\sum_{1\leq j\leq n}(\xi_j-\sqrt{-1}\xi_{j+n})\iota(e^j-\sqrt{-1}e^{j+n});\\
&&\sigma_L(\tilde{\triangle})=\frac{1}{2}|\xi|^2{\rm Id},
\end{eqnarray}
where $\tilde{\triangle}= \overline{\partial\partial^{*}}+\overline{\partial^{*}\partial}$.
\end{lem}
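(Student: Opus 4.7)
The three identities are principal-symbol computations, and I will handle them in the natural order: $\bar\partial$, then $\bar\partial^*$, then the Laplace-type combination. Throughout I work in local real coordinates $(x^1,\ldots,x^n,x^{n+1},\ldots,x^{2n})$ underlying the complex coordinates $z^j=x^j+\sqrt{-1}\,x^{j+n}$, so that $d\bar{z}^j=e^j-\sqrt{-1}\,e^{j+n}$ and $\partial_{\bar{z}^j}=\tfrac12(\partial_{x^j}+\sqrt{-1}\,\partial_{x^{j+n}})$.

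First, for $\sigma_L(\bar\partial)$, I would write $\bar\partial=\sum_j\varepsilon(d\bar{z}^j)\,\partial_{\bar{z}^j}$ and use the standard convention $\sigma(\partial_{x_k})=\sqrt{-1}\,\xi_k$, so that the symbol of $\partial_{\bar{z}^j}$ is $\tfrac{\sqrt{-1}}{2}(\xi_j+\sqrt{-1}\,\xi_{j+n})$. The first formula then drops out directly.

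Second, for $\sigma_L(\bar\partial^*)$, I would use that for a first-order differential operator $P$ the principal symbol of the formal adjoint is the pointwise Hermitian adjoint of $\sigma_L(P)$ on the fibre. On $\Lambda^*T^*M\otimes\mathbb{C}$, equipped with the Hermitian extension of the Riemannian inner product, the adjoint of left exterior multiplication $\varepsilon(\alpha)$ by a complex $1$-form $\alpha$ is the contraction $\iota(\alpha)$ in the paper's convention (so $\iota$ is sesquilinear in $\alpha$), while scalar coefficients get complex-conjugated. Applying this to the formula for $\sigma_L(\bar\partial)$ converts $\tfrac{\sqrt{-1}}{2}(\xi_j+\sqrt{-1}\xi_{j+n})\varepsilon(e^j-\sqrt{-1}e^{j+n})$ into $-\tfrac{\sqrt{-1}}{2}(\xi_j-\sqrt{-1}\xi_{j+n})\iota(e^j-\sqrt{-1}e^{j+n})$, which is exactly the second expression.

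Third, since $\tilde\triangle=\bar\partial\bar\partial^*+\bar\partial^*\bar\partial$ has order $2$, its principal symbol equals $\sigma_L(\bar\partial)\,\sigma_L(\bar\partial^*)+\sigma_L(\bar\partial^*)\,\sigma_L(\bar\partial)$. Writing $\omega_j=\xi_j+\sqrt{-1}\,\xi_{j+n}$ and substituting the first two formulas, a relabelling of summation indices collapses the cross sum to $\tfrac14\sum_{j,k}\omega_j\bar\omega_k\bigl\{\varepsilon(e^j-\sqrt{-1}e^{j+n}),\,\iota(e^k-\sqrt{-1}e^{k+n})\bigr\}$. The anticommutator is the complex analogue of the Clifford-type identity $\{\varepsilon(\alpha),\iota(\beta)\}=\langle\alpha,\beta\rangle$; with the sesquilinear $\iota$ one computes it to be $2\delta_{jk}$, and the double sum reduces to $\tfrac12\sum_j|\omega_j|^2=\tfrac12|\xi|^2$, proving the third identity.

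The main obstacle is purely one of convention: the meaning of $\iota$ applied to a complex $1$-form. Reading $\iota$ as the Hermitian adjoint of $\varepsilon$, i.e.\ sesquilinearly via $\iota(e^j-\sqrt{-1}e^{j+n})=\iota_{e_j}+\sqrt{-1}\,\iota_{e_{j+n}}$, is precisely what makes both the $\sigma_L(\bar\partial^*)$ formula correct and the anticommutator in step three equal to $2\delta_{jk}$ rather than $0$ (as it would be under a naive $\mathbb{C}$-linear extension). Once this convention is aligned with [Y], the three computations are routine.
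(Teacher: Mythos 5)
Your proof is correct. The paper gives no argument for this lemma at all — it is simply recalled from \cite{Y} ("Recall \cite{Y}, we have") — so there is no in-paper proof to compare with; your direct computation (writing $\bar{\partial}=\sum_j\varepsilon(d\bar{z}^j)\partial_{\bar{z}^j}$ so that $\sigma_L(\bar{\partial})(\xi)=\frac{\sqrt{-1}}{2}\sum_j(\xi_j+\sqrt{-1}\xi_{j+n})\varepsilon(e^j-\sqrt{-1}e^{j+n})$, taking the fibrewise Hermitian adjoint for $\bar{\partial}^*$, and using $\{\varepsilon(e^j-\sqrt{-1}e^{j+n}),\iota(e^k-\sqrt{-1}e^{k+n})\}=2\delta_{jk}$ to get $\frac{1}{2}|\xi|^2\,\mathrm{Id}$) is the standard route and checks out. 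You also correctly isolate the one real subtlety, namely that $\iota$ on complex one-forms must be read sesquilinearly as the adjoint of $\varepsilon$ (as in \cite{Y}); with the $\mathbb{C}$-bilinear convention the anticommutator would vanish and the third identity would fail, and the passage from the coordinate to an orthonormal coframe only affects lower-order terms, so it is harmless at the level of principal symbols.
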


For $\xi=\sum_{1\leq i \leq 2n}\xi_ie^i$ , let
$\widehat{\xi}=\sum_{1\leq j\leq n}(\xi_j+\sqrt{-1}\xi_{j+n})(e^j-\sqrt{-1}e^{j+n}).$
For a $(0,1)$-form $\omega_1$, we have
\begin{eqnarray}
\iota^{g_{i}}((\xi_j+\sqrt{-1}\xi_{j+n})(e^j-\sqrt{-1}e^{j+n}))\omega_1
&=&\langle\omega_1,(\xi_j+\sqrt{-1}\xi_{j+n})(e^j-\sqrt{-1}e^{j+n})\rangle \nonumber\\
 &=&(\xi_j-\sqrt{-1}\xi_{j+n})\iota^{g_{i}}(e^j-\sqrt{-1}e^{j+n})\omega_1.
\end{eqnarray}
By Lemma 7.2 and (7.8), we have
\begin{equation}
\sigma_L(\overline{\partial})(\xi)=\frac{\sqrt{-1}}{2}\varepsilon(\widehat{\xi});~
\sigma_L(\overline{\partial}^*)(\xi)=-\frac{\sqrt{-1}}{2}\iota(\widehat{\xi}).
\end{equation}
Then we obtain the  following Lemma.
\begin{lem}
Let $\mathfrak{D}=a^{2} \overline{\partial\partial^{*}}+b^{2}\overline{\partial^{*}\partial}$ on $C^{\infty}(\Lambda^{k})$, then
\begin{equation}
\sigma_{-2}(\mathfrak{D}^{-1})=\frac{2 b^{2}|\xi|^{2}+(a^{2}-b^{2})\iota(\hat{\xi})\varepsilon(\hat{\xi})}{a^{2}b^{2}|\xi|^{4}}.
\end{equation}
\end{lem}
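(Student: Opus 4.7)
The plan is to invert the principal symbol of $\mathfrak{D}$ directly. Since $\sigma_{-2}(\mathfrak{D}^{-1})$ is by definition the inverse of the leading symbol $\sigma_2(\mathfrak{D})$, the work reduces to a purely algebraic computation in $\mathrm{End}(\Lambda^k)$ involving the two operators $P:=\varepsilon(\hat{\xi})\iota(\hat{\xi})$ and $Q:=\iota(\hat{\xi})\varepsilon(\hat{\xi})$.

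First, I would use the formulas (7.9) derived from Lemma 7.2 to read off the principal symbols of the component operators: multiplicativity of leading symbols gives
\begin{equation*}
\sigma_{2}(\overline{\partial\partial^{*}})=\sigma_L(\overline{\partial})\sigma_L(\overline{\partial}^{*})=\tfrac{1}{4}\varepsilon(\hat{\xi})\iota(\hat{\xi})=\tfrac{1}{4}P,\qquad \sigma_{2}(\overline{\partial^{*}\partial})=\tfrac{1}{4}\iota(\hat{\xi})\varepsilon(\hat{\xi})=\tfrac{1}{4}Q,
\end{equation*}
so $\sigma_{2}(\mathfrak{D})=\tfrac{a^{2}}{4}P+\tfrac{b^{2}}{4}Q$. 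Next I would record two algebraic identities for $P,Q$ that do all the work. The anticommutation relation between contraction and exterior multiplication, combined with the fact stated in Lemma 7.2 that $\sigma_{L}(\tilde{\triangle})=\tfrac{1}{2}|\xi|^{2}\mathrm{Id}$ for $\tilde{\triangle}=\overline{\partial\partial^{*}}+\overline{\partial^{*}\partial}$, yields
\begin{equation*}
P+Q=2|\xi|^{2}\,\mathrm{Id}.
\end{equation*}
Combining this with the nilpotency $\varepsilon(\hat{\xi})^{2}=0=\iota(\hat{\xi})^{2}$ one obtains $Q^{2}=\iota(\hat{\xi})\varepsilon(\hat{\xi})\iota(\hat{\xi})\varepsilon(\hat{\xi})=\iota(\hat{\xi})(2|\xi|^{2}-\iota(\hat{\xi})\varepsilon(\hat{\xi}))\varepsilon(\hat{\xi})=2|\xi|^{2}Q$, and symmetrically $P^{2}=2|\xi|^{2}P$.

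Finally I would verify the claimed formula by direct multiplication. Writing the proposed inverse as
\begin{equation*}
R:=\frac{2b^{2}|\xi|^{2}+(a^{2}-b^{2})Q}{a^{2}b^{2}|\xi|^{4}},
\end{equation*}
one eliminates $P$ via $P=2|\xi|^{2}-Q$ to rewrite $\sigma_{2}(\mathfrak{D})=\tfrac{a^{2}|\xi|^{2}}{2}\mathrm{Id}+\tfrac{b^{2}-a^{2}}{4}Q$, and then computes $\sigma_{2}(\mathfrak{D})\cdot R$ using $Q^{2}=2|\xi|^{2}Q$. The constant term gives $a^{2}b^{2}|\xi|^{4}\,\mathrm{Id}$, and the coefficient of $Q$ collapses to zero by the cancellation $\tfrac{(a^{2}-b^{2})^{2}}{2}|\xi|^{2}-\tfrac{(a^{2}-b^{2})^{2}}{2}|\xi|^{2}=0$. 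Dividing by $a^{2}b^{2}|\xi|^{4}$ gives the identity, establishing the formula.

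I expect the only real obstacle to be the identity $P+Q=2|\xi|^{2}$: one must pin down the normalization of the Hermitian pairing used in $\langle\hat{\xi},\hat{\xi}\rangle$, which in the conventions of \cite{Y} follows cleanly from Lemma 7.2(iii) above. Once that factor of $2$ is fixed, the remainder is the mechanical inversion sketched above.
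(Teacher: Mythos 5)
Your proposal is correct and follows essentially the same route as the paper: both read off $\sigma_{2}(\mathfrak{D})$ from Lemma 7.2 and invert it algebraically using the relation $\varepsilon(\hat{\xi})\iota(\hat{\xi})+\iota(\hat{\xi})\varepsilon(\hat{\xi})=2|\xi|^{2}$ together with the nilpotency of $\varepsilon(\hat{\xi})$ and $\iota(\hat{\xi})$. The only difference is bookkeeping: you eliminate $P=\varepsilon(\hat{\xi})\iota(\hat{\xi})$ and use $Q^{2}=2|\xi|^{2}Q$, whereas the paper multiplies $\frac{(a^{2}-b^{2})}{4}P+\frac{b^{2}}{2}|\xi|^{2}$ by the conjugate factor in $Q$ (implicitly using $PQ=0$) to get $\frac{a^{2}b^{2}}{4}|\xi|^{4}$ — the same verification in a slightly different arrangement.
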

\begin{proof}
From Lemma 7.2, we have
$\sigma_{2}(\overline{\partial\partial^{*}})=\frac{1}{4}\varepsilon(\hat{\xi})\iota(\hat{\xi}), \
 \sigma_{2}(\overline{\partial\partial^{*}}+\overline{\partial^{*}\partial})=\frac{1}{2}|\xi|^{2}.$
Combining these results, we obtain
\begin{equation}
 \sigma_{2}(a^{2} \overline{\partial\partial^{*}}+b^{2}\overline{\partial^{*}\partial})
 =(a^{2}-b^{2})\sigma_{2}(\overline{\partial\partial^{*}})+b^{2}\sigma_{2}(\overline{\partial\partial^{*}}+\overline{\partial^{*}\partial})
=\frac{(a^{2}-b^{2})}{4}\varepsilon(\hat{\xi})\iota(\hat{\xi}) +\frac{b^{2}}{2}|\xi|^{2}.
\end{equation}
An application of $2|\xi|^{2}=\varepsilon(\hat{\xi})\iota(\hat{\xi})+\iota(\hat{\xi})\varepsilon(\hat{\xi})$ shows
\begin{equation}
\Big[\frac{(a^{2}-b^{2})}{4}\varepsilon(\hat{\xi})\iota(\hat{\xi}) +\frac{b^{2}}{2}|\xi|^{2}\Big]
\Big[\frac{(a^{2}-b^{2})}{4}\iota(\hat{\xi})\varepsilon(\hat{\xi}) +\frac{b^{2}}{2}|\xi|^{2}\Big]=\frac{a^{2}b^{2}}{4}|\xi|^{4}.
\end{equation}
Then
\begin{equation}
\sigma_{-2}(\mathfrak{D}^{-1})=\frac{2 b^{2}|\xi|^{2}+(a^{2}-b^{2})\iota(\hat{\xi})\varepsilon(\hat{\xi})}{a^{2}b^{2}|\xi|^{4}}.
\end{equation}
\end{proof}

Without loss of generality,  we may assume the differential
operator $\mathfrak{D}=a^{2} \overline{\partial\partial^{*}}+b^{2}\overline{\partial^{*}\partial}$ acting  on $C^{\infty}(\Lambda^{k})$.
An application of Theorem 1 in \cite{SADV}  yields the following:

\begin{thm}\cite{SADV}
For m-dimensional $(m>2)$ compact K$\ddot{a}$hler manifolds without boundary and the associated nonminimal operator
$\mathfrak{D}=a^{2} \overline{\partial\partial^{*}}+b^{2}\overline{\partial^{*}\partial}$ on $C^{\infty}(\Lambda^{k})$, then
\begin{eqnarray}
a_{2}[\mathfrak{D}|_{C^{\infty}(\Lambda^{k})}]
&=&\Big(\frac{a^{2}}{2}\Big)^{2-\frac{m}{2} }\sum_{p=0}^{k-2}(-1)^{k-p}(k-p-1)a_{2}(\Delta_{p})  \nonumber\\
&&-\Big[\Big(\frac{a^{2}}{2}\Big)^{2-\frac{m}{2} }+\Big(\frac{b^{2}}{2}\Big)^{2-\frac{m}{2} }\Big]
 \sum_{p=0}^{k-1}(-1)^{k-p}(k-p)a_{2}(\Delta_{p})  \nonumber\\
&&+\Big(\frac{b^{2}}{2}\Big)^{2-\frac{m}{2} }\sum_{p=0}^{k}(-1)^{k-p}(k-p+1)a_{2}(\Delta_{p}) .
\end{eqnarray}
\end{thm}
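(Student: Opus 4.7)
My plan is to reduce the computation to the heat-trace expansions of the ordinary Hodge Laplacians $\Delta_{p}$ on $p$-forms, exploiting both the Hodge and the K\"ahler structure. The first step is to combine the K\"ahler identity $\overline{\partial\partial^{*}}+\overline{\partial^{*}\partial}=\tfrac12\Delta$ from Lemma 7.2 with the orthogonal Hodge decomposition
\begin{equation*}
C^{\infty}(\Lambda^{k})=\mathcal{H}^{k}\oplus\overline{\partial}\,C^{\infty}(\Lambda^{k-1})\oplus\overline{\partial}^{*}C^{\infty}(\Lambda^{k+1})
\end{equation*}
to split $\mathfrak{D}$ on $\Lambda^{k}$ into three pieces acting on invariant subspaces: on $\overline{\partial}\,C^{\infty}(\Lambda^{k-1})$ it reduces to $a^{2}\overline{\partial\partial^{*}}$, on $\overline{\partial}^{*}C^{\infty}(\Lambda^{k+1})$ it reduces to $b^{2}\overline{\partial^{*}\partial}$, and on the harmonic subspace $\mathcal{H}^{k}$ it vanishes. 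This step uses only the nilpotency $\overline{\partial}^{2}=(\overline{\partial}^{*})^{2}=0$, and is the crucial structural input.

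The second step is to exploit the standard isospectrality between $\overline{\partial^{*}\partial}$ on the nonharmonic part of $\Lambda^{p}$ and $\overline{\partial\partial^{*}}$ on the nonharmonic part of $\Lambda^{p+1}$. This allows me to telescope each of the two nontrivial pieces of $\mathfrak{D}|_{\Lambda^{k}}$ down the ladder of form degrees, expressing their heat traces as alternating combinations of heat traces of $\overline{\partial^{*}\partial}+\overline{\partial\partial^{*}}=\tfrac12\Delta_{p}$ for $p\leq k$. Applying the scaling identity $\mathrm{Tr}(e^{-tcL})=c^{-m/2}\mathrm{Tr}(e^{-(tc)L})$ to the constants $c=a^{2}/2$ and $c=b^{2}/2$ then produces the two prefactors $(a^{2}/2)^{2-m/2}$ and $(b^{2}/2)^{2-m/2}$ that appear in the statement.

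Finally, one extracts the $a_{2}$ coefficient of $\mathfrak{D}|_{\Lambda^{k}}$ by matching powers of $t$ in the small-$t$ asymptotic expansion and collecting the contributions of $a_{2}(\Delta_{p})$ across the two towers. The integer weights $(k-p-1),\ (k-p),\ (k-p+1)$ arise as bookkeeping constants counting how many times the spectrum of $\Delta_{p}$ is picked up, with one count from the $a^{2}$-tower and one from the $b^{2}$-tower, together with the harmonic shift. The main obstacle will be exactly this combinatorial accounting: each $a_{2}(\Delta_{p})$ enters the final expression through both telescopes with slightly different multiplicities, and the three distinct integer coefficients encode the boundary effects at $p=k$, $p=k-1$, and $p\leq k-2$. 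Once the multiplicities are assembled correctly, the formula of Theorem 7.4 follows; this is, in fact, precisely the content of Theorem 1 of \cite{SADV} applied to the K\"ahler setting, and the symbol computation in Lemma 7.3 is not needed for this spectral approach (it would only be required for an independent verification via direct pseudodifferential symbol calculus).
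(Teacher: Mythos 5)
The paper offers no proof of this statement at all: it is quoted directly as Theorem 1 of \cite{SADV} ("An application of Theorem 1 in \cite{SADV} yields..."), so there is no in-paper argument to compare against, and Lemma 7.3 is indeed not used for it. Your Hodge-theoretic route is the natural proof of the cited result, and its skeleton is sound: $\mathfrak{D}$ preserves the $\overline{\partial}$-Hodge splitting $C^{\infty}(\Lambda^{k})=\mathcal{H}^{k}\oplus\overline{\partial}C^{\infty}(\Lambda^{k-1})\oplus\overline{\partial}^{*}C^{\infty}(\Lambda^{k+1})$, equals $0$, $\frac{a^{2}}{2}\Delta$ and $\frac{b^{2}}{2}\Delta$ on the three pieces (this needs the operator identity (7.1), $\overline{\partial\partial^{*}}+\overline{\partial^{*}\partial}=\frac12\Delta$, not merely the symbol statement of Lemma 7.2), and $\overline{\partial}$ intertwines the nonzero spectrum of $\frac12\Delta$ on $\overline{\partial}^{*}C^{\infty}(\Lambda^{p+1})$ with that on $\overline{\partial}C^{\infty}(\Lambda^{p})\subset\Lambda^{p+1}$, so the heat trace of $\mathfrak{D}|_{\Lambda^{k}}$ telescopes into heat traces of $\frac{a^{2}}{2}\Delta_{p}$ ($p\leq k-1$) and $\frac{b^{2}}{2}\Delta_{p}$ ($p\leq k$); the harmonic corrections are $t$-independent, hence only enter $a_{m}$ and not $a_{2}$ because $m>2$ (a point you should state, since it is where that hypothesis is used).

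Two concrete gaps remain. First, your scaling identity $\mathrm{Tr}(e^{-tcL})=c^{-m/2}\mathrm{Tr}(e^{-(tc)L})$ is false as written ($e^{-tcL}$ and $e^{-(tc)L}$ are the same operator); what you need is the effect of scaling on the labelled coefficients, $a_{n}(cL)=c^{(n-m)/2}a_{n}(L)$, which for $n=2$ yields prefactors $(a^{2}/2)^{(2-m)/2}$ and $(b^{2}/2)^{(2-m)/2}$. This agrees with the quoted exponent only if "$2-\frac{m}{2}$" is read as $\frac{2-m}{2}$ (a transcription ambiguity inherited from \cite{SADV}); your claim that the scaling "produces" the exponent exactly as printed is therefore not substantiated and should be replaced by the correct computation together with an explicit statement of the convention being used. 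Second, the combinatorial step you defer is not a "multiplicity count": the telescoping gives each $a_{2}(\Delta_{p})$ with coefficient exactly $(-1)^{k-1-p}$ in the $a^{2}$-tower and $(-1)^{k-p}$ in the $b^{2}$-tower, i.e. $a_{2}[\mathfrak{D}|_{\Lambda^{k}}]=(\frac{a^{2}}{2})^{\frac{2-m}{2}}\sum_{p=0}^{k-1}(-1)^{k-1-p}a_{2}(\Delta_{p})+(\frac{b^{2}}{2})^{\frac{2-m}{2}}\sum_{p=0}^{k}(-1)^{k-p}a_{2}(\Delta_{p})$, and the printed weights $(k-p-1)$, $(k-p)$, $(k-p+1)$ arise purely from an algebraic regrouping of these two alternating sums: the net coefficient of $(\frac{a^{2}}{2})^{\cdot}a_{2}(\Delta_{p})$ in the stated formula is $(-1)^{k-p}\big[(k-p-1)-(k-p)\big]=(-1)^{k-p-1}$ and that of $(\frac{b^{2}}{2})^{\cdot}a_{2}(\Delta_{p})$ is $(-1)^{k-p}\big[(k-p+1)-(k-p)\big]=(-1)^{k-p}$. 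Supplying this identification (rather than a heuristic about how often each spectrum is "picked up") closes the argument; without it the match with the quoted weighted form is not established.
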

From the Theorem 4.8.18 in \cite{PBG} we obtain
\begin{thm}\cite{PBG}
Let the $\Delta^{m}_{p}=d\delta+\delta d $ denote the Laplacian acting on the space of
smooth p-forms on an m-dimensional manifold. We let $R_{ijkl}$ denote the
curvature tensor with the sign convention that $R_{1212}=-1$ on the sphere
of radius 1 in $R^{3}$. Then:
\begin{eqnarray}
&&a_{0}(\Delta^{m}_{p})=(4\pi)^{-\frac{m}{2}}C_{m}^{p}; \\
&&a_{2}(\Delta^{m}_{p})=\frac{(4\pi)^{-\frac{m}{2}}}{6}\Big(C_{m-2}^{p-2} +C_{m-2}^{p} -4C_{m-2}^{p-1} \Big)(-R_{ijij}).
\end{eqnarray}
\end{thm}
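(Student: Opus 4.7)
The plan is to invoke the general Seeley--DeWitt asymptotic expansion for a Laplace-type operator $P = \nabla^{*}\nabla + E$ acting on sections of a vector bundle $V$ over a closed Riemannian $m$-manifold, which yields
\begin{equation*}
a_{0}(P) = (4\pi)^{-m/2}\,\mathrm{rk}(V), \qquad a_{2}(P) = (4\pi)^{-m/2}\,\mathrm{tr}\bigl(\tfrac{R}{6}\,\mathrm{Id}_{V} - E\bigr),
\end{equation*}
where $R$ denotes the scalar curvature. Applied to the Hodge--de Rham Laplacian $\Delta_{p}^{m}$ on $p$-forms, this reduces the problem to two ingredients: (i) the rank of $\Lambda^{p}T^{*}M$, and (ii) the identification and pointwise trace of the Weitzenbock endomorphism $E_{p}$.

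The first ingredient is immediate: $\mathrm{rk}(\Lambda^{p}T^{*}M) = \binom{m}{p} = C_{m}^{p}$, which yields the formula for $a_{0}(\Delta_{p}^{m})$. For the second, I would write the classical Weitzenbock identity $\Delta_{p}^{m} = \nabla^{*}\nabla + E_{p}$, where $E_{p}$ is a pointwise endomorphism of $\Lambda^{p}T^{*}M$ built from the Riemann tensor via combinations of single terms $\varepsilon(e^{i})\iota(e^{j})$ (producing a Ricci contribution) and double terms $\varepsilon(e^{i})\varepsilon(e^{k})\iota(e^{j})\iota(e^{l})$ (producing a full-curvature contribution). Taking the trace of $E_{p}$ over $\Lambda^{p}T^{*}M$ and combining with the universal $\tfrac{R}{6}\mathrm{Id}$ term is what must produce the stated combination $\tfrac{1}{6}\bigl(C_{m-2}^{p-2}+C_{m-2}^{p}-4C_{m-2}^{p-1}\bigr)(-R_{ijij})$.

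The main obstacle is the combinatorial bookkeeping of signs and binomial factors inside $\mathrm{tr}_{\Lambda^{p}}(E_{p})$. For each ordered pair of indices $(i,j)$, one must count the number of strictly increasing multi-indices $I$ of length $p$ that are preserved by the corresponding Clifford-like products. The three binomial terms $C_{m-2}^{p-2}$, $C_{m-2}^{p}$, and $C_{m-2}^{p-1}$ arise as the counts of pairs $\{i,j\}$ respectively both lying in $I$, both lying outside $I$, or meeting $I$ in exactly one index. Getting the signs right under the convention $R_{1212}=-1$ on the unit $S^{2}$ is the delicate point, and once these three counts have been correctly assembled the collapse $R_{ikik}=-R^{g}$ delivers the claimed expression.

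As a cross-check I would evaluate both sides on standard test spaces: for $p=0$ the formula must reduce to the classical scalar heat coefficient $a_{2} = (4\pi)^{-m/2}R/6$, for $p=m$ it must give $(4\pi)^{-m/2}R/6$ again by Hodge duality, and on a flat torus all expressions must vanish identically. These three checks pin down the three binomial coefficients $C_{m-2}^{p-2},C_{m-2}^{p},-4C_{m-2}^{p-1}$ uniquely and independently verify the combinatorial step, giving confidence that the bookkeeping has been performed correctly.
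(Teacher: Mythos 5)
Your overall route is the right one, and indeed the only route available for comparison: the paper does not prove this statement at all, it simply imports it from Gilkey \cite{PBG} (Theorem 4.8.18), and Gilkey's own proof runs exactly as you outline --- the universal Seeley--DeWitt formulas $a_{0}=(4\pi)^{-m/2}\mathrm{rk}(V)$ and $a_{2}=(4\pi)^{-m/2}\mathrm{tr}\bigl(\tfrac{R}{6}\mathrm{Id}-E\bigr)$ for a Laplace-type operator, combined with the Weitzenbock decomposition $\Delta_{p}=\nabla^{*}\nabla+E_{p}$ and a count of how the curvature endomorphism traces out on $\Lambda^{p}T^{*}M$. The $a_{0}$ part of your argument is complete.

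The gap is in how you propose to certify the $a_{2}$ part. The entire nontrivial content of the formula is the coefficient $-4$ in front of $C_{m-2}^{p-1}$, which records the relative weight of the Ricci (single $\varepsilon\iota$) term of $E_{p}$ against the universal $\tfrac{R}{6}$ term; you never actually compute $\mathrm{tr}_{\Lambda^{p}}(E_{p})$, and the three cross-checks you offer cannot stand in for that computation. For $p=0$ and $p=m$ the binomial $C_{m-2}^{p-1}$ vanishes identically, so those checks only fix the coefficients of $C_{m-2}^{p}$ and $C_{m-2}^{p-2}$ respectively, and the flat-torus check is vacuous since every candidate expression is curvature-linear and vanishes there regardless of the coefficients. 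Hence your claim that these checks ``pin down the three binomial coefficients uniquely'' is false, and the $-4$ is left unestablished. To close the gap you must either carry out the bookkeeping you describe --- evaluating $\mathrm{tr}_{\Lambda^{p}}\bigl(\varepsilon(e^{i})\iota(e^{j})\bigr)$ and $\mathrm{tr}_{\Lambda^{p}}\bigl(\varepsilon(e^{i})\iota(e^{j})\varepsilon(e^{k})\iota(e^{l})\bigr)$ and contracting against Ricci and Riemann, with the sign convention $R_{1212}=-1$ kept explicit --- or replace one of your checks by one that is sensitive to intermediate $p$, for example the McKean--Singer alternating sum $\sum_{p}(-1)^{p}a_{2}(\Delta_{p})$ in dimension $m=2$, which must reproduce the Gauss--Bonnet density and does force the coefficient to be $-4$, or a direct comparison with the known $1$-form coefficient $(m-6)$ on the round sphere.
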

Then for $m\geq 4$ the coefficients are
\begin{eqnarray}
&&a_{2}(\Delta_{0}^{m})=\frac{(4\pi)^{-\frac{m}{2}}}{6}(-R_{ijij}); \\
&&a_{2}(\Delta_{1}^{m})=\frac{(4\pi)^{-\frac{m}{2}}}{6}(6-m)(-R_{ijij});\\
&&a_{2}(\Delta_{2}^{m})=\frac{(4\pi)^{-\frac{m}{2}}}{12}(-m^{2}+13m-24)R_{ijij};\\
&&a_{2}(\Delta_{3}^{m})=\frac{(4\pi)^{-\frac{m}{2}}}{36}(108-92m+21m^{2}-m^{3})R_{ijij};\\
&&a_{2}(\Delta_{4}^{m})=\frac{(4\pi)^{-\frac{m}{2}}}{144}(-576+630m-227m^{2}+30m^{3}-m^{4})R_{ijij}.
\end{eqnarray}
Let $k=4$ (the calculation of other  case similar), combining these results, we obtain the main theorem of this section.
\begin{thm}
For 4-dimension  compact K$\ddot{a}$hler manifolds without boundary and the associated nonminimal operator
$\mathfrak{D}=a^{2} \overline{\partial\partial^{*}}+b^{2}\overline{\partial^{*}\partial}$ on $C^{\infty}(\Lambda^{4})$, then
\begin{eqnarray}
Wres(\mathfrak{D}^{-1})&=&\frac{(m-2)(2\pi)^{4}}{\Gamma (\frac{m-k}{2})}\Big\{\Big[\Big(\frac{a^{2}}{2}\Big)^{2-\frac{m}{2} }-
\Big(\frac{b^{2}}{2}\Big)^{2-\frac{m}{2} }\Big]\frac{(4\pi)^{-\frac{m}{2}}}{36}(222-167m+24m^{2}-m^{3})R_{ijij}\nonumber\\
&&+\Big(\frac{b^{2}}{2}\Big)^{2-\frac{m}{2}}\frac{(4\pi)^{-\frac{m}{2}}}{144}(-576+630m-227m^{2}+30m^{3}-m^{4})R_{ijij}\Big\}|_{m=4}\nonumber\\
&=&-\frac{\pi^{2}}{3}R_{ijij},
\end{eqnarray}
  where  $R_{ijkl}$ denote the curvature tensor with the sign convention that $R_{1212}=-1$ on the sphere of radius 1 in $R^{3}$.
\end{thm}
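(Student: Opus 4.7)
My plan is to chain together three facts already available in the paper: (i) the generalized Kastler-Kalau-Walze bridge expressing the Wodzicki residue of $\mathfrak{D}^{-1}$ in terms of the Seeley-deWitt coefficient $a_2(\mathfrak{D})$ on a closed $m$-manifold (which for an order-$k=2$ positive elliptic operator takes the form $\mathrm{Wres}(\mathfrak{D}^{-1})=\frac{(m-2)(2\pi)^m}{\Gamma((m-k)/2)}\,a_2(\mathfrak{D})$); (ii) Theorem 7.3, which writes $a_2[\mathfrak{D}|_{C^\infty(\Lambda^k)}]$ as an alternating combination of the Laplacian coefficients $a_2(\Delta_p^m)$ for $p\le k$; and (iii) Theorem 7.5, which evaluates each $a_2(\Delta_p^m)$ as an explicit polynomial in $m$ times $R_{ijij}$. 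The structural strategy is purely algebraic: apply (ii) and (iii) to package $a_2(\mathfrak{D})$ into the form displayed in the braces, then multiply by the prefactor from (i) and evaluate at $m=4$.

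Concretely, I would first specialize Theorem 7.3 to $k=4$, writing out the three alternating sums over $p=0,\dots,2$; $p=0,\dots,3$; and $p=0,\dots,4$, each multiplied by the appropriate power of $a^2/2$ or $b^2/2$. Then I substitute the expressions (7.16)-(7.20) for $a_2(\Delta_p^4)$, being careful about the sign convention ($-R_{ijij}$ appears in (7.16)-(7.17) whereas $+R_{ijij}$ appears in (7.18)-(7.20)) and the common factor $(4\pi)^{-m/2}$. Collecting the coefficient of $(a^2/2)^{2-m/2}$ requires summing two polynomial expressions in $m$ (from the first and middle sums of Theorem 7.3), and similarly for the coefficient of $(b^2/2)^{2-m/2}$ (from the middle and last sums). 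The sums should telescope into the polynomial $\frac{1}{36}(222-167m+24m^2-m^3)$ and $\frac{1}{144}(-576+630m-227m^2+30m^3-m^4)$ claimed in the displayed formula.

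Finally, I specialize to $m=4$. Two simplifications occur simultaneously: the exponent $2-m/2$ becomes $0$, so $(a^2/2)^0=(b^2/2)^0=1$, which forces the bracket $[(a^2/2)^{2-m/2}-(b^2/2)^{2-m/2}]$ to vanish; and $\Gamma((m-k)/2)=\Gamma(1)=1$. Only the $b^2$-term survives, and the remaining polynomial evaluated at $m=4$ yields $-576+2520-3632+1920-256=-24$. Folding in the prefactor $(m-2)(2\pi)^4 = 2\cdot(2\pi)^4$ and the factor $(4\pi)^{-2}/144$ from (7.20) gives the claimed $-\frac{\pi^2}{3}R_{ijij}$.

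The main obstacle is the combinatorial bookkeeping in the second step: the three sums of Theorem 7.3 differ in range and in attached $a,b$-weights, the Laplacian formulas in Theorem 7.5 carry a subtle sign inconsistency between degree $\le 1$ and degree $\ge 2$, and one must track $m$-dependent polynomials through several cancellations before the universal $b^2/a^2$ factors drop out at $m=4$. Apart from this careful bookkeeping, every ingredient is cited from the results already listed in the section, so no new analytic input is needed; the proof is essentially a patient verification that the algebra closes up to the stated constant.
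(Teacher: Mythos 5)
Your proposal follows the paper's own route exactly: specialize the quoted Alexandrov--Vassilevich formula for $a_{2}[\mathfrak{D}|_{C^{\infty}(\Lambda^{k})}]$ to $k=4$, insert Gilkey's values of $a_{2}(\Delta_{p}^{m})$, apply the Wres--heat-coefficient bridge with prefactor $(m-2)(2\pi)^{4}/\Gamma(1)$ (the only sensible reading of the $\Gamma\big(\frac{m-k}{2}\big)$ factor), and evaluate at $m=4$, where $\big(\frac{a^{2}}{2}\big)^{0}-\big(\frac{b^{2}}{2}\big)^{0}=0$ leaves only the $a_{2}(\Delta_{4})$ term, and your arithmetic ($-576+2520-3632+1920-256=-24$, hence $-\frac{\pi^{2}}{3}R_{ijij}$) is correct. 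One immaterial caveat: collecting the coefficient of $\big(\frac{a^{2}}{2}\big)^{2-\frac{m}{2}}$ directly gives $\frac{1}{36}(150-125m+24m^{2}-m^{3})$ rather than the paper's $\frac{1}{36}(222-167m+24m^{2}-m^{3})$, but since that bracket is multiplied by a factor vanishing identically at $m=4$, the discrepancy affects neither your argument nor the stated result.
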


\section*{ Acknowledgements}
This work was supported by Fok Ying Tong Education Foundation under Grant No. 121003 and NSFC. 11271062. And Jian Wang's Email
address: wangj068@gmail.com. The author also thank the referee for his (or her) careful reading and helpful comments.

\section*{References}


\begin{thebibliography}{00}
\bibitem{Co1} A. Connes.: Quantized calculus and applications.  XIth International Congress of Mathematical Physics(Paris,1994),
 Internat Press, Cambridge, MA, 15-36, (1995).
\bibitem{Co3} A. Connes, J. Lott.: Particle models and Non-commutative geometry. Nucl. Phys. B Proc.Supp 18B, 29-47, (1990).
\bibitem{FGV}H. Figueroa, J. Gracia-Bond$\acute{i}$a, and J. V$\acute{a}$rilly.:Elements of Noncommutative Geometry.
Birkh$\ddot{a}$user Boston, (2001).
\bibitem{Gu} V. W. Guillemin.: A new proof of Weyl's formula on the asymptotic distribution of eigenvalues. Adv. Math. 55, no. 2, 131-160, (1985).
\bibitem{Wo} M.Wodzicki.: local invariants of spectral asymmetry. Invent. Math. 75(1), 143-178, (1995).
\bibitem{Co2} A. Connes.: The action functinal in Noncommutative geometry. Comm. Math. Phys. 117, 673-683, (1998).
Rev. Math. Phys.5, 477-532, (1993).
\bibitem{Ka} D. Kastler.: The Dirac Operator and Gravitation. Comm. Math. Phys. 166, 633-643, (1995).
\bibitem{KW} W. Kalau and M. Walze.: Gravity, Noncommutative geometry and the Wodzicki residue. J. Geom. Physics. 16, 327-344,(1995).
\bibitem{Ac} T. Ackermann.: A note on the Wodzicki residue. J. Geom. Phys. 20, 404-406, (1996).
\bibitem{Wa3} Y. Wang.: Gravity and the Noncommutative Residue for Manifolds with Boundary. Letters in Mathematical Physics. 80, 37-56, (2007).
\bibitem{Wa4} Y. Wang.: Lower-Dimensional Volumes and Kastler-kalau-Walze Type Theorem for Manifolds with Boundary .
      Commun. Theor. Phys. Vol 54, 38-42, (2010).
\bibitem{WW1} J. Wang and  Y. Wang.: Noncommutative residue and sub-Dirac operators for foliations, J. Math. Phys.  54, 012501 (2013).
\bibitem{AT} T. Ackermann and J. Tolksdorf.:  A generalized Lichnerowicz formula, the Wodzicki residue and gravity.
J. Geom. Physics. 19, 143-150,(1996).
\bibitem{PS} F. Pf$\ddot{a}$ffle and C. A. Stephan.: On gravity, torsion and the spectral action principle.
      J. Funct. Anal. 262, 1529-1565,(2012).
\bibitem{PS1} F. Pf$\ddot{a}$ffle and C. A. Stephan.: Chiral Asymmetry and the Spectral Action.
  To appear Commun. Math. Phys. DOI:10.1007/s00220-012-1641-6, (2013).
 \bibitem{WW} J. Wang and  Y. Wang.:  Nonminimal operators and non-commutative residue, J. Math. Phys. 53, 072503 (2012).
 \bibitem{GHV} W. Greub, S. Halperin and R.Vanstone.: Connections, Curvature and Cohomology, Vo.1 (Academic press, New York) (1976).
\bibitem{Wa1} Y. Wang.: Diffential forms and the Wodzicki residue for Manifolds with Boundary. J. Geom. Physics. 56, 731-753, (2006).
\bibitem{FGLS} B. V. Fedosov, F. Golse, E. Leichtnam, E. Schrohe.: The noncommutative residue for manifolds with boundary.
J. Funct. Anal. 142, 1-31, (1996).
\bibitem{Y} Y. Yu.: The Index Theorem and The Heat Equation Method, Nankai Tracts in Mathematics-Vol.2, World Scientific Publishing, (2001).
\bibitem{GS} G. Grubb, E. Schrohe.: Trace expansions and the noncommutative residue for manifolds with boundary.
J. Reine Angew. Math. 536, 167-207, (2001).
\bibitem{JMB} J. M. Bismut.: A  local  index  theorem  for  non K$\ddot{a}$hler  manifolds. Math. Ann. 284, 681-699 (1989).
\bibitem{SADV}S. Alexandrov and D. Vassilevich.: Heat kernel for nonminimal operators on a K$\ddot{a}$hler manifold.
 J. Math. Phys. 37, 5715-5718, (1996).
 \bibitem{PBG}P. B. Gilkey.: Invariance theory, the Heat equation, and the Atiyah-Singer Index theorem. Inc., USA, (1984).
\end{thebibliography}
\end{document}